\newcommand{\Cy}{\mathcal{C}}
\newcommand{\E}{\mathcal{E}}
\newcommand{\F}{\mathcal{F}}
\newcommand{\G}{\mathcal{G}}
\newcommand{\T}{\mathcal{T}}
\newcommand{\PP}{\mathcal{P}}
\newcommand{\I}{\mathcal{I}}
\def\R{\mathcal{R}}
\newcommand{\V}{\mathcal{V}}
\newcommand{\x}{\>x}
\newcommand{\h}{\>h}
\newcommand{\J}{\>J}
\newcommand{\vs}{{v^\star}}
\newcommand{\s} {{\>s}}
\newcommand{\bm}{\breve m}
\newcommand{\bchi}{\breve \chi}
\newcommand{\bh}{\breve h}
\newcommand{\hm}{\hat m}
\newcommand{\hchi}{\hat \chi}
\newcommand{\Yi}{\Theta}
\newcommand{\bYi}{\breve \Theta}
\newcommand{\Yp}{{\cal E}}
\def\blue{\color[rgb]{0,0,1}}
\def\red{\color[rgb]{1,0,0}}
\newtheorem{prop}{Proposition}[section]
\newtheorem{coro}[prop]{Corollaire}
\newtheorem{lem}[prop]{Lemma}
\newcommand\egaldef{\stackrel{\mbox{\upshape\tiny def}}{=}}
\newcommand\1{\leavevmode\hbox{\rm \small1\kern-0.35em\normalsize1}}
\def\DD{\displaystyle}
\DeclareMathOperator*{\argmin}{argmin}
\DeclareMathOperator*{\argmax}{argmax}
\DeclareMathOperator{\Tr}{\text{Tr}}
\newcommand\dkl{\text{D}_\text{KL}}
\title{Cycle-based Cluster Variational Method for Direct and Inverse Inference.}
\author{Cyril Furtlehner\thanks{Inria Saclay - LRI, Tao project team, B\^at 660 Universit\'e Paris Sud, Orsay Cedex 91405}\hspace{0.2cm} and 
Aur\'elien Decelle\thanks{LRI, AO team, B\^at 660 Universit\'e Paris Sud, Orsay Cedex 91405}%
}
\begin{document}

\maketitle

\begin{abstract}
We elaborate on the idea that loop corrections to belief propagation 
could be dealt with in a systematic way on 
pairwise Markov random fields,
by using the elements of a cycle basis to define region in a generalized belief propagation setting.
The region graph is specified in such a way as to avoid dual loops as much as
possible, by discarding redundant Lagrange multipliers, 
in order to facilitate the convergence, while avoiding instabilities 
associated to minimal factor graph construction.
We end up with a two-level algorithm, where a belief propagation algorithm is run 
alternatively at the level of each cycle and at the inter-region level.
The inverse problem of finding the couplings of a Markov random field from empirical covariances
can be addressed region wise. It turns out 
that this can be done efficiently 
in particular in the Ising context, where fixed point equations can be derived along with 
a one-parameter log likelihood function to minimize. Numerical experiments confirm the effectiveness of
these considerations both for the direct and inverse MRF inference. 
\end{abstract}

\section{Introduction}
Markov random fields~\cite{Lauritzen} (MRF) are widely used probabilistic models, able to 
represent multivariate structured data in order to perform inference tasks. 
They are at the confluence of probability, statistical physics and machine learning~\cite{WaJo}.
From the formal probabilistic viewpoint they express the conditional independence properties
of a collection of $n$ random variables $\x=\{x_1,\ldots,x_n\}$, in the form of a factorized probability measure, where each factor 
involves a subset of $\x$. In statistical mechanics the Gibbs measure 
takes the form of an MRF, to express
the thermodynamic equilibrium probability of a system of $n$ degrees of freedom in interactions.
The practical use of MRF appears also in various applied fields,  like image processing, bioinformatics,
spatial statistics or information and coding  theory. Recent breathtaking successes  in artificial
intelligence have been obtained by learning deep neural networks which building blocks are
so-called restricted Boltzmann machine i.e. bipartite networks of Ising spins in interaction. 
By stacking them into deep architectures some high level features can be learned recursively~\cite{LeBeHi} 
using schematically Monte-Carlo based learning algorithms in combination with Bragg-Williams mean-field method 
within a Gibbs-sampling loop. The use of more advanced mean-field methods like the 
cavity approach could be possibly helpful in this context~\cite{GaTrKr}. The main difficulty resides in the fact that these MRF are being of 
practical use in a domain of parameters which clearly corresponds to an ordered phase with strong couplings, which is usually not
the most favorable one for applying mean-field methods.   
Letting aside this potential difficulty, let us simply state the 
two main generic problems that have to be commonly dealt with when using MRF in practical applications:
\paragraph{\it Direct inference problems:} 
\begin{itemize}
\item computation of marginal probabilities
\[
p_i(x_i) = \sum_{\x\backslash x_i} P(x_i),
\]
which involves in general an exponential cost w.r.t. $N$ to be done exactly;
\item computing the mode, also referred to as the maximum a posteriori probability (MAP)
\[
\x^\star = \argmax_{\x} P(\x),
\]
which is generally an NP hard problem~\cite{Cooper,Shimony}.
\end{itemize}
\paragraph{\it Inverse problem:} 
learning the parameters of the model, given for example by sufficient statistics when 
the MRF is in the exponential family. For instance the inverse Ising problem~\cite{LeGaKo,HoTi,KaRo,WeTe,MoMe,YaTa,CoMo,NgBe}
consists in to find the set of couplings $\{J_{ij}\}$ 
and external fields $\{h_i\}$  of an Ising model
\[
P(\s) = \frac{1}{Z(\h,\J)}\exp\Bigl(\sum_{i,j}J_{ij}s_is_j+\sum_i h_i s_i\Bigr),
\]
which maximize the associated log likelihood (LL),
given data in form of sequences $\s^{(k)}, k=1\ldots M$ or of empirical marginals $\hat E(s_i)$, $\hat E(s_is_j)$.
Generally the partition function $Z(\h,\J)$ requires an exponential cost w.r.t. $N$ to be computed exactly.

In order to be useful, any approach based on MRF modeling relies therefore strongly on efficient approximate algorithms, 
since both direct and inverse problems have potentially an exponential cost w.r.t. to system size.
Belief propagation (BP)
and its generalizations GBP~\cite{YeFrWe} have opened the possibility for using MRF in 
large scale problems even though many restrictions stand in the way of a systematic use, either from 
convergence problems or from precision performances.  
In particular, the use of GBP is hampered by notoriously difficult 
convergence problems, which have led some authors~\cite{Yuille,HeAlKa} to consider double loop algorithms, at the price of 
some computational costs~\cite{pelizzola}.
In addition the choice to be made for 
region definition is rather open in general, except that a bad choice may lead to poor precision and lack of convergence~\cite{Welling2004},
and too large regions are excluded, as computational cost grows exponentially w.r.t. the size of the largest regions. 
For regular graphs, regions are straightforwardly 
identified for example with square plaquettes or cells of $2-$D and $3-$D  lattices, as in the Kikuchi 
cluster variational methods~\cite{Kikuchi,pelizzola} (CVM). But for general graphs, a systematic choice 
of regions is more difficult to define and also some complexity problem may 
occur if the size of regions is not controlled. 
As suggested in~\cite{WelMinTeh2005} a good choice for the regions to run GBP
might be provided by a cycle basis and possibly a weakly fundamental~\cite{GeWe} cycle basis.
An alternative line of research which has been also followed over recent years consists in to estimate
loop corrections to Bethe-Peierls approximation in order to improve its accuracy, by addressing directly the errors
caused by the presence of loops on multiply connected factor graphs~\cite{MoRi,chertkov1,PaSla,MoKa07,XiZh,Ramezanpour,Fu2013}. 
In the present work, we investigate further along these directions
by generalizing in some way previous considerations ~\cite{LaMuRiRi,Fu2013} concerning the random Ising model in absence of local fields. 
Firstly we analyze in this context convergence problems emerging from canonical definitions of the region graph. This leads us to propose 
a specific construction of the factor graph, which to some extent solves the convergence issue, as is observed experimentally. 
Secondly, we exploit a property of the minimizer of  Kikuchi free energy functional associated to certain cycles basis,
such that the message to be send from one region to another can be computed efficiently with help 
of an internal BP routine to be performed within each (cycle) region, allowing for arbitrary loop sizes to be considered. 
For binary variables in particular, it is worth exploiting the fact that BP has one single fixed point on 
a circle~\cite{Weiss}, and that the loop correction can be computed explicitly on 
this geometry. These considerations apply as well to the inverse problem, which consists in to learn the model.
We show that the aforementioned property of the Kikuchi free energy minimizer can as well be exploited, for the inverse Ising 
problem in particular, in order to learn efficiently the parameters of the model.

The paper is organized as follows: in Section~\ref{sec:CVM} we give a brief introduction on CVM and related GBP 
algorithms. In Section~\ref{sec:gcbp} we specify GBP and the Kikuchi approximation associated to a cycle basis for region definition, 
analyze the Lagrange multiplier structure and propose a mixed region graph, 
which discards all unnecessary constraints. 
Section~\ref{sec:map} details how this framework adapt to the maximum a posteriori probability estimation (MAP) context. 
The problem of choosing a relevant cycle basis is discussed in Section~\ref{sec:cbasis}.
Then Section~\ref{sec:clmsg} is devoted to an efficient computation of messages exchange 
between cycle and links regions which completes 
our generalized cycle based belief propagation (GCBP) formulation for direct inference. Some properties of the 
free energy functional are also discussed at the end of this section.
In Section~\ref{sec:KIC} we reverse the equations of Section~\ref{sec:clmsg} to address the inverse Ising
problem. Finally some numerical tests are presented in Section~\ref{sec:exp} both for the direct and inverse 
inference problem.

\section{Cluster variational method and generalized BP}\label{sec:CVM}
In this Section we give all the necessary material concerning the relation between BP, generalized BP 
and mean-field approximations in statistical physics. Further details and references can be found e.g. in~\cite{pelizzola}.  
\subsection{Belief propagation and the Bethe approximation}
As far as large scale inference is concerned, the Pearl's belief propagation~\cite{Pearl} and related algorithms 
constitute  central tools in  MRF-based inference approaches. The BP algorithm is an iterative algorithm designed 
to solve a set of fixed point equations. Given an MRF, namely a joint  
distribution over a set $\x=\{x_1,x_2\ldots,x_N\}$ of variables endowed with a factorized form 
\[
p(\x) = \prod_{a\in\F}\psi_a(\x_a)\prod_{i\in\V}\phi_i(x_i)
\]
with $\x_a=\{x_i, i\in a\}$, $a\in\F$ a set of factors,
the marginal probabilities associated to each variable and each factor are search in the form
\begin{align*}
\DD b(x_i) &= \frac{1}{Z_i} \phi_i(x_i)\prod_{a \supset i} m_{a\to i}(x_i),\\[0.2cm]
\DD b(\x_a) &= \frac{1}{Z_a}\psi_a(\x_a)\prod_{i\subset a} n_{i\to a}(x_i),
\end{align*}
where the messages $m_{a\to i}$ and $n_{i\to a}$ relating factor to variables and variables to factors 
satisfy the following set of self-consistent equations
\begin{align}
m_{a\to i}(x_i) &= \sum_{\x_a\backslash x_i}\psi_a(\x_a)\prod_{j\in a\backslash i}n_{j\to a}(x_j),\label{eq:bp1}\\[0.2cm]
n_{j\to a}(x_j) &= \phi_j(x_j)\prod_{b\ni j\backslash a}m_{b\to j}(x_j)\label{eq:bp2}.
\end{align}
\begin{figure}
\begin{center}
\resizebox*{!}{3cm}{\input{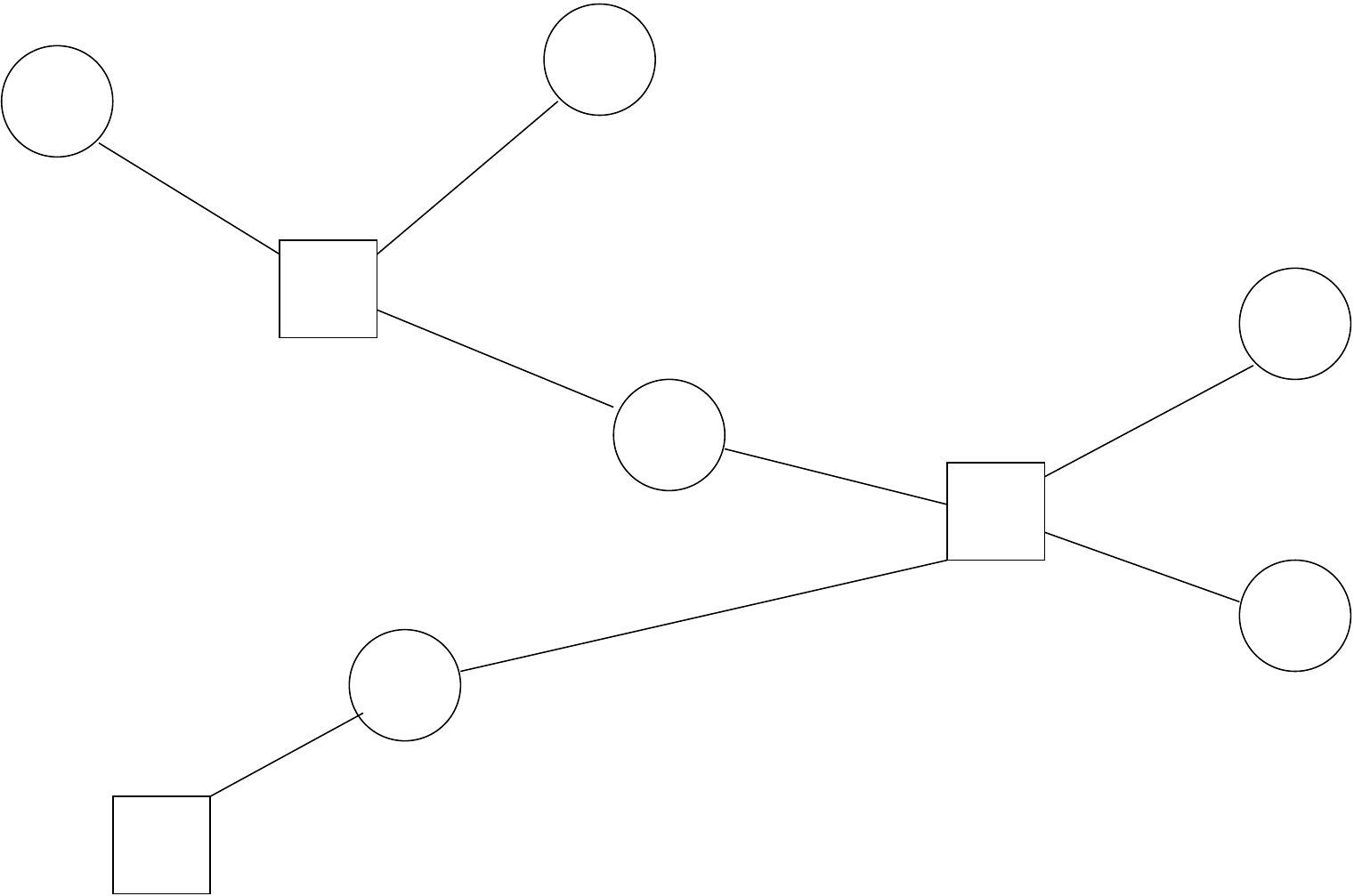_t}}
\end{center}
\caption{Factor graph and message propagation.}\label{fig:fg}
\end{figure}
This algorithm as sketched on Figure~\ref{fig:fg} is exact on a tree, but only approximate on multiply connected factor graphs. When it converges,
it does it empirically in  $O(N\log(N))$ steps on a sparse random graphs, yielding often rather good approximate
marginals.

In~\cite{YeFrWe} was first established the connection between the BP algorithm
of Pearl with a standard mean-field  method - the Bethe approximation~\cite{Bethe} -
used in statistical physics. As is well known in statistical physics, the Gibbs distribution 
associated to the energy function $E(\x)$ and inverse temperature $\beta$, is obtained as a minimizer of the 
free energy functional of a trial distribution $b(\x)$
\begin{align*}
\beta{\cal F}[b]&= \beta E[b]-S[b] = \beta\sum_{\x} b(\x)E(\x)+ \sum_{\x} b(\x)\log\bigl(b(\x)\bigr)\\[0.2cm] 
&= -\log(Z_\text{\tiny Gibbs}) + \sum_{\x} b(\x)\log\frac{b(\x)}{p_\text{\tiny Gibbs}(\x)}\\[0.2cm]
&= -\log(Z_\text{\tiny Gibbs}) + D_\text{\tiny KL}\bigl(b\Vert p_\text{\tiny Gibbs}\bigr)
\end{align*}
as is explicitly seen in the last equality from the non-negativity property of the Kullback Leibler divergence $D_\text{\tiny KL}$.
The mean energy term $E[b]$ can be expressed exactly in terms of marginal distribution obtained from $b$, like e.g. single and pairwise marginals
if $E(\x)$ decomposes over pairwise terms. Instead the entropy term $S[b]$ is in general intractable and mean field methods in 
statistical physics generally correspond to different ways to approximate this term. The Bethe approximation for instance 
corresponds to  
\begin{align*}
S[b]  \approx S_\text{\tiny Bethe} &\egaldef -\sum_i b_i(x_i)\log\bigl(b_i(x_i)\bigr)-
\sum_a b_a(\x_a)\log \frac{b_a(\x_a)}{\prod_{i\in a} b_i(x_i)}\\[0.2cm]
&= \sum_i S_i + \sum_a \Delta S_a,
\end{align*}
i.e. as a sum of individual entropy of each variables, corrected by mutual information among group of variables indexed by $a$. 
The connection with BP is precisely that a BP fixed point of~(\ref{eq:bp1},\ref{eq:bp2}) corresponds to a stationary point of  
the approximate Bethe free energy complemented with compatibility constraints among marginal probabilities
\[
\beta{\cal F}_\text{\tiny Bethe}[b] = \beta E[b]-S_\text{\tiny Bethe}[b] +
\sum_{a\in\F,i\in a\atop x_i}\lambda_{ai}(x_i)\bigl(b_i(x_i)-\sum_{\x_a\backslash x_i}b_a(\x_a)\bigr) 
\]
with help of Lagrange multipliers $\lambda_{ai}(x_i)$. The BP algorithm actually corresponds to performing 
the dual optimization with log messages in~(\ref{eq:bp1},\ref{eq:bp2}) corresponding to an invertible linear transformation 
of the Lagrange multipliers, 
\begin{align}
\lambda_{ai}(x_i) &= \log\bigl(n_{i\to a}(x_i)\bigr),\label{eq:Lnmap} \\[0.2cm]
\log\bigl(m_{a\to i}(x_i)\bigr) &= \frac{1}{d_i-1}\sum_{b\ni i}\lambda_{bi}(x_i) - \lambda_{ai}(x_i),\label{eq:mLmap}
\end{align}
with $d_i$ the number of factors containing $i$.
Moreover, as shown in~\cite{Heskes4} a stable fixed point corresponds to a local minimum of the free energy functional. 

\subsection{Kikuchi approximation and associated message passing algorithms}
In fact as observed in~\cite{Kikuchi,Morita90}, the Bethe approximation is only the first stage
of a systematic entropy cumulant expansion over a poset $\{\alpha\}$ of clusters
\[
S = \sum_\alpha \Delta S_{\alpha},
\]
where $\Delta S_{\alpha}$ is the entropy correction delivered by the cluster $\alpha$ w.r.t. the entropy 
of all its subclusters.
The decomposition is actually valid at the level of each cluster, 
such that with help of some M\"obius inversion formula, the corrections  
\[
\Delta S_\beta 
= \sum_{\alpha\subseteq\beta}  \mu(\alpha,\beta)\ S_\alpha.
\]
and subsequently the full entropy  can be expressed as a weighted sum 
\[
S = \sum_\alpha \kappa_\alpha S_\alpha
\]
of individual cluster entropy 
\[
S_\alpha = -\sum_{\x_\alpha}b_\alpha(\x_\alpha)\log b_\alpha (\x_\alpha),
\]
with $\kappa_\alpha\in{\mathbb Z}$ a set of counting number.  
For example on the $2$D square lattice, the  Kikuchi approximation amounts to retain as cluster the set of nodes $v\in\V$, of links  
$\ell\in\E$ and of square plaquettes $c\in\Cy$ such that on a periodic lattice the corresponding approximate entropy reads 
\[
S = \sum_{c} S_c - \sum_{\ell} S_\ell +\sum_v S_v.
\]
In the CVM, the choice of constraints maybe arbitrary, as long as the clusters hierarchy is  
closed under intersection.

Once identified, the connection between the Bethe approximation and BP leads Yedidia et al. to 
propose in~\cite{YeFrWe} a generalization to BP as an algorithmic counterpart to CVM.
In fact they introduce a notion of region with relaxed constrained w.r.t. the notion of cluster  
used in CVM. In their formulation, any region $R$ containing a factor $a$ 
should contain all variable nodes attached to $a$ in order to be valid. The approximate free energy functional associated to 
a set of region is given by 
\[
\F(b) = \sum_{R\in\R}\kappa_R \F_R(b_R) +\sum_{R'\subseteq R}\sum_{\x_{R'}}\lambda_{RR'}(\x_{R'})
\bigl(b_{R'}(\x_{R'})-\sum_{\x_R\backslash \x_{R'}}b_R(\x_R)\bigr),
\]
with resp. $b_R(\x_R)$ and $\kappa_R$  resp. the  marginal probability  and counting number 
associated to region $R$. The $\lambda_{RR'}$ are again Lagrange multipliers enforcing the constraints 
among regions beliefs. 
The only constraint for the 
counting numbers is that for any variable $i$ or node $a$
\[
\sum_{R\ni i} \kappa_R= \sum_{R\ni a} \kappa_R = 1.
\]
This insures the exactness of the mean energy contribution $E(b)$ to the free energy in general as well as the entropy 
term for uniform distributions in particular. By comparison, there is no freedom in the CVM on the choice of the 
counting numbers once the set of cluster is given.
Additional desirable constraints on the counting numbers are (i) the maxent-normal constraint and 
a (ii) global unit sum rule for counting numbers, 
\begin{equation}\label{eq:sumrule}
\qquad\sum_{R\in\R} \kappa_R = 1.
\end{equation}
Condition (i) means that the approximate region based entropy reaches its maximum for uniform distribution.
Condition (ii) insures exactness of the entropy estimate for perfectly correlated distributions.
As for belief propagation, a set of compatibility constraints among beliefs are introduced 
with help of Lagrange multipliers and generalized belief propagation again amounts to solve 
the dual problem after a suitable linear transformation of Lagrange multipliers hereby defining the messages.
Once a fixed point is found a reparameterization property of the joint measure holds:
\[
P(\x) \propto \prod_{R\in\R} b_R(\x_R)^{\kappa_R}.
\]
When the region graph has no cycle, this factorization involves the true marginals probabilities 
of each region and is exact.

There is some degree of freedom both in the initial choice of Lagrange multipliers and 
messages leading to different algorithm without changing the free energy and associated 
variational solutions. A canonical choice is to connect regions only to their direct ancestor
or direct child regions leading to the parent-to-child algorithm. There is still
in this choice some redundancy in the constraints, some linear dependencies among those, which
can potentially affect the convergence of the algorithm by adding unnecessary loops in the factor graph.
This problem has been addressed in~\cite{PaAn} where for a given region set a construction for a minimal factor graph is 
proposed. 

\subsection{Main contributions}
GBP is a framework corresponding to a wide class of algorithms, which upon a good choice 
of regions can lead to much accurate results than basic BP. Its systematic use 
is however made delicate by the following unsolved issues as far as large scale inference is concerned: 
\begin{itemize}
\item there is no automatic and efficient procedure of choosing the regions able to scale with large scale problems 
for non-regular factor-graph, despite proposals like the region pursuit algorithm~\cite{Welling2004} which potential use seems
however limited to small size systems. 
\item without special care the computational cost grows exponentially w.r.t. region size.  
\item there are difficult convergence problems associated to GBP which have led to consider double loop algorithms~\cite{Yuille,HeAlKa}
at the price of additional computational burden.
\end{itemize}
Concerning inverse problems, we are not aware of any method in the family of region based approximation of  the log likelihood,
going beyond the Bethe approximation  at the exception  
of the exact method proposed in~\cite{CoMo}, which is however 
limited to small systems size from the practical point of view.

The idea of constructing the region graph from a cycle basis is not new, it is already present as a special case 
of CVM in~\cite{Kikuchi} and was first formally proposed in~\cite{WelMinTeh2005} and refined in~\cite{GeWe}, regarding the
choice for the cycle basis, without however explicitly addressing large scale issues listed above.
Our contributions in this context is to settle a certain number of technical problems regarding this construction in order to address the 
above restrictions such that large scale problems can be treated by means of two algorithms GCBP and KIC respectively for direct and inverse
pairwise MRF inference. More specifically,
\begin{itemize}
\item we address convergence problems by proposing a specific construction of the factor graph in Section~\ref{sec:mfg} based on 
some decomposition of single variable counting numbers unraveled in Section~\ref{sec:ldc};
\item our construction leads to a linear cost w.r.t.  region size i.e. large cycles,  instead  of exponential in general as detailed
in~\ref{sec:clmsg};
\item our region graph construction as discussed in Section~\ref{sec:cbasis} relies on a minimal cycle basis optimization, 
which to some extent and thanks to some approximate algorithm
can scale-up to relatively large size as seen experimentally in Section~\ref{sec:exp};
\item we propose in Section~\ref{sec:KIC} a general inverse pairwise MRF method based on the Kikuchi approximation 
which scales linearly w.r.t. system size, once a
cycle basis is given or properly guessed, again without any limitation in cycle's sizes.
\end{itemize}

\section{Generalized cycle based BP (GCBP)}\label{sec:gcbp}
The first motivation for attaching regions to the elements of a cycle basis originate in the observation 
that the Bethe approximation violates the ``global unit sum rule''~(\ref{eq:sumrule}) for counting numbers,
except on singly connected graphs, precisely by an amount corresponding to the cyclomatic number of the graph.
Completing the regions set with elements of a cycle basis restores the unit sum rule property~\cite{WelMinTeh2005}. 

A different motivation comes from statistical physics considerations associated to the duality transformation~\cite{Savit}
which can be performed with certain restrictions on the models like e.g. the Ising model without external fields.
In such cases, one is naturally led to consider a dual belief propagation on the dual graph which nodes 
correspond to the element of a cycle basis~\cite{Fu2013}. The extension of such consideration to arbitrary pairwise
models led us to consider GBP based on such cycle basis.

\subsection{Cycle based Kikuchi approximation}
To set up notations, we consider a pairwise MRF of $n$ random 
variables valued in some arbitrary subset $\x=\{x_1,\ldots,x_n\}\in\I_1\times\ldots\I_n\subset{\mathbb R}^n$,
specified by some undirected graph $\G=(\V,\E)$, with vertex set $\V=\{1,\ldots,n\}$ and edge set 
$\E\subset\V\times\V$. To simplify we also assume $\G$ to be connected.
The reference distribution considered to be pairwise, is of the form
\begin{equation}\label{eq:refP}
P(\x) = \prod_{\ell\in\E}\psi_\ell^0(\x_\ell)\prod_{v\in\V}\phi_v(x_v).
\end{equation}
By definition a cycle of $\G$ is an unoriented  subgraph where each node has an even degree. The set of cycles is 
a vector space over ${\mathbb Z}_2$ of dimension $|\E|-|\V|+1$ for a graph with one single component which is assumed from now on. This means 
that when two cycles are combined, edges are counted modulo $2$.
Examples of cycle basis are shown on Figure~\ref{fig:cyclebasis}. For heterogeneous graphs, a simple way 
to generate a basis consists in first to select a spanning tree of the graph and to associate a cycle 
to each of the $|\E|-|\V|+1$ remaining links of the graph, by adding to each one  the path
on the spanning tree joining the two ends of the link. This yields by definition a fundamental cycle basis, associated to 
the considered spanning tree.
\begin{figure}
\centerline{\resizebox*{0.9\textwidth}{!}{\input{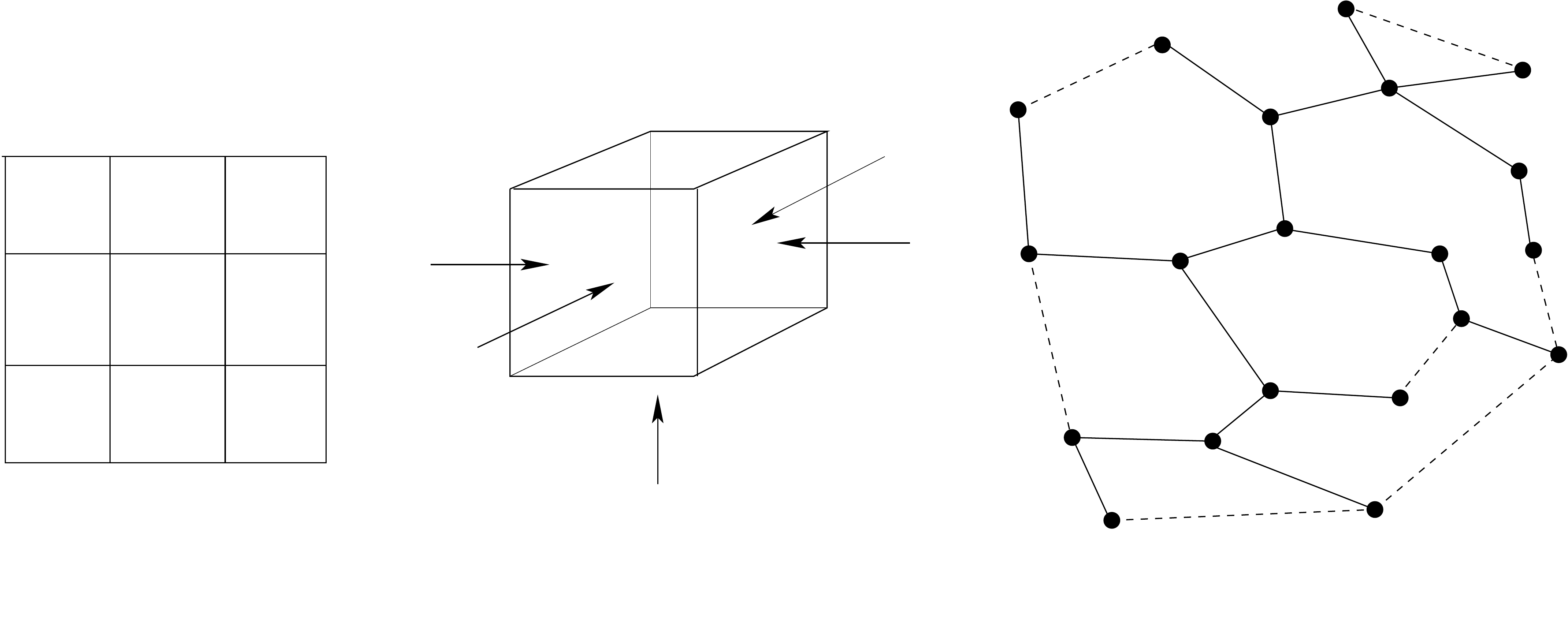_t}}}
\caption{\label{fig:cyclebasis} Example of cycle basis on $2$-D and $3$-D lattices and fundamental cycle basis on an arbitrary graph.}
\end{figure}
Let us assume that a cycle basis of $\G$ is given with cycles indexed by $c\in\Cy =\{1,\ldots |\Cy|\}$. $|\Cy|=|\E|-|\V|+1$ also called 
the cyclomatic number represents the number of independent loops of $\G$.
In the Kikuchi CVM approximation that we consider, the maximal clusters are associated to each element 
of the cycle basis and possibly links which are not contained in any basic cycle.
We assume also that one cycle has at most one edge in common with any other cycle. If this is not the 
case then one edge and one cycle can be added to $\G$ in order to restore this property, for each set of cycles 
having a common group of edges in common (see Figure~\ref{fig:dual_graph}). 
Disconnected intersections can be eliminated by a proper choice of cycle basis.
As explained in Section~\ref{sec:CVM} all mean-field type approximations underlying BP or GBP, consists in assuming a factorized form of the
joint measure in term of some of its marginal distributions.  
Within the CVM and given our choice for the maximal cluster this leads to assuming the following factorization of 
the joint measure:
\begin{equation}\label{eq:gbp}
P_\text{\tiny GBP}(\x) = \prod_{c\in \Cy} p_c(\x_c)\prod_{\ell\in \E} p_\ell(\x_\ell)^{\kappa_\ell}\prod_{v\in\V}p_v(x_v)^{\kappa_v},
\end{equation}
where $p_c$, $p_\ell$ and $p_v$ are marginal probabilities respectively associated to cycles, links and single variables.
As we shall see, and this is an important observation for what follows, the
probability $p_c$ associated to a cycle can be itself expressed as a pairwise MRF, with 
each factor corresponding to one edge of the cycle:
\begin{equation}\label{eq:pcycle}
p_c(x_c) = \prod_{\ell\in c}\varphi_\ell(x_\ell).
\end{equation}
In (\ref{eq:gbp}) the choice of the counting number for respectively cycles, edges and vertices are
$\kappa_c=1$, $\kappa_\ell=1-d^\star_\ell$ and $\kappa_v=1-\sum_{c\ni v}\kappa_c -\sum_{\ell\ni v}\kappa_\ell $. 
$d^\star_\ell$ is the number of cycles in $\Cy$ containing edge $\ell$.
This choice is in accordance to general CVM prescriptions, as being dictated
by the constraint that each degree of freedom is counted exactly once in the Kikuchi free energy.
As already said, thanks to these rules the global unit sum rule for counting numbers is automatically satisfied:
\[
\sum_{c\in\Cy}\kappa_c+\sum_{\ell\in\E} \kappa_\ell+\sum_{v\in\V}\kappa_v = |\Cy|-|\E|+|\V|=1.
\]
A dual bipartite graph $\G^\star = (\V_c^\star,\V_t^\star,\E^\star)$ can be defined, where $\V^\star$ 
indexes the cycle basis, and elements of $\V_t^\star$ represent connected intersection between cycles, i.e. either 
single nodes, links or sub-trees corresponding to bridges connecting distant cycles. Elements of $\E^\star$
connect intersecting elements of $\V_c^\star$ and $\V_t^\star$.
\begin{figure}[ht]
\centerline{\resizebox*{0.7\textwidth}{!}{\input{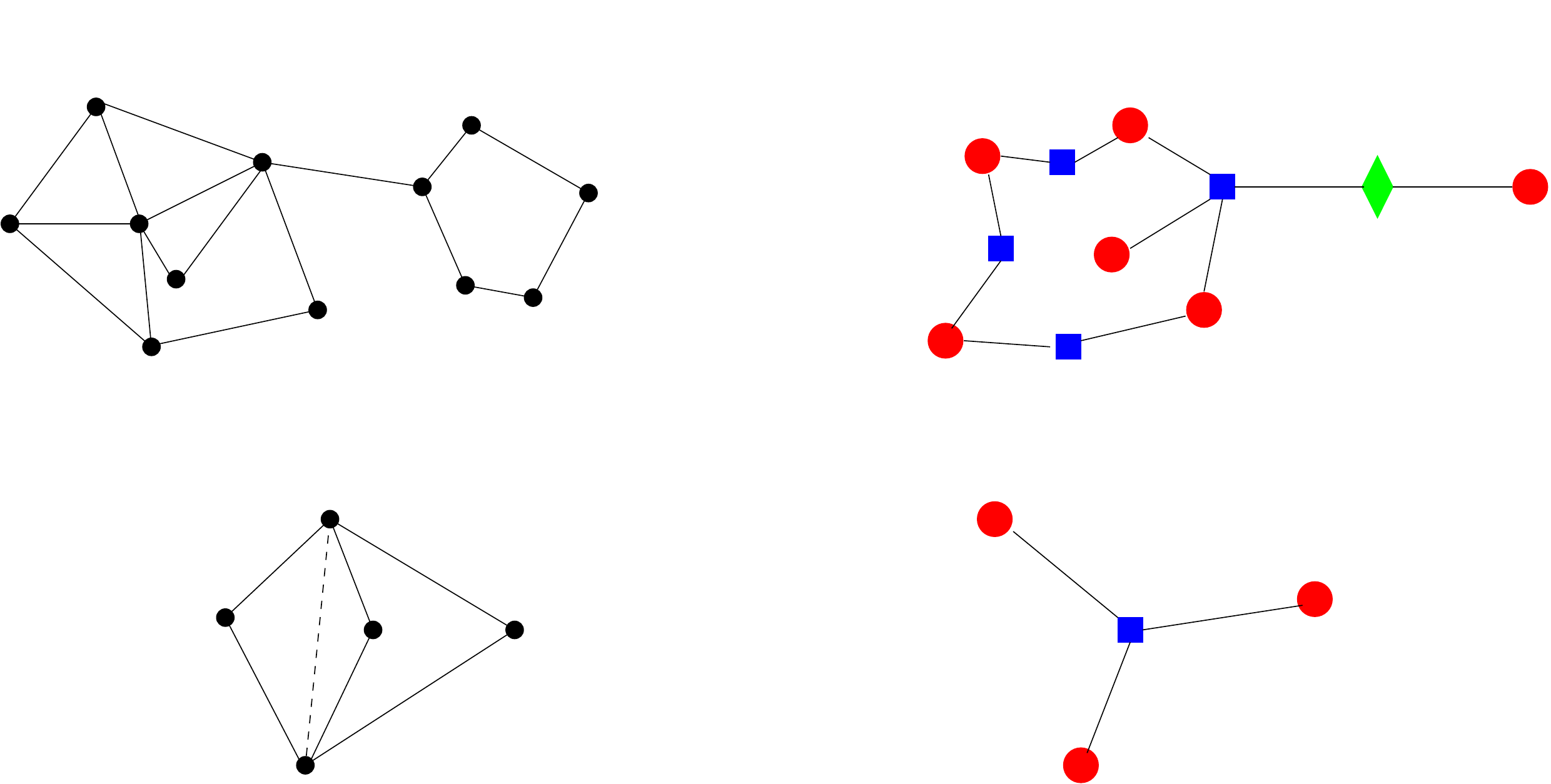_t}}}
\caption{\label{fig:dual_graph} Dual graph construction. Dashed link correspond to one virtual added link.}
\end{figure}
Under this assumption we have the following important property, illustrated on Figure~\ref{fig:polygontree} 
which justifies the approximation (\ref{eq:gbp},\ref{eq:pcycle}).
\begin{prop}\label{prop:dtree}
If $\G^\star$ is acyclic, the factorization~\ref{eq:gbp} is exact.
\end{prop}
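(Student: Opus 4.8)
The plan is to prove the identity
\[
P(\x)=\prod_{c\in\Cy}p_c(\x_c)\ \prod_{\ell\in\E}p_\ell(\x_\ell)^{\kappa_\ell}\ \prod_{v\in\V}p_v(x_v)^{\kappa_v},
\]
where $p_c,p_\ell,p_v$ are the \emph{true} marginals of $P$, by induction on the cyclomatic number $|\Cy|$, peeling a leaf of the tree $\G^\star$ at each step. Two preliminary reductions declutter the bookkeeping: edges and vertices lying in no basic cycle carry $\kappa_\ell=1$ and $\kappa_v=1-d_v$ and are attached to the cyclic part through separating vertices, so the classical exact factorisation of a Markov field on a tree (which is also the base case $|\Cy|=0$, reading $P=\prod_\ell p_\ell\prod_v p_v^{1-d_v}$) lets us strip off all pendant trees and bridges; we may thus assume every edge belongs to some basic cycle, and then the standing hypotheses (connected intersections, at most one common edge between two cycles) force every element of $\V_t^\star$ to be a single vertex or a single edge.

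For the inductive step, pick a leaf of $\G^\star$. Since each $t\in\V_t^\star$ is the intersection of at least two cycles, it has degree $\ge2$ in $\G^\star$; hence the leaf is a cycle node $c^\star$ with a unique neighbour $t^\star\in\V_t^\star$, and every cycle meeting $c^\star$ does so exactly within $t^\star$. The key geometric fact — and the step I expect to be the main obstacle — is that $V(t^\star)$ \emph{separates} the private part $W:=V(c^\star)\setminus V(t^\star)$ from $\V\setminus V(c^\star)$ in $\G$: an edge leaving $W$ would, by the reduction, lie in some basic cycle $c'\ne c^\star$, whose intersection with $c^\star$ would then be an element of $\V_t^\star$ adjacent to $c^\star$ other than $t^\star$ — impossible because $c^\star$ is a leaf. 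Equivalently, acyclicity of $\G^\star$ is precisely the running-intersection property of the cycle cover, so the private vertices of a leaf cycle are shielded by its single attaching element.

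Granting the separation, the global Markov property of the pairwise MRF $P$ gives $\x_W\perp\x_{\V\setminus V(c^\star)}\mid\x_{t^\star}$, whence
\[
P(\x)=\frac{p_{c^\star}(\x_{c^\star})}{p_{t^\star}(\x_{t^\star})}\ p_U(\x_U),\qquad U:=\V\setminus W .
\]
Summing $W$ out of $P$ just collapses the arc $c^\star\setminus t^\star$ into one factor carried by the edge or vertex $t^\star$, which already belongs to $\G_U:=\G[U]$; so $p_U$ is again a pairwise MRF, and $\G_U$ has cyclomatic number $|\Cy|-1$, admits $\Cy\setminus\{c^\star\}$ as a cycle basis, and has dual graph $\G^\star-c^\star$, still acyclic. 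The induction hypothesis thus applies to $p_U$, whose marginals on subsets of $U$ agree with those of $P$.

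It remains to reconcile the counting numbers. For $\ell\in c^\star\setminus t^\star$ one has $d_\ell^\star=1$, so $\kappa_\ell=0$; for $v\in W$ the only incident cycle is $c^\star$ and all incident edges lie in $c^\star$, so $\kappa_v=0$; hence $W$ enters the right-hand side only through $p_{c^\star}$. Going from $\G$ to $\G_U$ deletes $c^\star$ and, when $t^\star$ is a shared edge, lowers its $d^\star$ by one, which is exactly the shift $\kappa^U_\ell=\kappa_\ell+1$ on that edge together with $\kappa^U_v=\kappa_v$ on its endpoints (respectively $\kappa^U_{v^\star}=\kappa_{v^\star}+1$ when $t^\star=\{v^\star\}$), all other counting numbers being unchanged. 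Plugging the induction hypothesis for $p_U$ into the displayed identity, the extra power of $p_{t^\star}$ produced by this shift cancels the denominator, leaving precisely $p_{c^\star}\prod_{c\ne c^\star}p_c\prod_{\ell}p_\ell^{\kappa_\ell}\prod_{v}p_v^{\kappa_v}$, as claimed. The remaining checks — the three counting-number regimes and the consistency of $p_{t^\star}$ with the shift — are routine once the separation property is secured; the same peeling also shows $p_{c^\star}$ is pairwise on $c^\star$, which is~(\ref{eq:pcycle}).
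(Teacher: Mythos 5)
Your proof is correct and follows essentially the same route as the paper's: both peel leaf regions off the acyclic dual tree $\G^\star$, using the fact that a leaf cycle attaches to the rest of $\G$ only through its single contact vertex or edge, so that its private variables can be marginalized out while preserving acyclicity of the reduced dual graph. The paper compresses the factorization itself into an appeal to the junction-tree theorem and uses the peeling only to establish the pairwise form of $p_c$, whereas you carry the counting-number bookkeeping through the induction explicitly — the underlying argument is the same.
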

\begin{proof}
See Appendix~\ref{app:prop1}.
\end{proof}
\begin{figure}[ht]
\centerline{
\includegraphics[width=\textwidth]{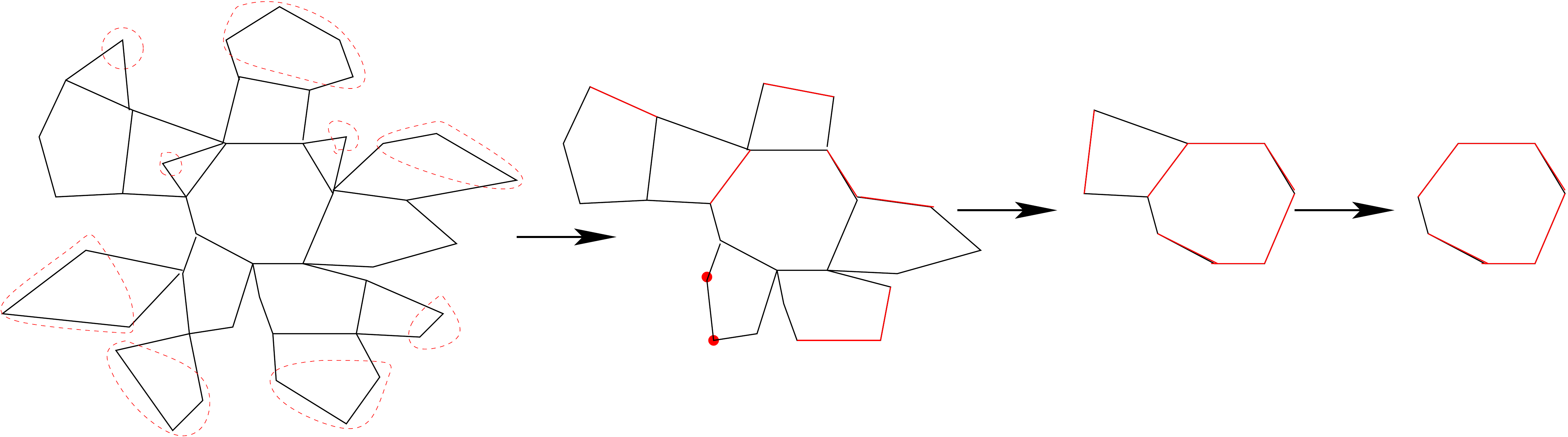}}
\caption{\label{fig:polygontree} Successive graphical models obtained by deconditioning variables (circled in red) 
from the leaves, starting from a polygon tree. Factors corresponding to links or vertices in red 
are modified during the process.}
\end{figure}
The variational problem that GBP aims at solving, is to find the closest distribution of the form~(\ref{eq:gbp})
to the reference distribution~(\ref{eq:refP}). 
For later convenience we define
\[
\psi_\ell(\x_\ell) \egaldef \psi_\ell^0(\x_\ell)\prod_{v\in\ell}\phi_v(x_v),
\]
and also introduce for any $c\in\Cy$:
\begin{equation}\label{eq:cbp}
\Psi_c(\x_c)\egaldef \prod_{\ell\in c}\psi_\ell(\x_\ell)\prod_{v\in c}\phi_v(x_v).
\end{equation}
For a candidate measure $p$, the variational free energy functional reads
\begin{align}
&\F(P_\text{\tiny GBP}||P) = \sum_{c\in\Cy,\x_c} p_c(\x_c)\log\frac{p_c(\x_c)}{\Psi_c(\x_c)}
 +\sum_{\ell\in\E,\atop\x_\ell}\kappa_\ell\ p_\ell(\x_\ell)\log\frac{p_\ell(\x_\ell)}{\psi_\ell(\x_\ell)}\nonumber\\[0.1cm]
&+\sum_{v\in\V,\atop x_v}\kappa_v\ p_v(x_v)\log\frac{p_v(x_v)}{\phi_v(x_v)}
+\sum_{\ell,c\ni\ell,\x_\ell}\lambda_{c\ell}(\x_\ell)\bigl(p_\ell(\x_\ell)-\sum_{\x_c\backslash\x_\ell}p_c(\x_c)\bigr)\nonumber\\[0.1cm]
&+\sum_{v,\ell\ni v,x_v}\lambda_{\ell v}(x_v)\bigl(p_v(x_v)-\sum_{\x_\ell\backslash\x_v}p_\ell(\x_\ell)\bigr)
+\sum_{v,c\ni v,x_v}\lambda_{c v}(x_v)\bigl(p_v(x_v)-\sum_{\x_c\backslash\x_v}p_c(\x_c)\bigr)\label{eq:glagrange}
\end{align}
after introducing three sets of Lagrange multipliers, $\lambda_{c\ell}(\x_\ell)$, $\lambda_{\ell v}(x_v)$ and 
$\lambda_{c v}(x_v)$ to enforce respectively cycle-edge, edge-variable and cycle-variable marginals compatibility.  
The minimum of the free energy is then obtained as:
\[
\begin{cases}
\DD p_c(\x_c) \propto \Psi_c(\x_c)\exp\bigl[\sum_{\ell\in c}\lambda_{c\ell}(\x_\ell)+\sum_{v\in c}\lambda_{cv}(x_v)\bigr]\\[0.5cm]
\DD p_\ell(\x_\ell) \propto \psi_\ell(\x_\ell)\exp\bigl[\frac{1}{\kappa_\ell}\bigl(\sum_{v\in\ell}\lambda_{\ell v}(x_v)-
\sum_{c\ni\ell}\lambda_{c\ell}(\x_\ell)\bigr)\bigr]\\[0.5cm]
\DD p_v(x_v) \propto \phi_v(x_v)\exp\bigl[-\frac{1}{\kappa_v}
\bigl(\sum_{c\ni v}\lambda_{c v}(x_v)+\sum_{\ell\ni v}\lambda_{\ell v}(x_v)\bigr)\bigr]
\end{cases}
\]
As direct consequence of these expressions we have 
\begin{coro}
$p_c$ has the form~\ref{eq:pcycle}.
\end{coro}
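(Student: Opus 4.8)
The plan is to read the conclusion off directly from the first of the three stationarity equations displayed just above the statement, namely
\[
p_c(\x_c) \propto \Psi_c(\x_c)\exp\Bigl[\sum_{\ell\in c}\lambda_{c\ell}(\x_\ell)+\sum_{v\in c}\lambda_{cv}(x_v)\Bigr],
\]
and to reorganize its right-hand side as a product of one factor per edge of $c$. First I would substitute the definition~(\ref{eq:cbp}) of $\Psi_c$, which already exhibits a product $\prod_{\ell\in c}\psi_\ell(\x_\ell)$ of pairwise terms together with a residual product $\prod_{v\in c}\phi_v(x_v)$ of single-variable terms. The exponential factor splits as $\prod_{\ell\in c}e^{\lambda_{c\ell}(\x_\ell)}\cdot\prod_{v\in c}e^{\lambda_{cv}(x_v)}$, the first of which is again a product of edge-local factors.

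The only piece not yet in the desired form is the collection of single-variable factors $\phi_v(x_v)e^{\lambda_{cv}(x_v)}$ for $v\in c$. Here I would use the elementary fact that every vertex of the subgraph $c$ is incident, inside $c$, to at least one edge of $c$ (by definition of a cycle, each node of $c$ has even, hence positive, degree in $c$). I fix for each $v\in c$ one such incident edge $\ell(v)\in c$ — for instance, orienting the cycle and taking the outgoing edge at $v$ — and absorb the factor $\phi_v(x_v)e^{\lambda_{cv}(x_v)}$ into the edge indexed by $\ell(v)$; likewise I absorb the proportionality (normalisation) constant into any single chosen edge. Setting
\[
\varphi_\ell(\x_\ell)\egaldef \psi_\ell(\x_\ell)\,e^{\lambda_{c\ell}(\x_\ell)}\prod_{v\,:\,\ell(v)=\ell}\phi_v(x_v)\,e^{\lambda_{cv}(x_v)}
\]
(times $1/Z_c$ for the chosen edge), one obtains $p_c(\x_c)=\prod_{\ell\in c}\varphi_\ell(\x_\ell)$, which is exactly~(\ref{eq:pcycle}).

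I do not expect any genuine obstacle here: the statement is a bookkeeping consequence of the stationarity equations combined with the combinatorial structure of a cycle. The only point deserving a word of care is the assignment $v\mapsto\ell(v)$ of single-node factors to edges — any assignment works, and the resulting non-uniqueness of the $\varphi_\ell$ simply reflects the usual gauge freedom in representing a pairwise measure on a graph — together with the equally harmless choice of which edge carries the normalisation constant.
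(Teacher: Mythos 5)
Your argument is correct and is exactly the reasoning the paper relies on: the corollary is stated there as a ``direct consequence'' of the displayed stationarity equation, with the same implicit bookkeeping of absorbing the vertex factors $\phi_v(x_v)e^{\lambda_{cv}(x_v)}$ and the normalisation into incident edges of the cycle. Your write-up merely makes this absorption explicit, so there is nothing to add.
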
 
\subsection{Single variable counting numbers and dual loops}\label{sec:ldc}
The counting number $\kappa_v$ contains some information about the local structure
of the dual graph. In order to unravel it we define the local dual graph $\G_v^\star\subset \G^\star$ attached to 
$v$ as $\G_v^\star = (\V_{v;c}^\star,\V_{v;t}^\star,\E_v^\star)$, where $\V_{v;c}^\star$ are dual vertices corresponding to 
cycles containing $v$; $\V_{v;t}^\star$ are dual vertices corresponding to all edges containing $v$ with non-zero counting number; 
\begin{figure}[ht]
\centerline{\resizebox*{\textwidth}{!}{\input{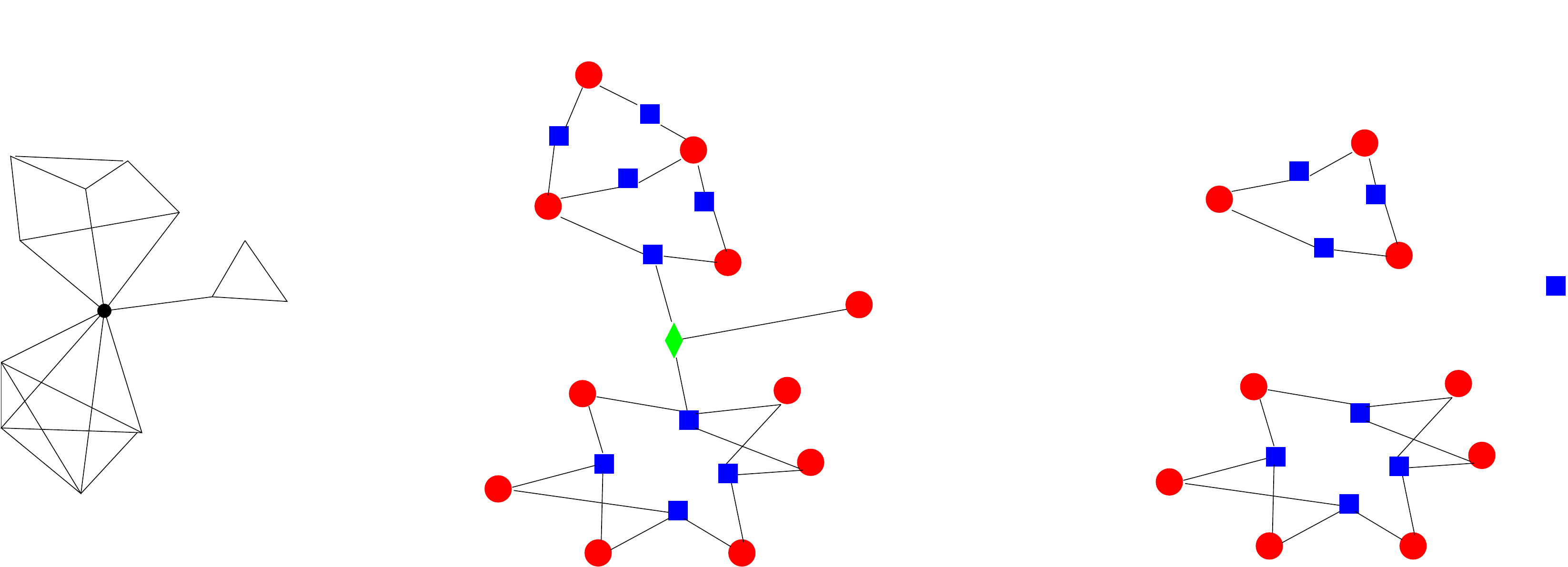_t}}}
\caption{\label{fig:locdual_graph} Local dual graph construction. In this case the choice of cycle basis leads to 
$\kappa_v = 2$ with $d_v^\star = 3$ and $\Cy_v^\star = 4$.}
\end{figure}
$\E_v^\star$ is the set of dual 
edges connecting $\ell$-nodes in $\V_{v;t}^\star$ to their corresponding $c$-nodes in $\V_{v;c}^\star$ they belong to in the primal graph.
\begin{prop}\label{prop:ldc}
Let $d_v^\star$ be the number of components of $\G_v^\star$ and $\Cy_v^\star$ its cyclomatic number. We have
\begin{equation}\label{eq:ldc}
\kappa_v = 1-d_v^\star +\Cy_v^\star. 
\end{equation}
\end{prop}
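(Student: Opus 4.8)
The plan is to reduce the identity to the elementary Euler relation for the cyclomatic number (first Betti number) of the local dual graph $\G_v^\star$, namely that for any graph the cyclomatic number equals (number of edges) $-$ (number of vertices) $+$ (number of connected components), and then to match the three counts involved against the defining formula for $\kappa_v$.

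First I would recall that by construction $\kappa_c=1$, $\kappa_\ell=1-d_\ell^\star$, and
$\kappa_v = 1 - \sum_{c\ni v}\kappa_c - \sum_{\ell\ni v}\kappa_\ell$, so that it suffices to show
$1-d_v^\star+\Cy_v^\star = 1 - \sum_{c\ni v}\kappa_c - \sum_{\ell\ni v}\kappa_\ell$.
Next I would read off the sizes of the three parts of $\G_v^\star$ from its definition. Since $\kappa_c=1$, the number of $c$-nodes is $|\V_{v;c}^\star| = \#\{c\in\Cy : v\in c\} = \sum_{c\ni v}\kappa_c$. An $\ell$-node with $\ell\ni v$ is kept in $\V_{v;t}^\star$ exactly when $\kappa_\ell\neq 0$, i.e.\ when $d_\ell^\star\neq 1$; the discarded edges (those with $d_\ell^\star=1$) carry $\kappa_\ell=0$ and hence do not affect $\sum_{\ell\ni v}\kappa_\ell$, so that sum may be restricted to $\V_{v;t}^\star$: $\sum_{\ell\ni v}\kappa_\ell = \sum_{\ell\in\V_{v;t}^\star}(1-d_\ell^\star) = |\V_{v;t}^\star| - \sum_{\ell\in\V_{v;t}^\star}d_\ell^\star$. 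Finally, each $\ell$-node of $\V_{v;t}^\star$ is joined in $\G_v^\star$ to precisely the $d_\ell^\star$ cycles of $\Cy$ containing $\ell$, and each such cycle contains $v$ (it contains $\ell\ni v$), hence is a $c$-node of $\V_{v;c}^\star$; no dual edge is lost, so $|\E_v^\star| = \sum_{\ell\in\V_{v;t}^\star}d_\ell^\star$.

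With these counts in hand the conclusion is a one-line substitution: applying $\Cy_v^\star = |\E_v^\star| - (|\V_{v;c}^\star| + |\V_{v;t}^\star|) + d_v^\star$ gives $1-d_v^\star+\Cy_v^\star = 1 + |\E_v^\star| - |\V_{v;c}^\star| - |\V_{v;t}^\star| = 1 - \sum_{c\ni v}\kappa_c + \sum_{\ell\in\V_{v;t}^\star}d_\ell^\star - |\V_{v;t}^\star| = 1 - \sum_{c\ni v}\kappa_c - \sum_{\ell\ni v}\kappa_\ell = \kappa_v$, which is (\ref{eq:ldc}). I do not expect a genuine obstacle here; the only place that needs care is the bookkeeping around edges with $d_\ell^\star=1$ (dropped from $\V_{v;t}^\star$ but harmless, since $\kappa_\ell=0$) and the possible degenerate configurations — a basis cycle through $v$ whose two edges at $v$ both have $d_\ell^\star=1$ shows up as an isolated $c$-node of $\G_v^\star$, and a bridge edge ($d_\ell^\star=0$, $\kappa_\ell=1$) shows up as an isolated $\ell$-node. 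Since the "(edges) $-$ (vertices) $+$ (components)" formula is valid for an arbitrary (possibly disconnected, possibly with isolated vertices) graph, all these cases are absorbed automatically and no case analysis is needed once one has checked that $\G_v^\star$ is an ordinary graph and that its edge count is exactly $\sum_{\ell\in\V_{v;t}^\star}d_\ell^\star$ as above.
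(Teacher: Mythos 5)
Your proof is correct and follows essentially the same route as the paper's: apply the Euler relation $\Cy_v^\star = |\E_v^\star|-|\V_{v;c}^\star|-|\V_{v;t}^\star|+d_v^\star$, count $|\E_v^\star|=\sum_{\ell\ni v}d_\ell^\star$ via the observation that every basis cycle containing an edge through $v$ itself contains $v$, and match against the defining formula for $\kappa_v$. Your explicit bookkeeping for the edges with $d_\ell^\star=1$ (excluded from $\V_{v;t}^\star$ but contributing zero to $\sum_{\ell\ni v}\kappa_\ell$) is a welcome clarification of a point the paper glosses over by summing over all $\ell\ni v$.
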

\begin{proof}
By definition, we have
\begin{align*}
\Cy_v^\star &= \vert \E_v^\star \vert - \vert \V_{v;c}^\star\vert -\vert\V_{v;t}^\star\vert + d_v^\star\\[0.2cm]
&= \sum_{\ell\ni v}d_{\ell}^\star - \sum_{c\ni v} 1 - \sum_{\ell\ni v} 1 +  d_v^\star\\[0.2cm]
&= \kappa_v + d_v^\star -1.
\end{align*}
where between the first and second line it is remarked that for any $\ell$ parent of $v$, any $c$ parent of $\ell$
necessarily contains $v$.
\end{proof}
Qualitatively $\Cy_v^\star$ represents the number of dual cycles ``centered'' on $v$. 
This decomposition will prove useful for building our cycle based region graph. 

Let us give a few examples: for nodes in the bulk of a planar graph we have $\Cy_v^\star=1$, on a cubic lattice $\Cy_v^\star=3$ which generalizes to 
$\Cy_v^\star = d(d-1)/2$ on a $d$-dimensional square lattice.
On a $N/2+N/2$ bipartite  graph we have $\Cy_v = 3N/2-1$ while on a complete graph of size $N$, using a cycle 
basis $\{(1ij),1<i<j\le N\}$ rooted on node $1$ , $\Cy_v = (N-2)(N-3)/2$.

\subsection{Parent-to-child algorithm and minimal graphical representation}
At this point, following the region-based algorithm~\cite{YeFrWe} prescriptions, a message
passing algorithm can be set-up which rules are associated to the Hasse diagram of the regions hierarchy.
Regions are associated to all terms with non vanishing counting number in~(\ref{eq:gbp}), 
and directed edges are associated to each Lagrange multiplier added in~(\ref{eq:glagrange}), corresponding to 
direct parent to child relationship, hence discarding the $\lambda_{cv}$.
The message rules which are obtained are then based on the existence of a certain linear transformation 
of the Lagrange multipliers, which allows one to parameterize the beliefs as follows
\begin{align}
p_v(x_v) &= \phi_v(x_v)\prod_{\ell\ni v} m_{\ell\to v}(x_v),\nonumber\\[0.2cm]
p_\ell(\x_\ell) 
&= \psi_\ell(\x_\ell)\prod_{c\ni\ell}m_{c\to\ell}(\x_\ell)\prod_{v\in\ell}n_{v\to\ell}(x_v),\nonumber\\[0.2cm]
p_c(\x_c) &= \Psi_c(\x_c)\prod_{\ell\in c}n_{\ell\to c}(\x_\ell)\prod_{v\in c}n_{v\to c}(x_v),\label{eq:pc1}
\end{align}
with 
\begin{align*}
n_{\ell\to c}(\x_\ell) &\egaldef \prod_{c'\ni\ell\backslash c} m_{c'\to\ell}(\x_\ell)\\[0.2cm]
n_{v\to \ell}(x_v) &\egaldef 
\prod_{\ell'\ni v\backslash \ell}m_{\ell' \to v}(x_v),\\[0.2cm]
n_{v\to c}(x_v) &\egaldef \prod_{\ell'\ni v,\ell'\notin c}m_{\ell' \to v}(x_v),
\end{align*}
From this we get the following message passing rules:
\begin{align}
m_{c\to\ell}(\x_\ell)\prod_{v\in\ell}m_{\ell_{vc\backslash\ell}\to v}(x_v) 
&\longleftarrow \sum_{\x_c\backslash \x_\ell}\frac{\Psi_c(\x_c)}{\psi_\ell(\x_\ell)}
\prod_{\ell'\in c\backslash\ell}n_{\ell'\to c}(\x_{\ell'})
\times\prod_{v\in c\backslash\ell}n_{v\to c}(x_v),\label{eq:pac1}\\[0.2cm]
m_{\ell\to v}(x_v) &\longleftarrow \sum_{\x_\ell\backslash x_v}\frac{\psi_\ell(\x_\ell)}{\phi_v(x_v)}
\prod_{c\ni\ell}m_{c\to\ell}(\x_\ell)\prod_{v'\in \ell\backslash v} n_{v'\to\ell}(x_{v'}),\label{eq:pac2}
\end{align}
where in the first rule the shorthand notation $\ell_{v c\backslash\ell}$ is used to denote the 
link in $c$ containing $v$ different from $\ell$.

As noticed in~\cite{PaAn}, dependence between Lagrange multipliers are present in the parent-to-child algorithm. 
This results in more complex factor graph with more feed-back loops
than necessary which in turn may cause convergence failures of GBP.
In effect we observe experimentally, both on grids and on 
heterogeneous graphs tested in Section~\ref{sec:exp} that the parent-to-child algorithm fails to converge
for systems sizes exceeding a few hundreds of nodes whatever damping coefficient is inserted into the message passing equations.
In~\cite{PaAn} a minimal graphical representation construction is proposed to settle such problems, 
in order to eliminate all redundant Lagrange multipliers. In our setting this leads in particular to 
having any (non-bridge) variable node to be attached to at most one link node and to have therefore at most one ancestor cycle node
in the factor graph. As a consequence we have always $n_{v\to c}(x_v) = m_{\ell_{vc\backslash\ell}\to v}(x_v)= 1$. 
As shown in Appendix~\ref{app:instabil} this 
leads to an essentially unstable algorithm for graph containing at least one single dual loop. So in short  
we have 
\begin{itemize}
\item poor global convergence properties of the parent to child algorithm;
\item local convergence problems for minimal region graph based algorithm caused by dual loops.
\end{itemize}
This problem of redundant Lagrange multipliers 
has actually also been discussed in the context of the $2$-D Edward Anderson (EA) model in~\cite{DoLaMuRiTo}. 
In this context the authors propose a solution based on a specific gauge choice for the message definition in order
to regularize GBP. Our approach to this problem is different. As we shall see in the next Section it 
is solely based on topological properties of the graph of interactions. This yield a generic method 
independent of the graph or the type of interactions.

\subsection{Mixed factor graph and associated message passing rules}\label{sec:mfg}
We introduce here a specification of the region graph which on the one hand eliminates 
all unnecessary feed-back loops present in the parent-to-child algorithm, 
but on the other 
hand prevent instabilities associated to dual loops. 
In this formulation first a minimal set of Lagrange multipliers are taken into account as proposed in~\cite{PaAn};
but additional ``clone variables'' need to be introduced for variables at the center of dual loops, i.e. for which 
$C_v^\star\ne 0$, as defined in Section~\ref{sec:ldc}, to prevent some instability which we have identified (see Appendix~\ref{app:instabil}).
Before explaining it in details let us give the specification of the region graph which we refer to as the mixed factor graph (MFG)
for reasons which will soon be clear:\\
\begin{itemize}[leftmargin=*,noitemsep,topsep=0pt]
\item (i) Each term in~(\ref{eq:gbp}) having a non-zero counting number is associated to a node in the MFG.
There are three families of nodes, $c$-nodes, $\ell$-nodes and $v$-nodes, respectively associated to cycles, links and 
vertices of the original graph. 
$c$-nodes are always factors while $v$-nodes are always variables. Instead, 
$\ell$-nodes associated to links are composite nodes, i.e. can be of both types. 
\item (ii) Edges of the MFG represent Lagrange multipliers and relate variables to factors. A $v$-node can be linked to 
$\ell$-nodes, considered then as factors nodes. $\ell$-nodes considered as variable nodes
can be linked to $c$-nodes.
\item (iii) all links of a given cycle $c$ with non-vanishing counting numbers are linked as variables to this 
$c$-node.
\item (iv) to a variable $v$ we associate in general two types of $v$-nodes depending on  $d_v^\star$ and $\Cy_v^\star$ defined in 
Section~\ref{sec:ldc}:
\begin{itemize}
\item (a) if $d_v^\star> 1$ one $v$-node is associated to $v$, which connects exactly to one single arbitrary $\ell$-node of each components of 
$\G_v^\star$, its degree being therefore $d_v^\star$ and a counting number of $1-d_v^\star$ is attributed to it. 
If necessary an $\ell$-node with zero counting number can be inserted into the MFG
in order to ensure this $v$-node to be properly connected to all components it owes to.
\item (b) if $\Cy_v^\star >0$, to each $\ell$ containing $v$ we associate one 
singly connected to $\ell$ $v^\star$-nodes, 
as long as these $\ell$-node are in a component of $\G_v^\star$ containing at least one dual loops. Each clone is attributed a counting
number $\kappa_{\vs} = \Cy_v^\star/q$ if $q$ is the number of clones.
\end{itemize}
\end{itemize}
This set of rules is illustrated on Figure~\ref{fig:hfg}.
Rule (iii) ensures that all marginal probabilities of cycles are compatibles at links intersections. Rule (iv)(a)
is applied to cut-vertices, i.e. vertices which separate $\G$ in multiple components when removed as shown 
on the example of Figure~\ref{fig:hfg}. Rule (iv)(b) is there to take into account dual loop corrections. 
The prescription (iv)(b) is such as to ensure a better convergence of GCBP by making use of replicas of $v$-nodes, while
preserving the minimal use of Lagrange multipliers. The number of constraints is still  minimal in the sense that 
the number of independent loops of the MFG is equal to the number of independent loops of the dual graph $\G^\star$.
From the Lagrangian formulation $\kappa_\vs$ is constrained by 
$\sum_{\vs\approx v}\kappa_\vs = \Cy_v^\star$ 
where $\approx$ indicates the correspondence between $\vs$-node and variable $v$.
The choice made in rule (iv)b for $\kappa_{vs}$ satisfies this constraint, albeit other ones are possible. 
\begin{figure}[ht]
\centerline{\resizebox*{0.7\textwidth}{!}{\input{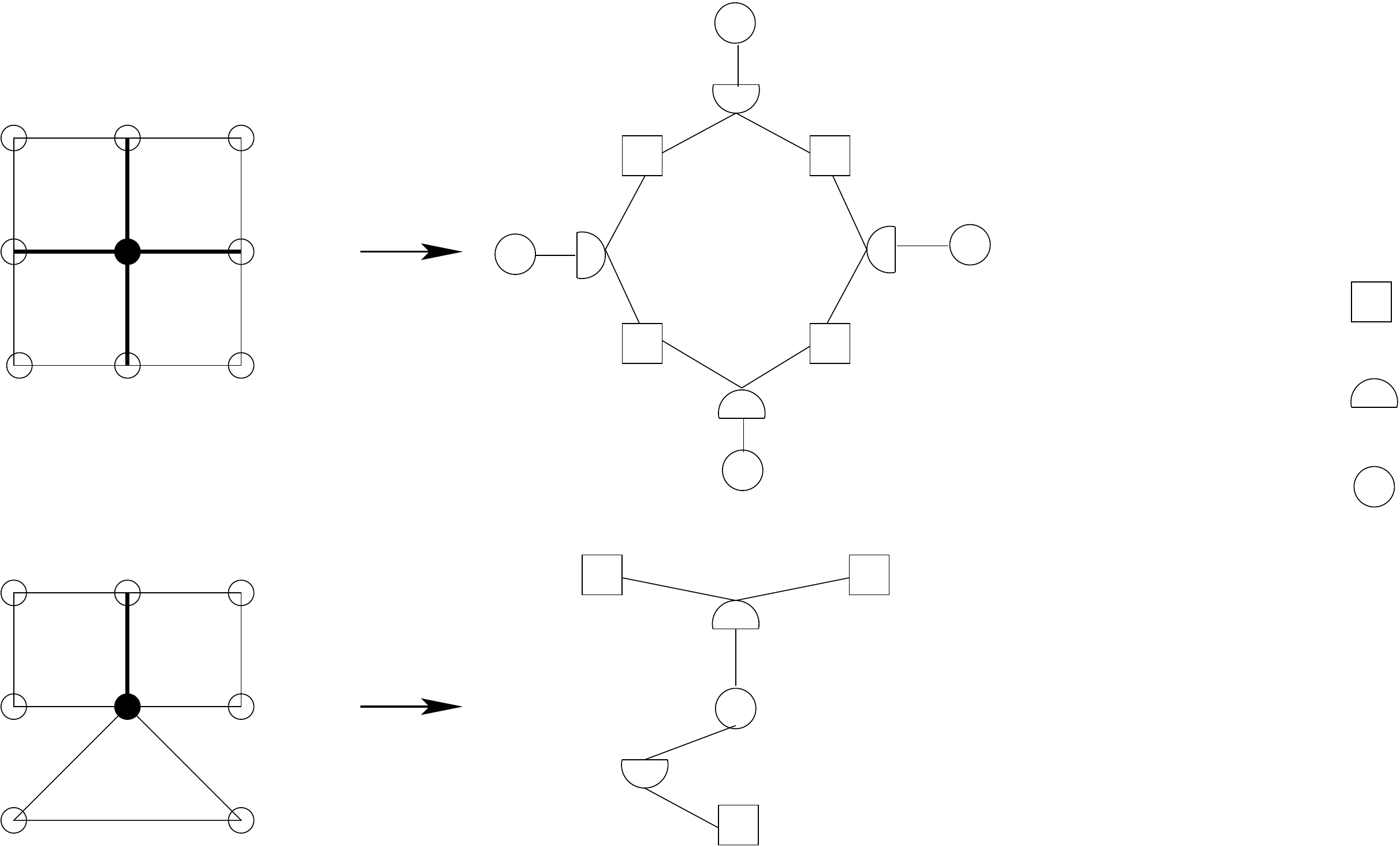_t}}}
\caption{\label{fig:hfg} Pairwise MRF (left). Variables and links with 
non-zero counting number are in bold. Corresponding mixed factor graph(right) with counting numbers.}
\end{figure}

The reason for introducing clone variables becomes clearer when trying to write down message passing equations. 
In fact a direct generalization of the change of variables~(\ref{eq:Lnmap},\ref{eq:mLmap}) used to define ordinary 
BP from the Lagrange multipliers can be obtained as follows:
\begin{align*}
\lambda_{\ell v}(x_v) &= \log \prod_{\ell'\ni v\backslash \ell}m_{\ell' \to v}(x_v) \egaldef \log n_{v\to\ell}(x_v),\\[0.2cm]
\lambda_{\ell\vs}(x_v) &= -\kappa_\vs\log m_{\ell \to \vs}(x_v) \egaldef \log n_{\vs\to\ell}(x_v),\\[0.2cm]
\lambda_{c\ell}(\x_\ell) &= \log n_{\ell\to c}(\x_\ell) +\sum_{v\in\ell}\log n_{v\to\ell}(x_v),
\end{align*}
where $\sum_{v\in\ell}$ is taken over all types of $v$-nodes
and with 
\[
n_{\ell\to c}(\x_\ell) \egaldef \prod_{c'\ni\ell\backslash c} m_{c'\to\ell}(\x_\ell),\\[0.2cm]
\]
Note that $\lambda_{cv}$ have disappeared by definition of the MFG.
From the Lagrangian formulation $\kappa_\vs$ is constrained by 
\[
\sum_{\vs\approx v}\kappa_\vs = \Cy_v^\star,
\] 
where $\approx$ indicates the correspondence between $\vs$-node and variable $v$.
The choice made in rule (iv)b for $\kappa_{vs}$ satisfies this constraint, albeit other ones are possible. 
We get the following expression for the beliefs
\begin{align}
p_v(x_v) &=  \phi_v(x_v)\exp\bigl[-\frac{1}{1-d_v^\star}\sum_{\ell\ni v}\lambda_{\ell v}(x_v)\bigr]
= \phi_v(x_v)\prod_{\ell\ni v} m_{\ell\to v}(x_v),\nonumber\\[0.3cm]
p_\vs(x_v) &=  \phi_v(x_v)\exp\bigl[-\frac{1}{\kappa_\vs}\lambda_{\ell_\vs\vs}(x_v)\bigr]
= \phi_v(x_v) m_{\ell_\vs\to \vs}(x_v),\nonumber\\[0.3cm]
p_\ell(\x_\ell) &=  \psi_\ell(\x_\ell)\exp\bigl[\frac{1}{\kappa_\ell}\bigl(\sum_{v\in\ell}\lambda_{\ell v}(x_v)-
\sum_{c\ni\ell}\lambda_{c\ell}(\x_\ell)\bigr)\bigr]
= \psi_\ell(\x_\ell)\prod_{c\ni\ell}m_{c\to\ell}(\x_\ell)\prod_{v\in\ell}n_{v\to\ell}(x_v),\nonumber\\[0.3cm]
p_c(\x_c) &=  \Psi_c(\x_c)\exp\bigl[\sum_{\ell\in c}\lambda_{c\ell}(\x_\ell)\bigr]
= \Psi_c(\x_c)\prod_{\ell\in c}\bigl[n_{\ell\to c}(\x_\ell)\prod_{v\in\ell}n_{v\to\ell}(x_v)\bigr],\label{eq:pc}
\end{align}
where $\ell_\vs$ denotes the $\ell$-node connected to $\vs$. From this we get the following message passing rules:
\begin{align}
m_{c\to\ell}(\x_\ell) &\longleftarrow \sum_{\x_c\backslash \x_\ell}\frac{\Psi_c(\x_c)}{\psi_\ell(\x_\ell)}
\prod_{\ell'\in c\backslash\ell}\bigl[n_{\ell'\to c}(\x_{\ell'})\prod_{v\in\ell'}n_{v\to\ell'}(x_v)\bigr],\label{eq:mcl}\\[0.2cm]
m_{\ell\to v}(x_v) &\longleftarrow\sum_{\x_\ell\backslash x_v}\frac{\psi_\ell(\x_\ell)}{\phi_v(x_v)}
\times\prod_{c\ni\ell}m_{c\to\ell}(\x_\ell)\prod_{v'\in \ell\backslash v} n_{v'\to\ell}(x_{v'}),\label{eq:mlv}\\[0.2cm]
m_{\ell\to \vs}(x_v) &\longleftarrow\Bigl(\sum_{\x_\ell\backslash x_v}\frac{\psi_\ell(\x_\ell)}{\phi_v(x_v)}
\times\prod_{c\ni\ell}m_{c\to\ell}(\x_\ell)\prod_{v'\in \ell\backslash \vs} n_{v'\to\ell}(x_{v'})\Bigr)^{1/(1+\kappa_\vs)}.\label{eq:mlvs}
\end{align}
\begin{figure}[ht]
\centerline{\resizebox*{\textwidth}{!}{\input{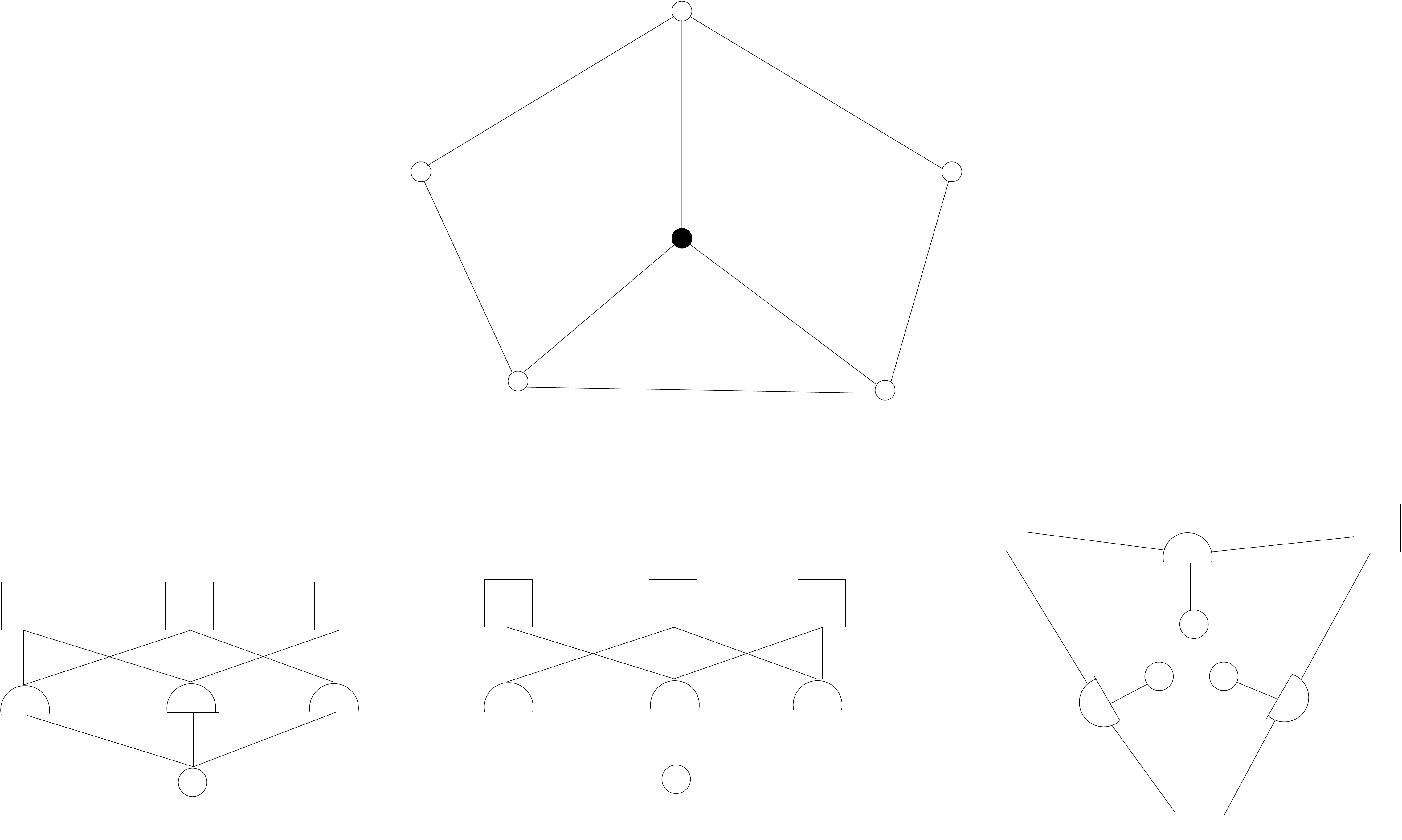_t}}}
\caption{\label{fig:dloop} One dual loop on top ($C_v^\star=1$) with corresponding  factor-graphs.}
\end{figure}
The difference between factor graph of standard parent-to-child algorithm, the minimal one proposed in~\cite{PaAn}   
and the one associated to MFG is illustrated on Figure~\ref{fig:dloop}.
With this formulation GCBP can be 
seen mainly as an ordinary belief propagation defined on the MFG, where~(\ref{eq:mcl},\ref{eq:mlv})
are direct generalization on a MFG of ordinary BP update rules~(\ref{eq:bp1},\ref{eq:bp2}),
with an additional peculiarity of given 
by dual loop corrections carried by clone variables in~(\ref{eq:mlvs}).

\subsection{MAP estimation}\label{sec:map}
The general inference schema proposed in the previous sections can be straightforwardly adapted 
to the optimization context, the same way as the min-sum algorithm also called belief revision~\cite{Pearl}
is derived from BP, by simply replacing ``$\sum$'' by ``$\min$'' (see e.g.~\cite{Ruozzi}) . 
First, adding some specific notations, the messages are parameterized in terms of log probability ratio: 
\[
m_{c\to\ell}(\x_\ell) \propto \exp\bigl(-\mu_{c\to\ell}(x_\ell)\bigr)\qquad\text{and}\qquad
m_{\ell\to v}(x_v) \propto \exp\bigl(-\mu_{\ell\to v} (x_v)\bigr).
\]
The counterparts to ``$n$'' messages are in turn defined as:
\begin{align*}
\nu_{\ell\to c}(\x_\ell) &\egaldef \sum_{c'\ni\ell\backslash c} \mu_{c'\to\ell}(\x_\ell),\\[0.2cm]
\nu_{v\to\ell}(x_v) &\egaldef \sum_{\ell'\ni v\backslash \ell} \mu_{\ell'\to v}(x_v),\\[0.2cm]
\nu_{\ell\to c}(x_{v^\star}) &\egaldef -\kappa_{v^\star}\mu_{\ell \to v^{\star}}(x_{v^\star}),
\end{align*}
where again clone variable are distinguished from ordinary ones using $\star$ notation.
Correspondingly, let 
\[
E_c(\x_c)\egaldef -\log\bigl(\Psi_c(\x_c)\bigr),\qquad
E_\ell(\x_\ell)\egaldef -\log\bigl(\psi_\ell(\x_\ell)\bigr)
\]
and
\[
E_v(x_v) \egaldef  -\log\bigl(\phi_v(x_v)\bigr).
\]
To the generalized belief propagation rules~(\ref{eq:mcl},\ref{eq:mlv},\ref{eq:mlvs})
correspond the following min-sum update rules:
\begin{align}
\mu_{c\to\ell}(\x_\ell) &\longleftarrow \min_{\x_c\backslash \x_\ell}\Bigl(E_c(x_c)-E_\ell(\x_\ell)+
\sum_{\ell'\in c\backslash\ell}\bigl[\nu_{\ell'\to c}(\x_{\ell'})+\sum_{v\in\ell'}\nu_{v\to\ell'}(x_v)\bigr]\Bigr),\label{eq:mcl_opt}\\[0.2cm]
\mu_{\ell\to v}(x_v) &\longleftarrow\min_{\x_\ell\backslash x_v}\Bigl(E_\ell(\x_\ell)-E_v(x_v)
+\sum_{c\ni\ell}\mu_{c\to\ell}(\x_\ell)+\sum_{v'\in \ell\backslash v} \nu_{v'\to\ell}(x_{v'})\Bigr),\label{eq:mlv_opt}\\[0.2cm]
\mu_{\ell\to \vs}(x_v) &\longleftarrow \frac{1}{1+\kappa_\vs}\min_{\x_\ell\backslash x_v}\Bigl(E_\ell(\x_\ell)-E_v(x_v)\nonumber\\[0.2cm]
&\hspace{4cm}+\sum_{c\ni\ell}\mu_{c\to\ell}(\x_\ell)+\sum_{v'\in \ell\backslash \vs} \nu_{v'\to\ell}(x_{v'})\Bigr).\label{eq:mlvs_opt}
\end{align}
As a result the beliefs  associated to the various family of nodes, expressing log marginal probabilities, are given by
\begin{align}
\Yp_v(x_v) &= E_v(x_v)+\sum_{\ell\ni v}\mu_{\ell\to v}(x_v),\nonumber\\[0.2cm]
\Yp_\ell(\x_\ell) &= E_\ell(\x_\ell)+\sum_{c\ni\ell}\mu_{c\to\ell}(\x_\ell)+\sum_{v\in\ell}\nu_{v\to\ell}(x_v),\nonumber\\[0.2cm]
\Yp_c(\x_c) &= E_c(\x_c)+\sum_{\ell\in c}\bigl[\nu_{\ell\to c}(\x_\ell)+\sum_{v\in\ell}\nu_{v\to\ell}(x_v)\bigr].\label{eq:Ec}
\end{align}
When the messages correspond to a fixed point, the usual compatibility between beliefs is expressed as
\begin{align*}
\min_{\x_c\backslash \x_\ell} \Yp_c(\x_c) &= \Yp_\ell(\x_\ell),\qquad\forall \ell\in c,\\[0.2cm]
\min_{\x_\ell\backslash x_v} \Yp_\ell(\x_\ell) &= \Yp_v(x_v),\qquad\forall v\in\ell.
\end{align*}
In addition, if the joint probability measure is given in a Gibbs form, 
\[
P(\x) = e^{-E(\x)},
\]
these beliefs provide us, up to a constant, with the following decomposition of the energy function:
\[
E(\x) = \sum_c \Yp_c(\x_c)+\sum_\ell \kappa_\ell \Yp_\ell(\x_\ell)+\sum_v\kappa_v \Yp_v(x_v),
\]
and the approximate minimizer of $E(\x)$, given by 
\[
x_i^{min} = \argmin_{x_i} \Yp_i(x_i),\qquad\forall i\in \V, 
\]
verifies
\[
E(\x^{min}) = \sum_c \min_{\x_c}[\Yp_c(\x_c)]+\sum_\ell\kappa_\ell \min_{\x_\ell}[\Yp_\ell(\x_\ell)]+\sum_v\kappa_v \min_{x_v}[\Yp_v(x_v)],
\]
by virtue of the belief's compatibility.
Next, as will be also the case for inference, we exploit the ring geometry 
in order to compute efficiently the $c$-node to $\ell$-node messages~\ref{eq:mcl_opt}.
This can be done in $O(nq^3)$ time complexity per message. Indeed, 
the $c$-node to $\ell$-node message update simply reads:
\[
\mu_{c\to\ell}(\x_\ell) = \min_{\x_c\backslash\x_\ell}\bigl[\Yp_c(\x_c)\bigr]-E_\ell(\x_\ell)-\nu_{\ell\to c}(\x_\ell)-\sum_{v\in\ell}\mu_{\ell\to v}(x_v).
\]
Running a min-sum algorithm associated to the energy function $\Yp_c(\x_c)$ given $\x_\ell$ on the loop $c$  
for each $\ell\in c$ yields immediately $\mu_{c\to\ell}$.

\section{Cycle basis determination}\label{sec:cbasis}
\subsection{Various criteria}
At this point, nothing has been said concerning the choice of the cycle basis. 
In~\cite{GeWe} it is argued that a good choice of basis ensures  
the algorithm of being tree-robust (TR), namely that GBP converges to an exact 
fixed point when the underlying graph $\G$ is singly connected. 
They provide a characterization for cycle basis ensuring this property. First it has to be a \emph{weak 
fundamental cycle basis} (WFCB), ensuring in particular the \emph{maxent} property to be satisfied.
By definition a cycle basis is fundamental if each cycle contains an edge that is not included in any other basis cycle.
For a WFCB, this constraint is relaxed, it is a cycle basis for which there is an ordering  
s.t. each cycle contains a link which is absent of all preceding cycles in this ordering.
In addition the WFCB is TR, if it is such that any subset of the cycle basis  contains 
a set of links, each one pertaining to a unique cycle in this subset,  and altogether forming at least one loop. 
The reason behind this can be understood quite simply in the special context of CVM approximation~(\ref{eq:gbp}),
where a simple reduction rule as the ones given in~\cite{Welling2004} is at work.
Suppose the MRF is such that the 
set of non trivial links $\psi_{ij}^{(0)}(x_i,x_j)\ne f(x_i)g(x_j)$ in~(\ref{eq:refP}) forms a tree $\T$.
\begin{prop}\label{prop:TR}
(i) if a trivial link $\ell$ pertains to a single cycle, the factorized joint measure~(\ref{eq:gbp}) 
coincides with the same CVM approximation defined on a reduced graph, 
where link $\ell$ has been removed and $c$ is discarded. \\[0.1cm]
(ii) if the cycle basis is a WFCB based on a series of trivial links, the factorized joint measure~(\ref{eq:gbp})
is reduced to the Bethe joint measure associated to the underlying tree $\T$.
\end{prop}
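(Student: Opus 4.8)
The plan is to prove~(i) by a local analysis around the cycle $c$, and then obtain~(ii) by iterating~(i) along a WFCB ordering. On the feasible set of~(\ref{eq:glagrange}) (all compatibility constraints met) the functional reduces to $\beta E-S_\text{\tiny CVM}$ with $S_\text{\tiny CVM}=\sum_cS_c+\sum_\ell\kappa_\ell S_\ell+\sum_v\kappa_v S_v$ and $E$ the reference energy, by the counting-number rules. For~(i), first normalise the trivial link $\ell=(i,j)$: writing $\psi_\ell^0(x_i,x_j)=f(x_i)g(x_j)$ and absorbing $f,g$ into $\phi_i,\phi_j$ leaves $P$ in~(\ref{eq:refP}) unchanged, makes $\psi_\ell^0\equiv1$, and identifies the reference measure on $\G'=\G\setminus\ell$ with $P$. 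Then $\Psi_c$ in~(\ref{eq:cbp}) no longer depends on the pair $(x_i,x_j)$, factoring over the path $c\setminus\ell$ and its incident vertex factors. Since $\ell$ lies in the single cycle $c$, $d^\star_\ell=1$, so $\kappa_\ell=0$ and $p_\ell$ is absent from~(\ref{eq:gbp}); in~(\ref{eq:glagrange}) the multiplier $\lambda_{c\ell}$ merely equates a marginal of $p_c$ to a belief $p_\ell$ which has zero counting number and enters no constraint beyond $p_i=\sum_{x_j}p_\ell$, $p_j=\sum_{x_i}p_\ell$ — themselves implied by the cycle--vertex constraints $p_i=\sum_{\x_c\setminus x_i}p_c$, $p_j=\sum_{\x_c\setminus x_j}p_c$. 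Hence the only binding constraints on $p_c$ are the matching conditions on the links $\ell'\in c\setminus\ell$ and on the vertices of $c$, never on $(x_i,x_j)$.

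Minimising $\sum_{\x_c}p_c\log(p_c/\Psi_c)$ under these constraints, the term $\langle\log\Psi_c\rangle$ is fixed by the prescribed link and vertex marginals, so it remains to maximise $S_c$ at fixed marginals on the \emph{path} $c\setminus\ell$; the maximiser is the Markov-chain product $p_c(\x_c)=\prod_{\ell'\in c\setminus\ell}p_{\ell'}(\x_{\ell'})/\prod_{v\in c\setminus\{i,j\}}p_v(x_v)$, for which $S_c=\sum_{\ell'\in c\setminus\ell}S_{\ell'}-\sum_{v\in c\setminus\{i,j\}}S_v$ by nonnegativity of the relative entropy of $p_c$ with respect to it. Tracking the counting numbers of Section~\ref{sec:ldc} in passing from $(\G,\Cy)$ to $(\G',\Cy\setminus\{c\})$, whose cyclomatic number is one less: deleting $c$ raises $\kappa_{\ell'}$ by one for each $\ell'\in c\setminus\ell$, lowers $\kappa_v$ by one for each $v\in c\setminus\{i,j\}$, and leaves everything else fixed ($\kappa_i,\kappa_j$ and $\kappa_\ell=0$ included), while $E$ is unchanged since $\psi_\ell^0\equiv1$. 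Therefore, for any feasible $b$ on $\G$ with restriction $b'$ to the shared regions,
\[
\beta\F_\G[b]=\beta\F_{\G'}[b']-S_c[b_c]+\Bigl(\sum_{\ell'\in c\setminus\ell}S_{\ell'}-\sum_{v\in c\setminus\{i,j\}}S_v\Bigr)\ \ge\ \beta\F_{\G'}[b'],
\]
with equality exactly when $b_c$ is that product, and completing any feasible $b'$ by the product is feasible on $\G$ and attains equality. Hence the two problems share their minimisers on the common regions, and substituting the product for $p_c$ in~(\ref{eq:gbp}) cancels the $\kappa$-shifts exactly, which proves~(i).

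For~(ii), the WFCB property furnishes an ordering $c_1,\dots,c_{|\Cy|}$ in which the private link $\ell_k$ of $c_k$ lies in none of $c_1,\dots,c_{k-1}$ (and these private links, all trivial by hypothesis, are pairwise distinct). Process the basis in reverse order: at step $k$ only $c_1,\dots,c_k$ survive, so $\ell_k$ belongs to exactly one of them, part~(i) applies, and $c_k,\ell_k$ are deleted while the CVM measure is left intact. After $|\Cy|$ deletions the support graph is a forest with $|\V|-1$ edges, hence a spanning tree $\T_0\supseteq\T$ (only trivial links having been removed); on $\T_0$ the factorization~(\ref{eq:gbp}) reads $\prod_\ell p_\ell/\prod_v p_v^{\,d_v-1}$, and since the edges of $\T_0\setminus\T$ carry no interaction and therefore decouple, this is precisely the Bethe joint measure of $\T$.

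I expect the main obstacle to be the interplay of (a) the counting-number bookkeeping — checking that the $\kappa$-shifts incurred by deleting a cycle are compensated exactly by the tree-product form of $p_c$ — and (b) the point that, once $\kappa_\ell=0$, the multiplier $\lambda_{c\ell}$ constrains $p_c$ only through marginals already present; the remaining ingredient, the maxent-on-a-tree identity, is classical.
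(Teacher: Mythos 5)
Your proof is correct, but it takes a genuinely different route from the paper's. The paper argues at the level of a GBP fixed point: it invokes the reparameterization property $P(\x)\propto\prod_R b_R(\x_R)^{\kappa_R}$ together with the internal Bethe factorization of each cycle belief $b_c$, and observes that for a trivial link $\ell$ contained in a single cycle the resulting identity $\psi_\ell^{(0)}(\x_\ell)=f(x_i)g(x_j)\,b_\ell^c(\x_\ell)$ forces $b_\ell^c(x_i,x_j)=b_i^c(x_i)b_j^c(x_j)$, so the cycle belief is ``trivial'' and its contribution to the factorized measure collapses onto the link and vertex factors of the reduced graph; (ii) then follows by iteration. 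You instead work directly with the constrained variational problem: you eliminate the zero-counting-number link region, solve the inner maximum-entropy problem over $p_c$ at fixed path marginals (the Markov-chain product), and verify by explicit counting-number bookkeeping that the free energies on $\G$ and on the reduced graph agree on corresponding feasible sets. Your version buys a slightly stronger statement --- equivalence of the two variational problems and of their minimizers, not merely of the factorized measures at a fixed point --- and it is self-contained, in that it does not presuppose the reparameterization identity or convergence of GBP. It is also more explicit on two points the paper glosses over: the reverse-order processing of the WFCB, which is what guarantees that at each step the private link lies in exactly one surviving cycle, and the distinction between the spanning tree $\T_0$ left after all deletions and the subtree $\T$ of non-trivial links. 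The paper's argument is shorter but leans on the fixed-point structure established earlier in the text.
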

\begin{proof}
(ii) is the direct consequence of (i) by induction.
See Appendix~\ref{app:prop2}.
\end{proof}
As already stated in Proposition~\ref{prop:dtree}, GCBP is exact when the dual graph $\G^\star$ and henceforth the MFG are acyclic.
It could be tempting to push the logic to the end and try to impose a ``dual-tree robust''  condition for the cycle basis, i.e. that GCBP be exact 
if there exists a cycle basis of $\G$ s.t. $\G^\star$ be singly connected. Clearly this is a dead end, as can already be
seen by considering the simple example of a planar graph:
the natural cycle basis given by the faces of the graph cannot fulfilled such property, when all links at the border of the graph are non-trivial.
Nevertheless, let us simply notice that in the case where the underlying graph of non trivial links noted $\T_2$ 
has an acyclic dual graph $\T_2^\star$, we have the following
\begin{prop}
GCBP will converge to the exact fixed point if\\
(i) the cycle basis has for subset a cycle basis of $\T_2$,\\
(ii) the complementary set of cycles defines a graph for which it is a WFCB based on trivial links.
\end{prop}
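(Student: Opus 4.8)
The plan is to combine the two structural reduction results already at hand — Proposition~\ref{prop:dtree} (acyclicity of $\G^\star$ implies exactness of~(\ref{eq:gbp})) and Proposition~\ref{prop:TR} (trivial links pertaining to single cycles can be stripped off, reducing the CVM approximation to that on a smaller graph, and a WFCB of trivial links collapses the whole thing to the Bethe measure on the remaining tree). The strategy is a two-stage collapse of the region graph: first use hypothesis~(ii) to eliminate, one cycle at a time, all the cycles built from trivial links, and then argue that what survives is exactly the CVM approximation attached to a cycle basis of $\T_2$, whose dual graph $\T_2^\star$ is acyclic by assumption, so Proposition~\ref{prop:dtree} finishes the job.

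First I would make precise the decomposition of the cycle basis $\Cy$ induced by the hypotheses: by~(i) there is a subset $\Cy_2\subset\Cy$ which is a cycle basis of the subgraph $\T_2$ of non-trivial links, and by~(ii) the complementary set $\Cy\setminus\Cy_2$, together with the trivial links, is a WFCB in the sense defined just before Proposition~\ref{prop:TR} — i.e.\ there is an ordering of the complementary cycles such that each introduces a trivial link absent from all earlier cycles (and absent from all of $\Cy_2$, since those cycles live in $\T_2$). I would then process the complementary cycles in the \emph{reverse} of that ordering: the last cycle $c$ contains a trivial link $\ell_c$ belonging to no other basis cycle at all, so Proposition~\ref{prop:TR}(i) applies verbatim and removes $(c,\ell_c)$ from the region graph without changing the factorized measure~(\ref{eq:gbp}); iterating, after $|\Cy\setminus\Cy_2|$ steps every complementary cycle and its private trivial link have been discarded, exactly as in the inductive proof of Proposition~\ref{prop:TR}(ii), and the reduced CVM approximation is the one built on the graph $\G'$ obtained from $\G$ by deleting those trivial links, with cycle basis $\Cy_2$.

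Next I would identify $\G'$: deleting precisely the trivial links used by the complementary cycles leaves a graph whose cycle space is spanned by $\Cy_2$, hence whose cyclomatic number is $|\Cy_2|=|\E(\T_2)|-|\V(\T_2)|+1$; the remaining trivial links, if any, are bridges and contribute nothing to cycles. So the relevant loop structure of $\G'$ is that of $\T_2$, and its dual graph is (up to the irrelevant bridge-subtrees, which only add tree-like pendants and no dual cycles) $\T_2^\star$, which is acyclic by hypothesis. Proposition~\ref{prop:dtree} then gives that the CVM factorization on $\G'$ is exact, i.e.\ reproduces the true marginals; pulling this back through the chain of reductions (each of which preserved the measure) shows~(\ref{eq:gbp}) on $\G$ is exact, and since GCBP on the MFG is the fixed-point solver for this variational problem and the MFG has the same number of independent loops as $\G^\star$ — which here reduces to $\T_2^\star$, a forest — GCBP converges to that exact fixed point.

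The main obstacle I anticipate is the bookkeeping in the second stage: showing cleanly that after the reductions the surviving region graph is \emph{literally} the canonical cycle-based CVM on a graph with dual tree, and not merely ``something close to it.'' One has to check that removing a private trivial link and its cycle does not secretly disconnect a vertex region or force a recomputation of a counting number in a way that spoils the correspondence $\G^\star\!\leadsto\!\T_2^\star$ — in particular that the bridge-subtree pieces of $\V_t^\star$ stay tree-like throughout. This is exactly the kind of statement Proposition~\ref{prop:TR}(i) is designed to control, so I expect it goes through, but it is where the argument needs care rather than a one-line appeal; the rest is assembling Propositions~\ref{prop:dtree} and~\ref{prop:TR} in the right order.
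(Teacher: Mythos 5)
Your proposal is correct and follows essentially the same route as the paper: the paper's (one-sentence) proof likewise applies the reduction property (i) of Proposition~\ref{prop:TR} iteratively to the complementary set of cycles until only the core subgraph $\T_2$ remains, and then invokes exactness of GCBP there via the acyclicity of $\T_2^\star$ and Proposition~\ref{prop:dtree}. Your elaboration of the reverse WFCB ordering and the observation that a trivial link cannot belong to any cycle of $\Cy_2$ is precisely the bookkeeping the paper leaves implicit.
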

\begin{proof}
The argument is the same as before, applying the reduction property (i) of the preceding Proposition to 
the complementary set of cycles, until reaching the core sub-graph $\T_2$, for which GCBP is exact.  
\end{proof}
TR cycle basis are easily identified in special cases like planar or complete graph~\cite{GeWe},
but searching for such a basis in general is difficult, its existence being not always guaranteed.  
Instead there is yet another feature that could be even more desirable, namely 
that the cycle basis be such that the number of independent dual cycles, 
i.e. the cyclomatic number of $\G^\star$ be minimal.
Recall that GCBP is similar to an ordinary  BP on the MFG. Consequently, as for an ordinary BP, we expect these (dual) loops to be a 
source of problem.
As observed in~\cite{Fu2013}, the dual cyclomatic number depends on the 
sum of cycle sizes noted $|c|$:
\[
C(\G^\star) = \sum_{c=1}^{C(\G)}|c|-C(\G)-\vert\E\vert+\PP(\G^\star),
\] 
with $\PP(\G^\star)$ the number of connected components of $\G^\star$.
As a result, a good choice for the cycle basis could be the minimal cycle basis (MCB) for which polynomial time 
algorithms exist~\cite{Horton}. Furthermore if one wants to remain close to the TR prescription, one could even search for 
a minimal WFCB, which is an  APX-hard problem but for which efficient heuristic do exist~\cite{Rizzi09}.

\subsection{Heuristic algorithm}
Exact algorithms for solving the MCB problem have a polynomial time complexity, scaling typically like $O(NL^2)$ up to logarithmic 
corrections~\cite{KaLiMeRo}. Making use of these would completely
spoil the efficiency of GCBP, which main expected virtue is to scale linearly with systems size. We have  therefore to
resort to some approximate procedure. It is guided by the empirical assumption that most important loops to 
be taken care of are the smallest ones. The main steps of the method are the following:
\begin{itemize}
\item (i) build a subset of candidate cycles which contains most important ones. This step can be made linear with system size for sparse graphs with 
bounded degree $d_\text{max}$; typically $O(N d_\text{max}^n)$ for finding cycles with sizes $\le n$.
\item (ii) complete this set in order to have a complete set containing the MCB. This step can be done exactly 
in $O(NL)$ time complexity~\cite{KaLiMeRo}.
\item (iii)  Extract an independent set of shortest sizes. Exact methods use typically Gaussian elimination which is the main source of
time consuming.
\end{itemize}
This strategy is basically the one which is followed by the most efficient exact algorithms. 
In order not to be a limiting speed factor for GCBP steps (ii) and (iii) have to be approximated.
Note that step (ii) is not mandatory. Since the goal is to take into account 
most important loop corrections, then an independent set of short cycles, not necessarily complete can make it.
Concerning step (iii) we replace the Gaussian elimination procedure by an approximate one which additional 
virtue is to respect as much as possible the WFCB criteria explained in the previous section.
Our algorithm goes as follows:
\begin{itemize}
\item (S0) Initialization: weight all the links with the number $n$ of cycles in the candidate set they belong to and extract w.r.t. these weights  
a maximum spanning tree from  $\G$ called $\G_0$. Create a double ordered list $\{c_0(n,s)\}$ of 
candidate cycles indexed by their number $n$ of links not already present in $\G_0$ and their sizes $s$. Create an empty list 
of cycle elements $B_0$.
\item (S1) cycle selection: At step $t$ select in $c_t$ the cycle $c$ with smallest $n$ and then with smallest size $s$ and update 
$B_{t+1} \longleftarrow B_t+\{c\}$. 
\item (S2) update $(c_t,\G_t)\longrightarrow (c_{t+1},\G_{t+1})$:
\begin{itemize}
\item if $n=1$: insert the corresponding link into $G_t$ to obtain $G_{t+1}$ and update $c_t$ in $c_{t+1}$. All cycles with $n=0$ have a linear decomposition in $B_{t+1}$ 
and are eliminated.  
\item if $n>1$: insert one of the $n$ free links of $c$ into $G_t$ to obtain $G_{t+1}$. Update $c_t$ in $c_{t+1}$ as if all the $n$ links where selected. For each 
of the $n-1$ non-selected  links of $c$ create a new cycle by joining this link to the path on $G_{t}$ connecting its two ends point, 
using a Dijkstra algorithm\footnote{This ensures the independence of these new cycles among each others and with $B_{t+1}$}. 
Insert these new cycles into $c_{t+1}$.  
\end{itemize}
if $c_{t+1}\ne\emptyset $ go back to (S1) else exit().
\end{itemize}
Note that if by chance the new added cycle at each step corresponds to $n=1$ we would get a WFCB. The procedure followed 
in the case $n>1$ is there to ensure that we get a complete set at the end. As already said this is not mandatory in 
practice, so if this constraint is relaxed, then the $n$ links can be directly inserted into $G_t$.

\subsection{Cycle basis cleaning}\label{sec:cleaning}
Once a cycle basis has been obtained some adjustments have to be performed to cope with GCBP. 
First the basis can be optimized further by a local random greedy shuffling procedure, which consists in to 
look for combination of pairs of cycles sharing some links, from which smaller cycles can be generated  
(see Figure~\ref{fig:shuffle}).
\begin{figure}[ht]
\centerline{\resizebox*{\textwidth}{!}{\input{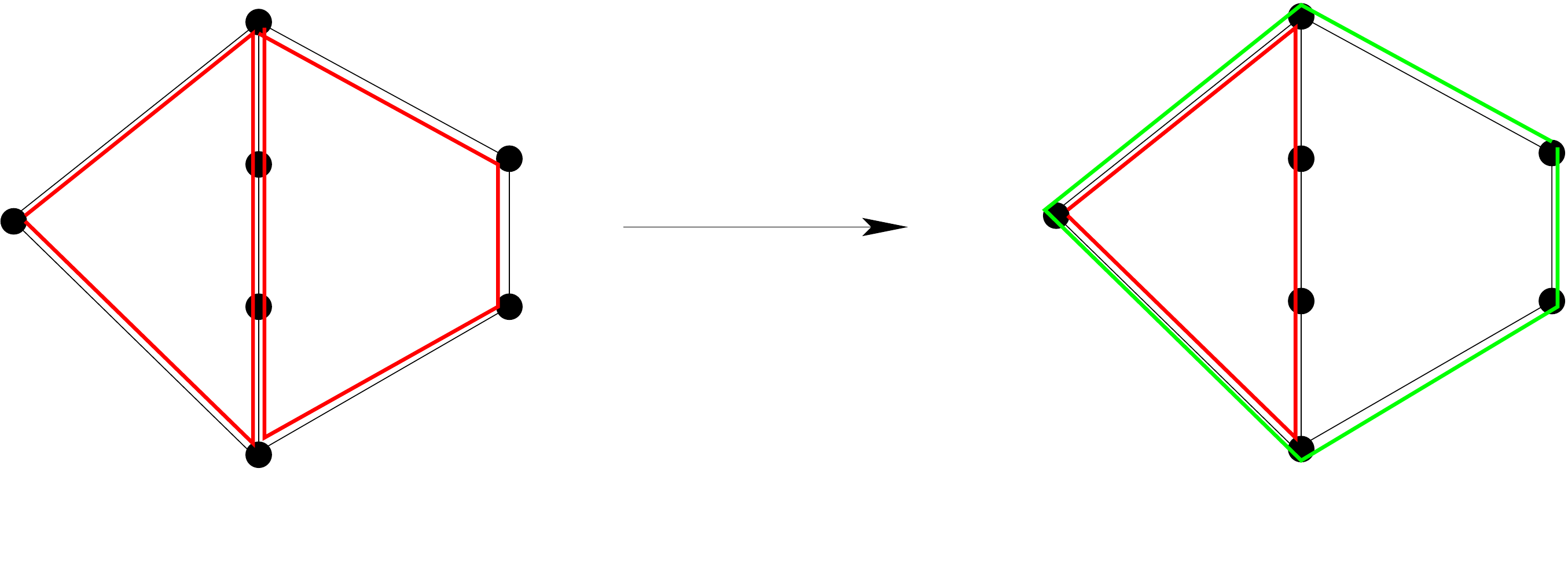_t}}}
\caption{\label{fig:shuffle} Example of cycle combinations leading to smaller cycle basis.}
\end{figure}
Secondly, as already stated in the MFG prescriptions any pair of cycles must have at most one single link in common.
Note in passing that this requirement seems actually difficult if not impossible in general 
to conciliate with the search for TR cycle basis advocated in~\cite{GeWe}. In contrary the smaller the aggregated cycle's size is,
the less cleaning is to be expected. By cleaning we mean the operation shown on Figure~\ref{fig:dual_graph}. This consists in 
to add one link relating the two ends of a path common to two or more cycles and formed by at least two links.  
In this operation a new cycle composed of this path and of the new added link is created which, when combined with the other cycles
containing that path leaves all these cycles intersect on this single link. This cleaning operation is done greedily  
by treating in order the intersection paths with largest sizes until intersections composed of one single link remain. 

Finally in some cases, cycles remain which have non-connected intersection with other cycles. This kind 
of situation occur sometimes but rarely, so in practice the adopted cleaning procedure consists simply to
discard the largest cycle involved in such pathological intersection.

As we observed in practice, these cleaning operations take a small if not negligible part in the overall computation
time needed to determine the cycle basis. The complete workflow is shown on the example of Figure~\ref{fig:bip} leading to 
the MFG starting from a bipartite graph.
\begin{figure}[ht]
\centerline{
\includegraphics[width=\textwidth]{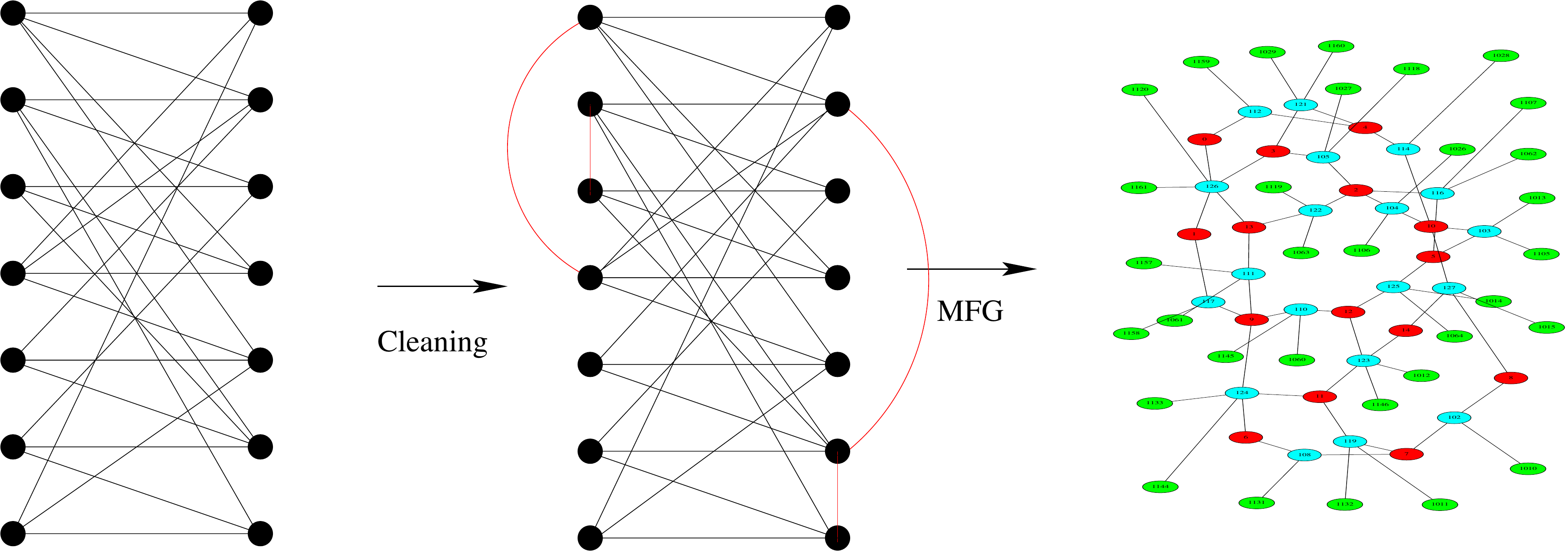}}
\caption{\label{fig:bip} Example of $7+7$ regular bipartite graph of mean connectivity $3.4$,  and corresponding mixed factor graph, 
with $c$-nodes, $\ell$-nodes and $v^\star$-nodes  colored respectively in red, blue and green. $v$-nodes associated to bridges are
absent on this example. $4$ auxiliary links (in red on the middle panel) have been inserted in order to ensure single link intersection between cycles as 
explained in Section~\ref{sec:cleaning}.}
\end{figure}

\section{Loop corrections: $c$-node to $\ell$-node messages}\label{sec:clmsg}
\subsection{General case}
We exploit now the specific structure of the cycle-based region definition to propose an efficient method for computing 
the messages~(\ref{eq:mcl}), with a cost at most linear w.r.t. the size of the cycles per message.
$c$-node to $\ell$-node messages amount to compute,
\begin{equation}
m_{c\to\ell}(\x_\ell) = \frac{p_\ell^c(\x_\ell)}{\psi_\ell(\x_\ell)n_{\ell\to c}(\x_\ell)\prod_{v\in\ell}n_{v\to\ell}(x_v)},\label{eq:mcloop}
\end{equation}
where
\[
p_\ell^c(\x_\ell) \egaldef \sum_{\x_c\backslash\x_\ell} p_c(\x_c).
\]
is the pairwise marginal associated to any link $\ell\in c$, obtained from distribution~(\ref{eq:pc}).
We wish to bypass the summation over $\x_c\backslash \x_\ell$, which has an exponential cost w.r.t. the size of the loop.    
Variables $x\in\{1,\ldots q\}$ are assumed to have $q$ possible states and $p_c$ is a product of pairwise
factors along the cycle
\[
p_c(\x_c) = \prod_{\ell\in c}\psi_\ell^c(\x_\ell).
\]
On the ring 
geometry, the partition function as well as any correlation function can be expressed as
the trace of a product of transition matrices:
\[
Z_\text{ring} = \Tr\bigl(\prod_{\ell=1}^n M^{(\ell)}\bigr),
\]
where $M^{(\ell)}$ is a $q^2$ matrix with elements given by
\[
M_{xy}^{(\ell)} = \psi_\ell^c(x,y)
\]
Upon introducing the following matrices 
\[
U \egaldef \prod_{i=1}^n M^{(i)},\qquad 
U^{(i)} \egaldef \prod_{j=i}^n M^{(j)}\prod_{j=1}^{i-1} M^{(j)},\qquad
V^{(i)} \egaldef \prod_{j=i+1}^n M^{(j)}\prod_{j=1}^{i-1} M^{(j)},
\]
the expression for the exact marginals are given by  
\begin{align*}
p_i^c(x) &= \frac{1}{Z_\text{ring}}\Tr\bigl(\delta_{xx}U^{(i)}\bigr)\\[0.2cm]
p_i^c(x,y) &= \frac{1}{Z_\text{ring}}\Tr\bigl(\delta_{xy}V^{(i)}\bigr).
\end{align*}
In this form the cost for computing each $c$-node to $\ell$-node message is 
$O\bigl(nq^3)$.
As shown in~\cite{Weiss}, running BP on a single cycle always converges and there is a linear
relation between single variable beliefs and the exact marginals given by the largest eigenvalue of some product of matrices
\begin{figure}[ht]
\centerline{\resizebox*{0.5\textwidth}{!}{\input{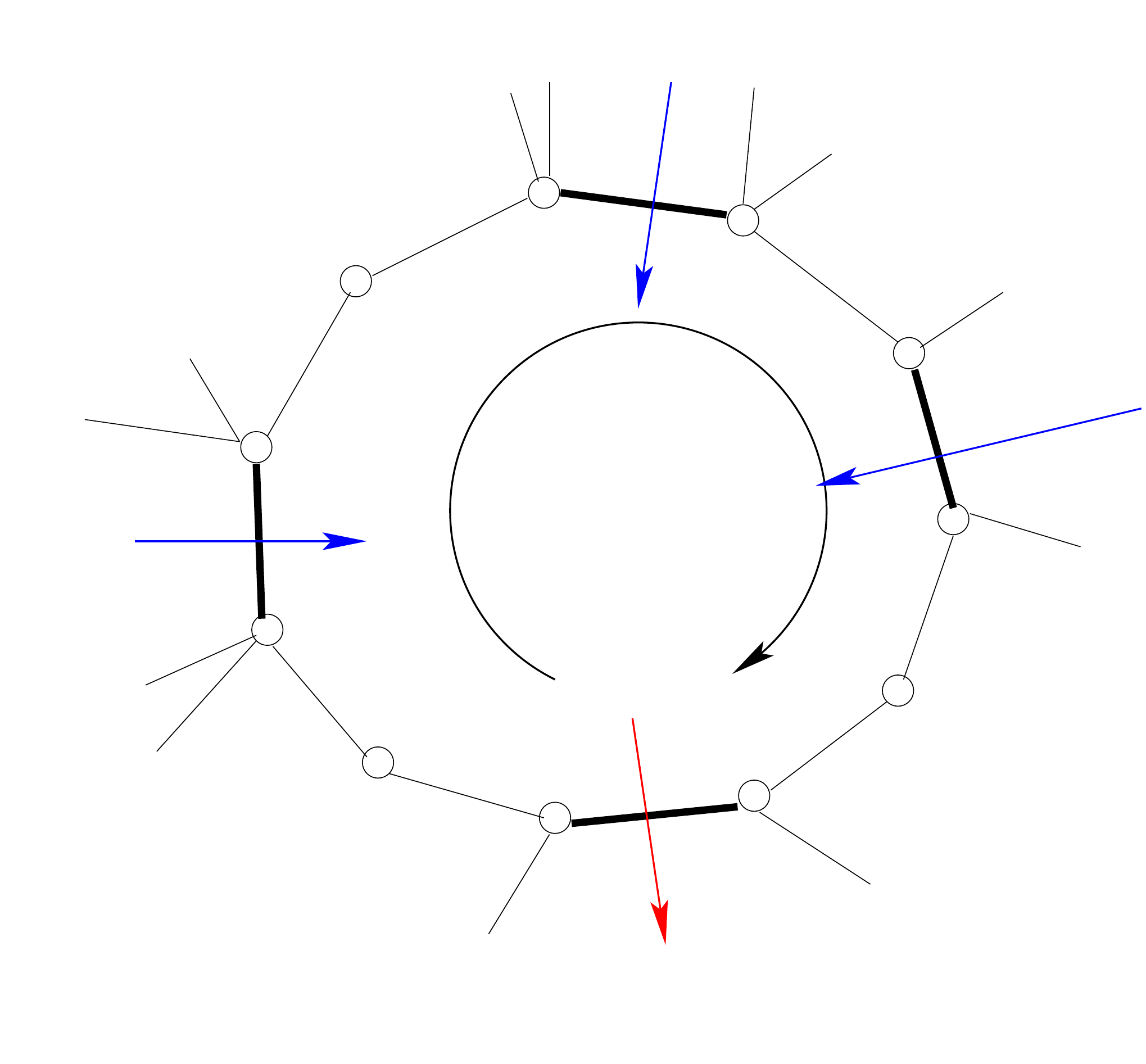_t}}}
\caption{\label{fig:ctol} Message exchange at the cycle level.}
\end{figure}
taken from  the factors along the loop. In fact, somewhat simpler relations can be established, valid also for
pairwise marginals, by applying to a single loop 
the general loop corrections~\cite{chertkov1,SuWaWi} expansion to BP.
First  factorize $p_c(\x_c)$  with help of BP,
\begin{equation}
p_c(\x_c) = \frac{1}{Z_\text{BP}} \prod_{i=1}^n \frac{b_i^c(x_i,x_{i+1})}{b_i^c(x_i)}\label{eq:bp_fact}
\end{equation}
by means of a set of single and pairwise beliefs 
$b_i^c(x_i)$ and $b_i^c(x_i,x_{i+1})$, where $i=1,\ldots n$ indexes the variables along the cycle. 
We define the following $q^2$ matrices
in operator form:
\[
B_{xy}^{(i)} \egaldef \frac{b_i^c(x,y)-b_i^c(x)b_{i+1}^c(y)}{b_i^c(x)},
\]
and associated product of matrices
\begin{align}
U &\egaldef \prod_{i=1}^n B^{i},\qquad 
U^{(i)} \egaldef \prod_{j=i}^n B^{j}\prod_{j=1}^{i-1} B^{j},\label{eq:U}\\[0.2cm]
V^{(i)} &\egaldef \prod_{j=i+1}^n B^{j}\prod_{j=1}^{i-1} B^{j}.\nonumber
\end{align}
\begin{prop}\label{prop:gloop}
The relations between beliefs and exact marginals are then given by 
\begin{align*}
p_i^c(x) &= \frac{b_i^c(x)+U_{xx}^{(i)}}{Z_\text{BP}}\qquad\text{with}\qquad
Z_\text{BP} = 1+ \Tr(U)\\[0.2cm]
p_i^c(x,y) &= \frac{b_i^c(x,y)+V_{yx}^{(i)}b_i(x)+B_{xy}^{(i)}V_{yx}^{(i)}}{Z_\text{BP}}
\end{align*}
\end{prop}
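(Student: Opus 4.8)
The plan is to specialize the loop‑calculus expansion of~\cite{chertkov1,SuWaWi} to the ring geometry, where it degenerates to only a handful of surviving terms. Starting from the reparametrization~(\ref{eq:bp_fact}) and rewriting each edge factor as $b_i^c(x_i,x_{i+1})/b_i^c(x_i)=b_{i+1}^c(x_{i+1})+B^{(i)}_{x_i x_{i+1}}$, I would expand the product over the $n$ edges of the cycle, edge $i$ joining variables $i$ and $i+1$ (indices modulo $n$):
\[
Z_\text{BP}\,p_c(\x_c)=\prod_{i=1}^n\bigl(b_{i+1}^c(x_{i+1})+B^{(i)}_{x_i x_{i+1}}\bigr)=\sum_{S\subseteq\{1,\ldots,n\}}\ \prod_{i\in S}B^{(i)}_{x_i x_{i+1}}\ \prod_{i\notin S}b_{i+1}^c(x_{i+1}).
\]

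The two structural facts I would use are the orthogonality relations, immediate consequences of the consistency of the BP beliefs ($\sum_{y}b_i^c(x,y)=b_i^c(x)$ and $\sum_{x}b_i^c(x,y)=b_{i+1}^c(y)$):
\[
\sum_{y}B^{(i)}_{xy}=0,\qquad \sum_{x}b_i^c(x)\,B^{(i)}_{xy}=0.
\]
When one sums the expansion over an \emph{unclamped} variable $x_k$, the factors that involve $x_k$ are: from edge $k-1$, either $B^{(k-1)}$ (if $k-1\in S$) or the belief $b_k^c(x_k)$ (if $k-1\notin S$); and from edge $k$, the matrix $B^{(k)}$, present only if $k\in S$. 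Summing over $x_k$ and invoking the two relations above, the contribution vanishes unless $k-1$ and $k$ lie both in $S$ or both outside $S$: in the first case the sum leaves the matrix product $B^{(k-1)}B^{(k)}$, in the second it leaves $1$ and $x_k$ disappears. Hence, once all unclamped variables are summed out, the only surviving $S$ are those in which every unclamped vertex has even degree, i.e. the even subgraphs of the cycle relative to the clamped vertices.

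It then remains to enumerate these even subgraphs. For $Z_\text{BP}$ nothing is clamped, and the only even subgraphs of a simple cycle are $S=\emptyset$, contributing $\sum_{\x_c}\prod_i b_{i+1}^c(x_{i+1})=1$, and $S=\{1,\ldots,n\}$, contributing $\sum_{\x_c}\prod_i B^{(i)}_{x_ix_{i+1}}=\Tr(U)$, so $Z_\text{BP}=1+\Tr(U)$. For $p_i^c(x)$ only $x_i$ is clamped, and the chain of parity constraints around the cycle still forces all edges in or all out, so $S=\emptyset$ (contributing $b_i^c(x)$) and $S=\{1,\ldots,n\}$ (contributing the cyclic product $U^{(i)}_{xx}$) are again the only survivors. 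For $p_i^c(x,y)$ both $x_i$ and $x_{i+1}$ are clamped, so now the even subgraphs are $\emptyset$, $\{i\}$, $\{1,\ldots,n\}\backslash\{i\}$ and $\{1,\ldots,n\}$; the first two combine through $b_{i+1}^c(y)+B^{(i)}_{xy}=b_i^c(x,y)/b_i^c(x)$ into $b_i^c(x,y)$, the third yields a term proportional to $V^{(i)}_{yx}$ times a single‑variable belief, and the last yields $B^{(i)}_{xy}V^{(i)}_{yx}$. Reading off the three contributions, dividing by $Z_\text{BP}$, and identifying the cyclic products with $U^{(i)},V^{(i)}$ of~(\ref{eq:U}) gives the stated formulas; the relation $\sum_y p_i^c(x,y)=p_i^c(x)$ is then a one‑line check using the same orthogonality relations.

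The computational part — matching each cyclic product to the precise multiplication order in~(\ref{eq:U}), and using cyclic invariance of the trace so that $\Tr(U)=\Tr(U^{(i)})$ and the normalizations agree — is routine bookkeeping. The step I would be most careful with is the combinatorial claim that underpins everything: on a \emph{single} loop the only edge‑subsets with the prescribed parity at all but one, respectively all but two adjacent, vertices are the two, respectively four, listed above. This is exactly where the ring structure of $c$ enters, and it amounts to the statement that the loop series of a cycle truncates to the bare term plus the single full‑loop term.
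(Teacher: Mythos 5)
Your proposal is correct and is essentially the paper's own proof: both expand the BP reparameterization~(\ref{eq:bp_fact}) of $p_c$ into a sum over bond subsets and use the two zero-sum identities $\sum_y B^{(i)}_{xy}=0$ and $\sum_x b_i^c(x)B^{(i)}_{xy}=0$ to show that on a ring only the empty and full bond sets survive for $Z_\text{BP}$ and $p_i^c(x)$, with the two extra subsets $\{i\}$ and its complement appearing for $p_i^c(x,y)$. The one step you defer to ``routine bookkeeping'' --- identifying which single-variable belief multiplies $V^{(i)}_{yx}$ --- is worth carrying out, since the expansion actually produces $b_{i+1}^c(y)$ there (as required for $\sum_y p_i^c(x,y)=p_i^c(x)$ via the left zero mode of $V^{(i)}$), whereas the proposition as printed shows $b_i(x)$.
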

\begin{proof}
See Appendix~\ref{app:gloop} for details.
\end{proof}
$c$-nodes messages~\ref{eq:mcl} can then be computed from these exact marginals.
From these expressions, we see that the cost for computing each message is still $O\bigl(nq^3)$.
The only benefit of using the BP factorization at this point resides in the fact that 
$B^{(j)}$ and therefore $U^{(i)}$ and $V^{(i)}$ have an obvious zero eigenmode:
\[
\sum_y B_{xy}^{(j)}b_i^c(y) =0.
\]
Trying to find the other modes is not advantageous in general except if some symmetries are present 
or when $q$ is small. In particular for the binary case ($q=2$) we end up with a scalar problem for expressing loop 
corrections, as is detailed in the next section.  

\subsection{Binary case}
For binary variables this relationship can be 
made even more explicit as we show now.
Using of the standard Ising spin notation, each node $i\in 0,\ldots n-1$ 
is associated to a binary variables $s_i\in\{-1,1\}$ and  
the joint measure of $\s\egaldef \{s_1,\ldots,s_n\}$ is exponential and given by 
\begin{equation}\label{eq:jointIsing}
P_c(\s) = \frac{1}{Z_c}\exp\bigl(\sum_{i=1}^n h_i^cs_i+\sum_{i=1}^{n-1} J_i^c s_is_{i+1}\bigr),
\end{equation}
where $h_i^c\in{\mathbb R}$ is the local field exerted on variable $i$ and $J_i^c\in{\mathbb R}$ 
denotes the coupling between $s_i$ and $s_{i+1}$. 
Running BP on this measure leads to the following factorization
of the joint measure:
\begin{align}\label{eq:cbpfp}
P(\s) = \frac{1}{Z_\text{BP}}\prod_{i=1}^n\frac{b_i^c(s_i,s_{i+1})}{b_i^c(s_i)b_{i+1}^c(s_{i+1})}\prod_{i=1}^n b_i^c(s_i),
\end{align}
where the $b_i^c(\cdot)$ and $b_i^c(\cdot,\cdot)$ are the single and pairwise approximate marginals delivered by BP.
These can be parameterized as follows 
\begin{align}
b_i^c(s_i) &= \frac{1}{2}(1+\bm_is_i),\label{eq:bi}\\[0.2cm]
b_i^c(s_i,s_{i+1}) &= \frac{1}{4}(1+\bm_is_i+\bm_j s_j +(\bm_i \bm_j+\bchi_i)s_is_j),\label{eq:bij}
\end{align}
where $m_i\egaldef {\mathbb E}(s_i)$  represents the ``magnetization'' of spin $s_i$
and $\chi_i\egaldef {\mathbb E}(s_is_{i+1})-{\mathbb E}(s_i){\mathbb E}(s_{i+1})$ is the covariance, also named ``susceptibility'' 
coefficient, between $s_i$ and $s_{i+1}$. We use the sign $\breve{}$ to denote a BP 
estimate, which is to be distinguished it from the exact value. 
The relation between BP values and  exact ones can be made explicit in the following form.
\begin{prop}
Let 
\begin{equation}\label{eq:Q}
Q \egaldef \prod_{i=1}^n\frac{\bchi_i}{\sqrt{(1-\bm_i^2)(1-\bm_{i+1}^2)}},
\end{equation}
then the BP normalization constant,
the exact magnetization and susceptibility coefficients read:
\begin{align}
Z_\text{BP} &= 1+Q,\qquad\label{eq:loop0}\\[0.2cm]
m_i &=  \frac{1-Q}{1+Q}\ \bm_{i} \label{eq:loop1}\\[0.2cm]
\chi_i &= \frac{\bchi_i}{1+Q}
+\frac{Q}{1+Q}\Bigl(\frac{(1-\bm_i^2)(1-\bm_{i+1}^2)}{\bchi_i}
+4\frac{\bm_i\bm_{i+1}}{1+Q}\Bigr).\label{eq:loop2} 
\end{align}
\end{prop}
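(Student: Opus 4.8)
The plan is to specialize the general loop-correction formulas of Proposition~\ref{prop:gloop} to the binary case, where the matrices $B^{(i)}$ become $2\times 2$ and everything reduces to scalars. First I would work out the explicit form of $B^{(i)}_{xy}$ using the parameterization~(\ref{eq:bi},\ref{eq:bij}): writing $s,s'\in\{-1,1\}$ for the two states, one has $b_i^c(s_i,s_{i+1})-b_i^c(s_i)b_{i+1}^c(s_{i+1}) = \tfrac14\bchi_i s_i s_{i+1}$, so that
\[
B^{(i)}_{s s'} = \frac{\bchi_i\, s\, s'}{2(1+\bm_i s)}.
\]
The key structural observation is that this matrix has rank one: it factorizes as $B^{(i)}_{ss'} = u^{(i)}_s\, w^{(i)}_{s'}$ with $u^{(i)}_s = s/(2(1+\bm_i s))$ and $w^{(i)}_{s'} = \bchi_i s'$. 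Because each $B^{(i)}$ is rank one, any product $B^{(i)}B^{(i+1)}\cdots B^{(j)}$ is again rank one, and the full cyclic product $U=\prod_i B^{(i)}$ has a single nonzero eigenvalue equal to its trace. I would compute that trace by telescoping the scalar overlaps $\sum_{s'} w^{(i)}_{s'} u^{(i+1)}_{s'}$ around the ring.

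Next I would evaluate these overlaps. A direct computation gives $\sum_{s'} w^{(i)}_{s'} u^{(i+1)}_{s'} = \bchi_i \sum_{s'} \frac{s'\cdot s'}{2(1+\bm_{i+1}s')} = \bchi_i \cdot \tfrac12\bigl(\tfrac{1}{1+\bm_{i+1}}+\tfrac{1}{1-\bm_{i+1}}\bigr) = \frac{\bchi_i}{1-\bm_{i+1}^2}$, and the closing factor $\sum_s w^{(n)}_s u^{(1)}_s$ has the same shape. Collecting the chain, $\Tr(U) = \prod_{i=1}^n \frac{\bchi_i}{1-\bm_i^2}$ after reindexing — but one must be careful with the half-integer bookkeeping of which $\bm$ appears in the denominator; the symmetric way to write it is exactly $Q$ as defined in~(\ref{eq:Q}), since $\prod_i (1-\bm_i^2) = \prod_i \sqrt{(1-\bm_i^2)(1-\bm_{i+1}^2)}$ over the cycle. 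This yields $Z_\text{BP} = 1+\Tr(U) = 1+Q$, which is~(\ref{eq:loop0}).

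For the magnetizations I would apply $p_i^c(x) = (b_i^c(x)+U^{(i)}_{xx})/Z_\text{BP}$ from Proposition~\ref{prop:gloop}. Since $U^{(i)}$ is the cyclic product started at $i$, it is again rank one with the same nonzero eigenvalue $Q$, and its diagonal entries are $U^{(i)}_{ss} = u^{(i)}_s \cdot (\text{closing overlap}) \cdot w^{(i-1)}_s$-type expressions; carrying the scalar algebra through, one finds $U^{(i)}_{ss} = \tfrac12 Q\, s\, \bm_i /$ (something that cancels), so that $U^{(i)}_{ss} - U^{(i)}_{-s,-s}$ isolates the magnetization correction and gives $m_i s = \frac{\tfrac12(1+\bm_i s) \text{ contributions}}{1+Q}$, collapsing to $m_i = \frac{1-Q}{1+Q}\bm_i$, which is~(\ref{eq:loop1}). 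Finally, for the susceptibility I would use the third formula of Proposition~\ref{prop:gloop}, $p_i^c(x,y) = (b_i^c(x,y)+V^{(i)}_{yx}b_i(x)+B^{(i)}_{xy}V^{(i)}_{yx})/Z_\text{BP}$, extract $\chi_i = \sum_{s,s'} s s' p_i^c(s,s') - m_i m_{i+1}$, substitute the rank-one form of $V^{(i)}$ (whose nonzero eigenvalue is $Q$ divided by the one local overlap that is skipped, i.e.\ $Q(1-\bm_i^2\text{-type})/\bchi_i$), and simplify. The three terms in the numerator produce respectively the $\bchi_i/(1+Q)$ term, the $(1-\bm_i^2)(1-\bm_{i+1}^2)/\bchi_i$ term weighted by $Q/(1+Q)$, and — together with the subtraction of $m_i m_{i+1} = \frac{(1-Q)^2}{(1+Q)^2}\bm_i\bm_{i+1}$ against the $\bm_i\bm_{i+1}$ piece of $b_i^c(x,y)$ — the $4\bm_i\bm_{i+1}/(1+Q)$ correction, yielding~(\ref{eq:loop2}).

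\textbf{Main obstacle.} The conceptual content is light once rank-one factorization is exploited; the real work — and the easiest place to slip — is the index bookkeeping of the partial products $U^{(i)}$ and especially $V^{(i)}$, keeping straight which single local factor is omitted and how the asymmetric denominators $1/(1+\bm_i s)$ conspire with the symmetric square-root normalization in the definition of $Q$. The susceptibility identity~(\ref{eq:loop2}) in particular requires tracking three numerator contributions plus the $m_i m_{i+1}$ subtraction without sign or normalization errors; I would organize this by first proving the rank-one/eigenvalue lemma for $U^{(i)}$ and $V^{(i)}$ abstractly, then doing each of the three scalar substitutions separately.
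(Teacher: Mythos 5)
Your proposal is correct and takes essentially the same route as the paper: the paper's own proof simply invokes the identity $\frac{b_i(s_i,s_{i+1})}{b_i(s_i)b_{i+1}(s_{i+1})} = 1+\bchi_i\frac{(s_i-\bm_i)(s_{i+1}-\bm_{i+1})}{(1-\bm_i^2)(1-\bm_{i+1}^2)}$ --- which is exactly the rank-one structure of $B^{(i)}$ that you exploit, merely written in symmetrized form --- and then repeats the bond expansion of Proposition~\ref{prop:gloop}. One caution: the factor multiplying $V^{(i)}_{yx}$ in the pairwise formula of Proposition~\ref{prop:gloop} must be read as $b_{i+1}(y)$ rather than $b_i(x)$ (your claimed contribution $(1-\bm_i^2)(1-\bm_{i+1}^2)/\bchi_i$ implicitly assumes this), since taking $b_i(x)$ literally breaks both the normalization $\sum_{s,s'}p_i^c(s,s')=1$ and the final form of~(\ref{eq:loop2}).
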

\begin{proof}
The proof is based on the following identity
\[
\frac{b_i(s_i,s_{i+1})}{b_i(s_i)b_{i+1}(s_{i+1})} = 1+\bchi_i\frac{(s_i-\bm_i)(s_{i+1}-\bm_{i+1})}{(1-\bm_i^2)(1-\bm_{i+1}^2)},
\]
and follows the same lines as the proof of Proposition~\ref{prop:gloop}.
\end{proof}
Section~\ref{sec:KIC} will be based on these identities.
The corresponding  loop corrected marginals  $p_i$ and $p_{ii+1}$ are expressed from the loop corrected
quantities $(m_i,m_{i+1},\chi_i)$ through the same relations 
(\ref{eq:bi}) and (\ref{eq:bij}) and allow one to obtain all messages~\ref{eq:mcl} send by the $c$-node
at once from the BP beliefs, so the cost per-message in this special case is now $O(1)$ instead of $O(n)$ if there are $n$ 
messages to be sent. 

In addition to this slight but non-crucial reduction in computational cost 
is the scalar characterization in terms of $Q\in]-1,1]$ of the cycle which shows up. 
First from the matrix formulation~\ref{eq:Q}, $Q$ is the non-zero eigenvalue of $U$.
It is the product of ``BP correlations'' along the loop and characterizes
its strength.
\begin{itemize}
\item $Q\simeq 0$ corresponds to weak loop correction, BP is nearly exact.
\item $Q\to 1$ corresponds to a strongly correlated loop.
\item $Q\to -1$ corresponds to a strongly correlated frustrated loop. 
\end{itemize}

\subsection{Loop corrections to the Bethe Free Energy}
The formalism used previously suggests to reconsider the cycle based Kikuchi approximate free energy 
by rewriting it in an appealing form where loop correction are made more explicit. Indeed using the BP factorization 
of each independent cycle marginal~(\ref{eq:bp_fact}) yields the following decomposition of the entropy term for any pairwise 
MRF in terms of single and pairwise marginals $\{p_i,i\in\V\}$ and $\{p_\ell,\ell\in\E\}$ and associated cycle beliefs 
$\{b_i^c,(i,c)\in\V\times\Cy\}$ and $\{b_\ell^c,(\ell,c)\in\E\times\Cy\}$. Starting from the 
cluster expansion we  have:
\[
S_\text{Kikuchi} = \sum_{i\in\V} S_i +  \sum_{\ell\in\E}\Delta S_\ell + \sum_{c\in\Cy}\Delta S_c.
\]  
The first two terms represent the Bethe entropy, 
\[
S_\text{Bethe} = \sum_i S_i +  \Delta S_\ell, 
\]
as a sum of individual variables entropy $S_i$ corrected by mutual information of variables 
\[
-\Delta S_\ell = \sum_{\x_\ell} p_\ell(\x_\ell)\log\frac{p_\ell(\x_\ell)}{p_{\ell_1}(x_{\ell_1})p_{\ell_2}(x_{\ell_2})} \ge 0,
\]
counted for each link $\ell\in\E$. The corrections induced by each cycle $c$ has the following expression:
\begin{align}
\Delta S_c &= S_c - \sum_{i\in c} S_i -\sum_{\ell\in c}\Delta S_\ell\nonumber\\[0.2cm]
&= \log(Z_\text{BP}^c) - \sum_{i\in c} \dkl(p_i\Vert b_i^c ) + \sum_{\ell\in c} \dkl(p_\ell\Vert b_\ell^c),\label{eq:DSc}\\[0.2cm]
&=  \F_\text{Bethe}\bigl[p^c\Vert p^c].\nonumber
\end{align}
where $Z_\text{BP}^c$ is the normalizing factor of the BP factorization~(\ref{eq:bp_fact}) associated to cycle marginal distribution 
$p^c$. The cycle beliefs $b_i^c$ and $b_\ell^c$ are implicitly and uniquely determined from the $p_\ell$'s. 
$\F_\text{Bethe}$ is the Bethe approximation to the free energy functional:
\[
\F\bigl[p\Vert p_0] = D_\text{KL}(p\Vert p_0) + F_0,
\]
$F_0$ beign the free energy associated to $p_0$.
This has the following immediate consequence. Let us consider an auxiliary measure, build from the exact 
marginals:
\[
\tilde p^c(\x_c)  \egaldef \frac{1}{\tilde Z_\text{BP}^c} \frac{\prod_{\ell\in c}p_\ell(\x_\ell)}{\prod_{i\in c}p_i(x_i)}
\]
with normalization constant $\tilde Z_\text{BP}^c$. 
\begin{lem}  
\begin{equation}\label{eq:Fineq}
\log\bigl(Z_\text{BP}^c\bigr) \le \Delta S_c \le \log\bigl(\tilde Z_\text{BP}^c\bigr).
\end{equation}
\end{lem}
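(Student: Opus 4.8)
The plan is to obtain both inequalities as instances of the non-negativity of a Kullback--Leibler divergence on the single cycle $c$, exploiting that a cycle is a $2$-regular graph, so that a sum over the edges of a quantity evaluated at both endpoints equals twice the corresponding sum over the vertices.

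\emph{Lower bound.} I would start from the identity~(\ref{eq:DSc}),
\[
\Delta S_c = \log\bigl(Z_\text{BP}^c\bigr) - \sum_{i\in c}\dkl(p_i\Vert b_i^c) + \sum_{\ell\in c}\dkl(p_\ell\Vert b_\ell^c),
\]
so that it suffices to prove $\sum_{\ell\in c}\dkl(p_\ell\Vert b_\ell^c)\ge\sum_{i\in c}\dkl(p_i\Vert b_i^c)$. The BP beliefs on the cycle are pairwise consistent, i.e. $b_i^c$ is the $i$-marginal of $b_\ell^c$ for every $\ell\ni i$, and likewise $p_i$ is the $i$-marginal of $p_\ell$; hence monotonicity of the relative entropy under marginalization gives $\dkl(p_\ell\Vert b_\ell^c)\ge\dkl(p_i\Vert b_i^c)$ for each endpoint $i$ of $\ell$. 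Since the correspondence sending the edge $\{i,i+1\}$ to the vertex $i$ is a bijection between edges and vertices of the cycle, summing one such inequality per edge yields the desired bound, hence $\Delta S_c\ge\log Z_\text{BP}^c$.

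\emph{Upper bound.} Combining the first line of~(\ref{eq:DSc}) with the relation $\Delta S_\ell=S_\ell-S_{\ell_1}-S_{\ell_2}$ that follows from the definition of $\Delta S_\ell$, and using $2$-regularity once more, one obtains the closed form $\Delta S_c=S_c+\sum_{i\in c}S_i-\sum_{\ell\in c}S_\ell$. I would then expand $\dkl(p^c\Vert\tilde p^c)\ge0$. Because the edge and vertex marginals of the exact cycle law $p^c$ are precisely the $p_\ell$ and the $p_i$, the cross term $\mathbb{E}_{p^c}\bigl[\log\bigl(\prod_{\ell\in c}p_\ell/\prod_{i\in c}p_i\bigr)\bigr]$ equals $-\sum_\ell S_\ell+\sum_i S_i$, so that $\dkl(p^c\Vert\tilde p^c)=\log\bigl(\tilde Z_\text{BP}^c\bigr)-\bigl(S_c+\sum_{i\in c}S_i-\sum_{\ell\in c}S_\ell\bigr)=\log\bigl(\tilde Z_\text{BP}^c\bigr)-\Delta S_c$. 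Non-negativity then gives $\Delta S_c\le\log\tilde Z_\text{BP}^c$. Equivalently, this says that $\tilde p^c$ minimizes the free-energy functional attached to the factors $\prod_\ell p_\ell/\prod_i p_i$, while $p^c$ is only a competitor.

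The argument is short and amounts to bookkeeping once the counting numbers are tracked; the only point requiring care --- the main (mild) obstacle --- is the support/finiteness conditions: one must check $p^c(\x_c)>0\Rightarrow\tilde p^c(\x_c)>0$, which holds because every factor $p_\ell$ in the numerator dominates $p^c$, and, for the lower bound, that the cycle-level BP beliefs are strictly positive and mutually consistent so that the marginalization step and all the relative entropies involved are well defined.
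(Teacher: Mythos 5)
Your proof is correct. The upper bound coincides with the paper's argument: you expand $D_\text{KL}(p^c\Vert\tilde p^c)\ge 0$ and identify it with $\log\bigl(\tilde Z_\text{BP}^c\bigr)-\Delta S_c$ (the paper carries out the same computation with $\Delta S_c$ written in its Kullback--Leibler form from~(\ref{eq:DSc}) rather than your closed entropy form $S_c+\sum_i S_i-\sum_\ell S_\ell$, but these are identical on a cycle). For the lower bound you take a genuinely different route. The paper invokes the variational characterization of the unique BP fixed point on a loop, citing~\cite{Heskes4} (together with~\cite{Weiss}) for the fact that this fixed point is the global minimum of the Bethe free energy functional, whence $\F_\text{Bethe}\bigl[p^c\Vert p^c\bigr]\ge\log\bigl(Z_\text{BP}^c\bigr)$. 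You instead prove the required inequality $\sum_{\ell\in c}\dkl(p_\ell\Vert b_\ell^c)\ge\sum_{i\in c}\dkl(p_i\Vert b_i^c)$ directly, from monotonicity of relative entropy under marginalization applied edge by edge, pairing each edge of the cycle with one endpoint via the bijection $\{i,i+1\}\mapsto i$. This is more elementary and self-contained: it only needs the local consistency of the BP beliefs on the cycle and of the exact marginals $p_\ell$, $p_i$, not the global-optimality result for BP on a single loop; in exchange it leans explicitly on the $2$-regularity of the cycle, which the paper's argument uses only implicitly through the uniqueness of the fixed point on loop geometry. Your remark on positivity/support conditions is a legitimate point of care that the paper passes over in silence.
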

\begin{proof}
Recall that on the loop geometry BP has one single stable fixed point which corresponds to a global minimum 
of the approximate Bethe free energy functional~\cite{Heskes4}. Consequently, the minimum is obtained for 
$p=b$ in~(\ref{eq:DSc})  
\[
\F_\text{Bethe}\bigl[p^c\Vert p^c] \ge \log(Z_\text{BP}^c),  
\]
which proves the left hand side inequality. Next consider the following quantity:
\begin{align*}
D_\text{KL}(p^c\Vert\tilde p^c) &= \log\Bigl(\frac{\tilde Z_\text{BP}^c}{Z_\text{BP}^c}\Bigr)+ 
\sum_{i\in c} \dkl(p_i\Vert b_i^c ) - \sum_{\ell\in c} \dkl(p_\ell\Vert b_\ell^c)\\[0.2cm]
&= \log\bigl(\tilde Z_\text{BP}^c\bigr) - \Delta S_c \ge 0,
\end{align*}
since the Kullback-Liebler divergence is non-negative, we get the right hand side inequality of~(\ref{eq:Fineq}).
\end{proof}
As a consequence of~(\ref{eq:Fineq}), if the stochastic operator defined by~(\ref{eq:U}) has a positive trace then the loop correction 
has a counter effect to the Bethe correction $\Delta S_\ell$. 
In particular for binary variables in the ferromagnetic case, $\log(Z_\text{BP}^c) = \log\bigl(1+Q_c\bigr)$ with $Q_c\ge 0$, 
leading therefore to negative loop corrections to the Bethe free energy. 
Since the Kikuchi correction is exact in absence of dual loops, i.e. when $C_i^\star = 0,\ \forall i\in\V$,
we may expect that the correction is overestimated in presence of dual loops, i.e. that we should have a bounding of the free energy 
\begin{equation}\label{eq:Fbounds}
\F_\text{Kikuchi} \le \F \le \F_\text{Bethe},
\end{equation}
for ferromagnetic like systems,
when $\F_\text{Bethe}$ and $\F_\text{Kikuchi}$ are given in terms of the exact single and pairwise beliefs $\{p_i,i\in\V\}$ 
and $\{p_\ell,\ell\in\E\}$. Note that the inequality $F\le \F_\text{Bethe}$ only 
proved in some special ferromagnetic cases~\cite{SuWaWi}, involves the approximate marginals given by BP instead of the exact ones
in our case. The conditions under which the bounding~(\ref{eq:Fbounds}) might be relevant is left aside to future investigations. 

All this also suggests that in presence of dual loops some appropriate correction terms proportional to local dual loop 
counting numbers $C_v^\star$ could be  inserted into the free energy functional in order to compensate for  
the kind of ``overcounting'' of loop corrections which occurs in such cases. This possibility which would potentially lead to a new
family of approximate and hopefully more precise mean field schema is left aside for the moment and will be investigated in the near future.

\section{Kikuchi cycle-based (KIC) inverse inference}\label{sec:KIC}
From the explicit expression of the Kikuchi type approximation~(\ref{eq:gbp}) it should be in principle 
possible to find a set of fields and couplings corresponding to a given input of single and pairwise 
empirical marginals. Assuming first we know the graph structure and have a cycle basis, it remains to 
determine the marginal probabilities $p_c$, $p_\ell$ and $p_v$ associated to each region. We expect the $p_\ell$'s and $p_v$'s
to be given from the data, but the $p_c$'s have to be constructed. This means that the 
global inverse problem get decomposed into $\vert\Cy\vert$ small inverse problems.
In the Ising case, if we denote $h_i^c$ and $J_{\ell}^c$ the local field and coupling associated as in~(\ref{eq:jointIsing})
to the marginal representing cycle $c$, $\hat h_i^\ell$, $\hat J_{\ell}$ associated to $p_\ell$ 
and finally $\hat h_i$ to $p_i$,
then from~(\ref{eq:gbp}) the corresponding Kikuchi cycle based (KIC) approximate inverse Ising solution reads
\begin{align*}
h_i^\text{\tiny (KIC)} &= \kappa_i \hat h_i + \sum_{c\ni i} h_i^c +\sum_{\ell\ni i}(1-d_\ell^\star)\hat h_i^\ell,\\[0.2cm]
J_{\ell}^\text{\tiny (KIC)} &= (1-d_\ell^\star)\hat J_\ell + \sum_{c\ni \ell} J_\ell^c.
\end{align*}
When the graph structure is unknown, one possibility is 
to select a set of candidate links, the one 
carrying the largest amount of mutual empirical information
among all possible edges. Then on the graph defined by those links an algorithm is run 
in order to find the minimal cycle basis, w.r.t. the weights given by minus the 
mutual information. More refined strategies could then be used like the one
based on iterative proportional scaling proposed in~\cite{MaFuHaLa} in the context of 
Gaussian MRF.

In the following we concentrate on how to invert equations~(\ref{eq:loop1},\ref{eq:loop2})
in order to compute $h_i^c$ and $J_\ell^c$ for any cycle $c\in\vert\Cy\vert$.

\subsection{Fixed point method}\label{sec:iter}
Consider a single loop of size $n$.
Assume we are given a set of empirical marginals $\hat p_i(s_i)$ and $\hat p_i(s_i,s_{i+1})$,
for $i=1,\ldots n$ or equivalently a set of magnetization $\hat m_i$ and susceptibilities $\hat \chi_i$.
First note that the change of variable $\{h_i,J_i,i=1,\ldots n\}$ to $\{\bm_i,\bchi_i\}$
is a one to one mapping: on the one hand $h_i$ and $J_i$ can be explicitly written in terms of 
the $\{\bm_i,\bchi_i\}$ (see below); on the other hand, on a loop there is a unique BP fixed point yielding 
factorization~(\ref{eq:cbpfp}), so through relations  
(\ref{eq:bi},\ref{eq:bij}) $\{\bm_i,\bchi_i\}$ are uniquely determined.

Finding a joint-measure of highest likelihood to model the empirical marginals is therefore equivalent
to find a set of parameters $\bm_i$ and $\bchi_i$ defining the joint-measure~(\ref{eq:cbpfp}) 
which satisfy $\chi_i=\hat \chi_i$ and $m_i=\hat m_i$ in equations~(\ref{eq:loop1},\ref{eq:loop2}). 
The problem is therefore to find the unique value of $Q$ for which all the relations are satisfied. 
Note also that these relation could be as well obtained by writing down the gradient of the log likelihood, 
which in the $(h,J)$ variables is a convex function. Hence these equations must anyway have a unique 
valid solution. The reason for not working in these $(h,J)$ variables is that 
the LL is not given explicitly  in these variables but in the
$\bm$ and $\bchi$ variables (see below).
By rewriting equations~(\ref{eq:loop1},\ref{eq:loop2}) in term of the spin-spin correlation
\begin{equation}\label{def:Theta}
\Yi_i \egaldef \frac{\chi_i}{\sqrt{(1-m_i^2)(1-m_{i+1}^2)}},
\end{equation}
letting $Q$ simply read
\begin{equation}\label{eq:QY}
Q = \prod_{i=1}^n \bYi_i,
\end{equation}
we arrive at the following fixed-point equation: 
\begin{prop}\label{prop:schema}
The solution $(\vec\bm,\vec\bchi)$ satisfying  equations~(\ref{eq:loop1},\ref{eq:loop2})
for a given set $\{m_i=\hm_i\egaldef\tanh(\hat h_i),i=1,\ldots n\}$ and 
$\{\chi_i=\hchi_i,i=1,\ldots n\}$ of empirical magnetization and susceptibilities 
is determined by the $n$-dimensional vector $\vec\bYi$ obeying 
\[
\vec\bYi  = \vec f(\vec\bYi),
\]
with
\begin{equation}\label{eq:iterf}
f_i(\vec\bYi) \egaldef A_i(Q)\hat\Yi_i - \frac{Q}{\bYi_i},
\end{equation}
where 
\begin{equation}
A_i(Q) \egaldef  
\frac{(1+Q)(1-Q)^2\hat\Yi_i -4Q(1+Q)\sinh(\hat h_i)\sinh(\hat h_{i+1})}
{\sqrt{(1-2Q\cosh(\hat h_i)+Q^2)(1-2Q\cosh(\hat h_{i+1})+Q^2)}},
\end{equation}
\end{prop}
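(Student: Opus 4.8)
The plan is to push the change of variables $(\bm_i,\bchi_i)\mapsto\bYi_i$ through the loop relations~(\ref{eq:loop1}) and~(\ref{eq:loop2}), so that imposing $m_i=\hm_i$ and $\chi_i=\hchi_i$ becomes exactly the announced fixed-point system. As recalled above, on a single loop the correspondence $\{h_i,J_i\}\leftrightarrow\{\bm_i,\bchi_i\}$ is a bijection (via~(\ref{eq:cbpfp}),(\ref{eq:bi}),(\ref{eq:bij}) and the uniqueness of the BP fixed point on a ring~\cite{Weiss}), and the log-likelihood is convex in $(h,J)$; hence it suffices to produce the unique $\{\bm_i,\bchi_i\}$ for which~(\ref{eq:loop1},\ref{eq:loop2}) return the empirical marginals, and then to restate this in terms of $\vec\bYi$.

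First I would invert the magnetization relation~(\ref{eq:loop1}): setting $m_i=\hm_i$ gives $\bm_i=\frac{1+Q}{1-Q}\,\hm_i$ (the degenerate case $Q=1$, which forces all $\hm_i=0$, being dispatched separately). Since $\hm_i=\tanh(\hat h_i)$, one has $1-\hm_i^2=1/\cosh^2\hat h_i$, and a short computation turns $1-\bm_i^2$ into $(1-\hm_i^2)$ times a quadratic in $Q$ divided by $(1-Q)^2$; this yields closed forms, in terms of $Q$ and the $\hat h$'s, for the normalization $w_i\egaldef\sqrt{(1-\bm_i^2)(1-\bm_{i+1}^2)}$ and for the product $\bm_i\bm_{i+1}=\frac{(1+Q)^2}{(1-Q)^2}\hm_i\hm_{i+1}$, which are precisely the ingredients entering~(\ref{eq:loop2}).

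Next, by the definition~(\ref{def:Theta}) of $\bYi_i$ applied to the BP beliefs one has $\bchi_i=\bYi_i\,w_i$, hence $(1-\bm_i^2)(1-\bm_{i+1}^2)/\bchi_i=w_i/\bYi_i$. Substituting this, $\bchi_i=\bYi_i w_i$, and the previous-step expressions of $w_i$ and $\bm_i\bm_{i+1}$ into~(\ref{eq:loop2}) with $\chi_i=\hchi_i$ leaves a single scalar equation linking $\bYi_i$, the global parameter $Q=\prod_j\bYi_j$, and the data at the two sites $i,i+1$. One then isolates $\bYi_i$: group the term linear in $\bYi_i$ together with the $Q/\bYi_i$ term, divide by the coefficient of $\bYi_i$, and re-express the remaining constant using $\hat\Yi_i=\hchi_i/\sqrt{(1-\hm_i^2)(1-\hm_{i+1}^2)}$ together with the elementary identities $\sinh\hat h_i=\hm_i/\sqrt{1-\hm_i^2}$ and $\cosh\hat h_i=1/\sqrt{1-\hm_i^2}$; after clearing the square roots this is exactly $\bYi_i=f_i(\vec\bYi)$ with $A_i(Q)$ as stated. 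For the converse, given any solution $\vec\bYi$ one sets $Q\egaldef\prod_i\bYi_i$, reconstructs $\bm_i=\frac{1+Q}{1-Q}\hm_i$ and $\bchi_i=\bYi_i w_i$, verifies that this defines an admissible BP fixed point (all beliefs in $[0,1]$, which confines $Q\in(-1,1]$), and observes that by construction it satisfies~(\ref{eq:loop1},\ref{eq:loop2}) with the prescribed $\hm_i,\hchi_i$; uniqueness of $\vec\bYi$ is inherited from that of the maximum-likelihood $(h,J)$.

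The only delicate part is the bookkeeping in the two middle steps: checking that $1-\bm_i^2$ and $w_i$ collapse into the precise combination that produces the quadratic factors $1-2Q\cosh\hat h_i+Q^2$ under the square root of $A_i(Q)$, and that the numerator tidies up to $(1+Q)(1-Q)^2\hat\Yi_i-4Q(1+Q)\sinh\hat h_i\sinh\hat h_{i+1}$ after multiplying out. No idea beyond the single-loop corrections of the preceding propositions is needed — the computation is essentially their inversion — which is why the argument follows the same lines.
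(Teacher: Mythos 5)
Your proposal is correct and follows exactly the route of the paper's own (one-line) proof: invert~(\ref{eq:loop1}) to write $\bm_i=\frac{1+Q}{1-Q}\hm_i$, change variables $\bchi_i\to\bYi_i$ via~(\ref{def:Theta}), and substitute into~(\ref{eq:loop2}) to isolate $\bYi_i$ and read off $A_i(Q)$. You simply spell out the algebraic bookkeeping (the form of $1-\bm_i^2$, the product $\bm_i\bm_{i+1}$, and the hyperbolic identities) that the paper leaves implicit, plus the converse direction, so no further comment is needed.
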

\begin{proof}
Expressing all the magnetization $\bm_i$ in equation~(\ref{eq:loop2}), in terms of $Q$ and $\tanh(\hat h_i)$ with help of~(\ref{eq:loop1}),
after performing the change of variable $\bchi_i\longrightarrow\bYi_i$ yields the desired result.
\end{proof}
Let us specify the domain $\mathbb D\subset[-1,1]^n$ of validity for this iterations schema.
For arbitrary magnetizations and susceptibility there are some basic constraints.
The first one is that $\bm_i\in[-1,1]$, for all $i\in\{1,\ldots,n\}$ which entails 
\[
Q \le Q_{max} \egaldef \max_{i}\frac{1-\hm_i}{1+\hm_i}.
\]
The second set of constraints is 
that probabilities $b(s_i,s_{i+1})$ are in $[0,1]$:
\begin{equation}
\forall (s_i,s_{i+1})\in\{-1,1\}^2,\ 
0 \le (1+\bm_is_i)(1+\bm_{i+1}s_{i+1})+\bchi_is_is_{i+1} \le 4.
\end{equation}
We may rewrite these constraints in a more convenient form.
We denote by $\bh_i$ the local fields  
corresponding to $\bm_i = \tanh(\bh_i)$. In these notations
the constraints now read:
\[
0 \le e^{\bh_is_i+\bh_{i+1}s_{i+1}} + \bYi_is_is_{i+1} \le 4\cosh(\bh_i)\cosh(\bh_{i+1}).
\]
Considering all possible cases for $(s_i,s_j)$ we end up with the following somewhat simpler constraints:
\begin{equation}\label{ineq:Y}
-e^{-|\bh_i+\bh_{i+1}|} \le \bYi_i \le e^{-|\bh_i-\bh_{i+1}|},
\end{equation}
which combined with $Q\in[-1,Q_{max}]$ entirely defines the domain $\mathbb D$ and 
which prove useful in practice to restrict efficiently the search for a fixed point in a valid domain. 

\paragraph{Stability analysis:}
In order to remain inside ${\mathbb D}$ the iterate schema 
is defined as follows:
\begin{align}
g:&{\mathbb D} \longrightarrow {\mathbb D}\\[0.2cm]
&\vec X \longrightarrow \vec Y = 
\begin{cases} 
\DD \vec f\bigl(\vec X\bigr),\ if\ f\bigl(\vec X\bigr)\in{\mathbb D},\\[0.2cm]
\DD U({\mathbb D}),\ if\ f\bigl(\vec X\bigr)\notin{\mathbb D}.
\end{cases}
\end{align}
where $f$ coincide with~(\ref{eq:iterf}) for any $\bchi$ such the image is in the 
domain $\mathbb D$ and is otherwise replaced by a random 
function $U:{\mathbb D} \longrightarrow {\mathbb D}$. This one 
consists first to draw $Q$ uniformly between $]-1,Q_{max}]$,
and then draw $\bYi_i$ for each $i=1\ldots n$, uniformly between the bounds
given in~(\ref{ineq:Y}). Finally an overall scaling is applied to each $\bYi_i$ 
if the product exceeds $Q_{max}$. Defines as it is $g$ is an iterate on a compact 
domain with no other guaranty than there exists one unique fixed point solution. 
Let us examine the conditions under which this solution corresponds to a stable fixed point.  
The Jacobian of this iterative map, when it coincides with $f$
reads
\[
J_{ij}\egaldef\frac{\partial f_i}{\partial\Yi_j} = \frac{Q}{\Yi_j}\bigl(A_i'(Q)-(1-\delta_{ij})\frac{1}{\Yi_i}\bigr).
\]
Denoting $\Yi_{min}^{1,2}$ the two lowest absolute values of $\Yi_i$ and 
\[
B(Q) \egaldef \max_i \vert A_i'(Q)\Yi_i\vert,
\]
we get the following sufficient condition of local convergence:
\begin{prop}\label{prop:stabil}
The fixed point is stable in general if
\begin{equation}\label{ineq:Qconv}
\vert Q\vert < \frac{\Yi_{min}^{(1)}\Yi_{min}^{(2)}}{n-1+B(Q)},
\end{equation}
and in particular if
\begin{equation}\label{ineq:Qconv0}
\vert Q\vert < \frac{\Yi_{min}^{(1)}\Yi_{min}^{(2)}}{n}.
\end{equation}
in absence of magnetization.
\end{prop}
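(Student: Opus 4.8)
The plan is to read the recursion (\ref{eq:iterf}) as a discrete dynamical system and apply the classical linearization (Ostrowski) criterion. At a fixed point $\vec\Yi^\star$ lying in the interior of $\mathbb D$, the map $g$ coincides with $\vec f$ on a neighbourhood of $\vec\Yi^\star$ (because $\vec f(\vec\Yi^\star)=\vec\Yi^\star\in\mathbb D$ and $\vec f$ is continuous), so the fixed point is locally asymptotically stable as soon as the spectral radius $\rho(J)$ of the Jacobian $J$ at $\vec\Yi^\star$ is strictly below $1$. The whole task thus reduces to producing a submultiplicative matrix norm in which $\|J\|<1$ whenever (\ref{ineq:Qconv}) holds, since then $\rho(J)\le\|J\|<1$.

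The norm I would use is the $\ell^1$-induced one, $\|J\|_1=\max_j\sum_i|J_{ij}|$, because in $J_{ij}=\tfrac{Q}{\Yi_j}\bigl(A_i'(Q)-(1-\delta_{ij})\tfrac1{\Yi_i}\bigr)$ every entry of the $j$-th column carries the common prefactor $Q/\Yi_j$. After pulling this prefactor out and applying the triangle inequality to $|A_i'(Q)-1/\Yi_i|$, bounding $\|J\|_1$ comes down to three elementary inputs. (i) The definition $B(Q)=\max_i|A_i'(Q)\Yi_i|$ lets one replace every $|A_i'(Q)|$ by $B(Q)/|\Yi_i|$; this is what manufactures the $B(Q)$ in the numerator of (\ref{ineq:Qconv}). (ii) Each $|\Yi_i|$ is at least $\Yi_{min}^{(1)}$, and all but one are at least $\Yi_{min}^{(2)}$. (iii) Most importantly, $|Q|=\prod_k|\Yi_k|$ while, by the admissibility constraint (\ref{ineq:Y}), $|\Yi_k|\le1$ for every $k$; consequently one may ``spend'' the factor $|Q|$ against the inverse powers of $\Yi_i$ so that only the two smallest of them survive in the denominator, e.g. $|Q|/(|\Yi_i||\Yi_j|)=\prod_{k\ne i,j}|\Yi_k|\le |Q|/(\Yi_{min}^{(1)}\Yi_{min}^{(2)})$, and a matching estimate controls the diagonal term once its $A_i'(Q)$ has been absorbed into $B(Q)$. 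Assembling the $n-1$ off-diagonal contributions with the single diagonal one then yields $\|J\|_1\le |Q|\,\bigl(n-1+B(Q)\bigr)/\bigl(\Yi_{min}^{(1)}\Yi_{min}^{(2)}\bigr)$, i.e. $<1$ under (\ref{ineq:Qconv}), which is the general statement.

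The delicate point — and the part I expect to require real care rather than routine algebra — is precisely step (iii): the multiplicative structure of $Q=\prod_k\Yi_k$ has to be exploited so that exactly two small factors $|\Yi_k|$ remain in the denominator and no surviving factor exceeds $1$, while the self-interaction (diagonal) term and the $n-1$ off-diagonal terms are apportioned so as to contribute the $B(Q)$ and the $n-1$ respectively, and not more. Everything else is bookkeeping with the explicit formula for $A_i(Q)$.

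For the claim in absence of magnetization I would simply substitute $\hat h_i=0$ for all $i$ into the formula for $A_i(Q)$: then $\sinh(\hat h_i)=0$ and $\cosh(\hat h_i)=1$, the square root in the denominator reduces to $(1-Q)^2$, and $A_i(Q)$ collapses to $(1+Q)\hat\Yi_i$, so that $A_i'(Q)\Yi_i=\hat\Yi_i\Yi_i$. Since $\Yi_i$ and $\hat\Yi_i$ are normalized pair correlations, both have modulus at most $1$, whence $B(Q)=\max_i|A_i'(Q)\Yi_i|\le1$ and $n-1+B(Q)\le n$. Therefore (\ref{ineq:Qconv0}) is a strictly stronger requirement than (\ref{ineq:Qconv}) and in particular already guarantees $\rho(J)<1$.
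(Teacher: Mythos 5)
Your proposal follows essentially the same route as the paper's proof: both bound the spectral radius of the Jacobian by a Gershgorin-type induced-norm estimate (you take maximal column sums, the paper takes maximal row sums by evaluating the eigenvalue equation at the component of largest modulus of an eigenvector), then use the triangle inequality, the definition of $B(Q)$ and the two smallest correlations $\Yi_{min}^{(1)},\Yi_{min}^{(2)}$, and finally specialize $A_i(Q)$ to $(1+Q)\hat\Yi_i$ when all $\hat h_i=0$ so that $B(Q)\le 1$. The one step you flag as delicate --- apportioning the $A_i'$ contributions so that the numerator comes out as $n-1+B(Q)$ rather than the $nB(Q)+n-1$ that a blind triangle inequality produces --- is treated no more carefully in the paper, which simply drops the $A_i'(Q)$ term from the off-diagonal entries, so your sketch sits at the same level of rigor as the published argument.
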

\begin{proof}
See Appendix~\ref{app:stabil}
\end{proof}
When some of the magnetizations $\hat m_i$ are non zero, the coefficient $B(Q)$ can become arbitrarily 
large when $Q$ approaches $Q_{max}$ so clearly there exists a value of $|Q|$ above which the 
condition~\ref{ineq:Qconv} will be violated. For small $Q$ we have
\begin{equation}
B(0) = \max_i \Big\vert\hat\Yi_i-4\sinh(\hat h_i)\sinh(\hat h_{i+1})
+\cosh(\hat h_i)+\cosh(\hat h_{i+1})\Big\vert, 
\end{equation}
which as well diverges when one of the magnetization $\hm_i$ approaches $\pm 1$, which means that 
convergence problems are likely to occur in this domain. Instead, for small magnetizations 
$B(Q)$ can get smaller to $1$,
\[
\lim_{\max_i\hm_i\to 0} B(Q) = \max_i \hat\Yi_i \le 1.
\] 
The inequality~(\ref{ineq:Qconv0}) becomes relevant in this regime and the iterative schema can 
converge for small $Q$, in particular if the largest correlation $\Yi$ is no greater
than $n^{-1/(n-2)}$ which is close to $1$ for $n\gg 1$.

\subsection{Line search optimization}\label{sec:lso}
The preceding conditions are not always met to 
guaranty the convergence of the fixed
point method. Therefore we develop an alternative method which directly maximizes 
the log likelihood, this latter being an explicit function $LL(\vec\Yi)$ of the  $\Yi_i$'s,
\begin{equation}\label{eq:LLQ}
LL(\vec\Yi) \egaldef -\log\bigl(1+Q(\vec\Yi)\bigr) + \sum_i \Bigl(w_i(\vec\Yi)
+h_i(\vec\Yi)\hat m_i+J_i(\vec\Yi)(\hat\chi_i+\hat m_i\hat m_{i+1})\Bigr)
\end{equation}
By convention we have 
\[
LL(\vec\Yi) = -\infty,\ \forall\ \vec\Yi\notin{\mathbb D}.
\]
The corresponding Ising fields and couplings of the cycle are 
given by 
\begin{align*}
w_i &= \frac{1}{4}\log\frac{b_i(-1,-1)b_i(-1,1)b_i(1,-1)b_i(1,1)}
{b_i^2(-1)b_i^2(1)}\\[0.2cm]
h_i &= \frac{1}{2}\log\frac{b_i(-1)}{b_i(1)}
+\frac{1}{4}\sum_{j\in\{i-1,i\}}\log\frac{b_j(1,1)b_j(s_i=1,s_j=-1)}
{b_j(s_i=-1,s_j=1)b_j(-1,-1)}\\[0.2cm]
J_i &= \frac{1}{4}\log\frac{b_i(-1,-1)b_i(1,1)}{b_i(-1,1)b_i(1,-1)},
\end{align*}
in addition to the weighting exponents $w_i$ which shows up. 
All these parameters are given through~(\ref{eq:bi},\ref{eq:bij}) as function of the magnetizations $\bm_i$
and susceptibilities $\bchi_i$ which in turn are fully determined by the $\bYi_i$'s 
through~(\ref{def:Theta}) and (\ref{eq:loop1},\ref{eq:QY}) given $m_i=\hat m_i$.
Let ${\mathbb D}_Q\subset [-1,Q_{max}]$ the domain of possible values for $Q$.
In order to find the optimal point we show the following 
\begin{prop}
There exists two functions 
\begin{align*}
h&: {\mathbb D}_Q\longrightarrow {\mathbb R}\\[0.2cm]
\vec\Yi&: {\mathbb D}_Q \longrightarrow {\mathbb D}
\end{align*}
s.t.
\[
\argmax_{\vec\Yi\in{\mathbb D}} LL(\vec\Yi) = \vec\Yi(Q^\star)
\]
with
\[
Q^\star = \argmax_{Q\in{\mathbb D}_Q} h(Q).
\]
\end{prop}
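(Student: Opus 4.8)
The plan is to reduce the $n$-dimensional maximization of $LL$ over $\mathbb D$ to a one-dimensional search over $Q$, by fibring $\mathbb D$ over the scalar $Q(\vec\bYi)=\prod_i\bYi_i$ and using the first-order conditions to pick, for each value of $Q$, a candidate point $\vec\bYi(Q)$ that is available in closed form from Proposition~\ref{prop:schema}.

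First I would establish that $LL$ has over $\mathbb D$ a unique maximizer $\vec\bYi^\star$, attained in the interior. This follows from the one-to-one correspondence recalled at the start of this section between $\vec\bYi\in\mathbb D$ and the fields and couplings $(h_i,J_i)$ of the cycle model~(\ref{eq:jointIsing}): the map is built so that $m_i=\hm_i$ holds identically, hence $LL$ agrees, up to an additive constant, with the log likelihood of that exponential family restricted to the section $\{m_i=\hm_i\}$. Since this log likelihood is strictly concave in $(h,J)$ and — assuming the data moments $(\hm_i,\hchi_i)$ sit strictly inside the marginal polytope of the chain — its unconstrained maximizer is finite and, by moment matching ($\partial LL/\partial h_i=\hm_i-m_i$), already lies on that section, uniqueness and interiority on $\mathbb D$ follow (all the inequalities~(\ref{ineq:Y}) being strict at $\vec\bYi^\star$).

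Next I would identify $\vec\bYi^\star$ with a distinguished point of one fibre. At the interior maximizer $\nabla_{\vec\bYi}LL=0$; differentiating through $(h,J)$ and using $m_i=\hm_i$ again, only the coupling directions survive, so $\partial LL/\partial\bYi_j=\sum_i(\hchi_i-\chi_i)\,\partial J_i/\partial\bYi_j$, and, the Jacobian $(\partial J_i/\partial\bYi_j)$ being nonsingular, stationarity is equivalent to $\chi_i=\hchi_i$ for all $i$. Plugging $\chi_i=\hchi_i$ into~(\ref{eq:loop2}) and eliminating the magnetizations through~(\ref{eq:loop1}) — precisely the manipulation behind Proposition~\ref{prop:schema} — turns this into the $n$ decoupled quadratics $\bYi_i^2-A_i(Q)\hat\Yi_i\bYi_i+Q=0$ supplemented by $Q=\prod_i\bYi_i$. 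For a \emph{fixed} parameter $Q$, picking in each quadratic the root compatible with the bounds~(\ref{ineq:Y}) defines a map $Q\mapsto\vec\bYi(Q)$; I would let $\mathbb D_Q\subset\,]-1,Q_{max}]$ be the set of $Q$ for which this root selection succeeds and lands in $\mathbb D$, and set $h(Q)\egaldef LL(\vec\bYi(Q))$, which is explicit via~(\ref{eq:LLQ}). By construction $\vec\bYi^\star=\vec\bYi(Q^\star)$ with $Q^\star\egaldef\prod_i\bYi^\star_i\in\mathbb D_Q$.

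Finally the reduction itself is short: for every $Q\in\mathbb D_Q$ we have $\vec\bYi(Q)\in\mathbb D$, hence $h(Q)\le\max_{\vec\bYi\in\mathbb D}LL=LL(\vec\bYi^\star)=h(Q^\star)$, so $Q^\star$ maximizes $h$; and any maximizer $Q'$ of $h$ satisfies $LL(\vec\bYi(Q'))=\max_{\mathbb D}LL$, which by uniqueness forces $\vec\bYi(Q')=\vec\bYi^\star$, so $\argmax_{\vec\bYi\in\mathbb D}LL=\vec\bYi(Q^\star)$. I expect the main obstacle to be the first step — establishing existence, interiority and uniqueness of $\vec\bYi^\star$, i.e. that the cycle's Ising MLE is well posed on the reparametrized domain $\mathbb D$ — together with checking that the root selection in the quadratics can be made consistently, so that $\mathbb D_Q$ is nonempty and contains $Q^\star$; the equivalence between $\nabla_{\vec\bYi}LL=0$ and the fixed-point system is essentially already contained in the proof of Proposition~\ref{prop:schema}, so that part is mostly assembly.
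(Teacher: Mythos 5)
Your proposal follows essentially the same route as the paper's proof: derive the stationarity conditions of $LL$, reduce them through the change of variables of Proposition~\ref{prop:schema} to the decoupled quadratics $\bYi_i^2-A_i(Q)\bYi_i+Q=0$, let ${\mathbb D}_Q$ be the set of $Q$ for which admissible roots exist, and set $h(Q)=LL\bigl(\vec\Yi(Q)\bigr)$. Two remarks. First, you invest more effort in establishing existence, interiority and uniqueness of the maximizer via strict concavity of the exponential-family likelihood in $(h,J)$; the paper takes this as given (it is asserted just before Proposition~\ref{prop:schema} that the LL is convex in the $(h,J)$ variables and hence that the stationarity equations have a unique valid solution), so your step is a more explicit justification of the same fact rather than a different argument. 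Second --- and this is the one place where your construction is genuinely under-determined --- you define $\vec\bYi(Q)$ by ``picking in each quadratic the root compatible with the bounds~(\ref{ineq:Y})'', but the paper explicitly notes that in general \emph{both} roots can satisfy those bounds, so this rule does not define a map. The paper resolves the ambiguity by enumeration: it sets $\vec\Yi(Q)=\argmax_{\sigma\in\{-1,1\}^n}LL\bigl(\vec\Yi'(Q,\sigma)\bigr)$, where $\Yi_i'(Q,\sigma_i)$ is the root indexed by $\sigma_i$ rescaled by $Q/\prod_j\bYi_j(Q,\sigma_j)$ so that the candidate vector actually satisfies~(\ref{eq:QY}). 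With this definition the true optimizer is among the candidates at $Q=Q^\star$ (the rescaling being the identity for the correct sign vector), and your final sandwich argument then goes through verbatim. You do flag the root selection as ``the main obstacle'', which is fair, but the combinatorial selection plus renormalization is the actual content of the paper's construction --- and the source of the $2^n$ enumeration cost discussed right after the proposition --- so it cannot be left as a side remark without leaving the functions $h$ and $\vec\Yi$ undefined.
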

\begin{proof}
To prove this we explicitly construct these functions, which in turn will
be used to run a line search algorithm.

First note that taking the gradient of $LL(\vec\Yi)$
w.r.t. the $\bm_i$'s and $\bchi_i$'s in order  to find the 
stationary points leads to equations~(\ref{eq:loop1}) and~(\ref{eq:loop2}).
After doing the change of variables and manipulations given in Proposition~\ref{prop:schema},
the set of equations to be solved reads:
\[
\bYi_i^2 -A(Q)\ \bYi_i+Q = 0,\qquad\text{for}\ i=1,\ldots n,
\]
where $Q$ depends implicitly on the solutions.
A first consequence is that, given $Q$, there is the constraint that the quadratic equation have solutions, 
i.e. that
\[
A_i(Q)^2-4Q\ge 0,\qquad\forall\ i=1,\ldots n,
\]
which depends only on the empirical values $\hat m_i$ and $\hat \chi_i$. This further constraints the 
domain ${\mathbb D}_{Q}\subset[-1,Q_{max}]$ of possible values of $Q$.
If this condition is fulfilled, for each $i=1,\ldots n$, 
there are two solutions,
\[
\bYi_i(Q,\sigma_i) = \frac{A(Q)+\sigma_i\sqrt{A(Q)^2-4Q}}{2},
\]
where $\sigma_i\in\{-1,1\}$ is introduced by convenience. Unfortunately, in general 
both solutions can be valid, as long as they satisfy the constraints~(\ref{ineq:Y}). 
At the fixed point, which is unique, the $\bm_i$'s and $\bYi_i$'s are uniquely given by $Q$,
therefore among the $2^n$ possible choices, the correct one will satisfy~(\ref{eq:QY}) and corresponds 
to the lowest likelihood. The function $h$ can now be defined as follows:
\begin{align*}
h: {\mathbb D}_Q &\longrightarrow {\mathbb R}\\[0.2cm]
 Q &\longrightarrow LL\bigl(\vec\Yi(Q)\bigr)
\end{align*}
where $\vec\Yi(Q)$ in turn is given as
\begin{equation}\label{eq:YQopt}
\vec\Yi(Q) = \argmax_{\sigma} LL\bigl(\vec\Yi'(Q,\sigma)\bigr)
\end{equation}
with 
\[
\Yi_i'(Q,\sigma_i) = \frac{Q}{\prod_{j=1}^n \bYi_j(Q,\sigma_j)}\bYi_i(Q,\sigma_i).
\]
This last normalization is there to ensure that $\vec\Yi(Q)$ effectively corresponds to $Q$.
\end{proof}
\subsection{Combined method and MRF inference}
The two methods can be combined by selecting the solution
with highest LL~(\ref{eq:LLQ}), after running each one with a fixed computational budget. 
\begin{figure}[ht]
\centering
\resizebox{0.6\columnwidth}{!}{\input{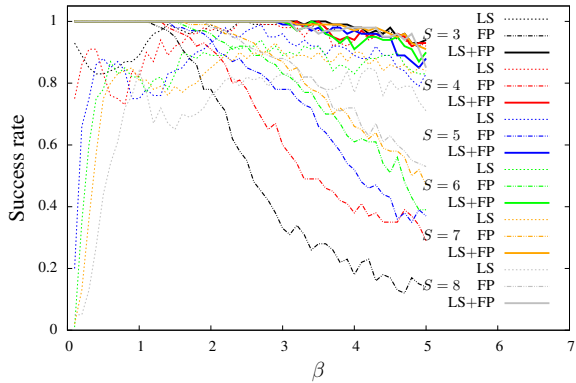}}
\caption{Success rates for the inverse inference on a single cycle with different sizes (color) for the fixed point (FP),
the line search (LS) and the combined methods (LS$+$FP). 
}\label{fig:1loop}
\end{figure}
The line search method
has a combinatorial step present in~(\ref{eq:YQopt}), which can be solved by simple enumeration 
for small loops, but may become problematic for large ones, $n\gg 1$. However, for larger cycles, already 
typically for $n>5$, $Q$ is usually very small  and the iterative schema of Section~\ref{sec:iter} is converging.
Even though some specific optimization might well be possibly developed to solve~(\ref{eq:YQopt}),
we leave this question aside, as being non critical as confirmed by the experimental results shown 
on Figure~\ref{fig:1loop}. 

To infer an MRF, a set of candidate cycles is 
either given  either pre-processed from the data e.g. using mutual information scores.
As already mentioned, in such case we look for a minimal cycle basis, which in practice, 
can be approximately obtained at low computational cost as in experiments of the next Section, 
by a simple stochastic heuristic of loop mixing. 
\begin{figure}[ht]
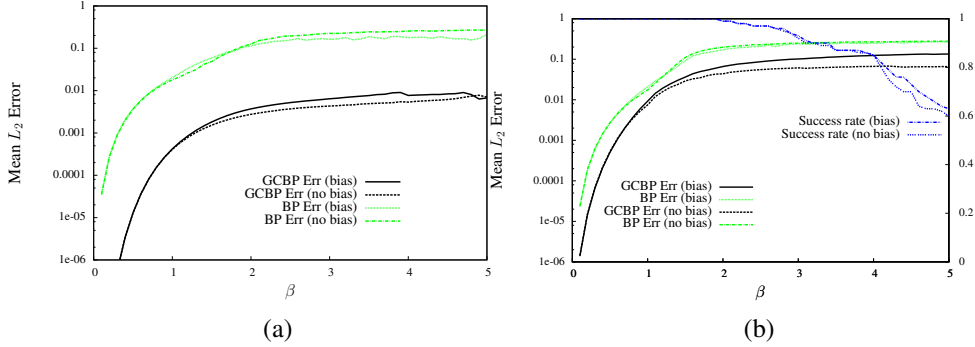

\centering
\resizebox{0.47\columnwidth}{!}{\input{grid_I.tex}}
\hspace{0.5cm}
\resizebox{0.45\textwidth}{!}{\input{bip_dirK4.tex}}\\
(a)\hspace{6cm}(b)
\caption{Mean error for the direct inference of $2$-D random Ising model 
comparing GCBP with BP as a function of $\beta$, on a $5\times 5$ square grid (left) and on random $20+20$ bipartite graphs of mean connectivity $4$ (right)
with or without local fields of amplitude $0.2\beta$,  averaged over $100$ instances.}\label{fig:gridI}
\end{figure}
For general pairwise MRF, with non-binary variables
no specific method is proposed at the cycle level, but at least a gradient descent could 
be used to solve each cycle independently. If necessary, a posterior selection procedure, 
based on the generated solution, could be used to refine the cycle basis, with various possible heuristics,
which are still under investigation. 
Concerning the overall computational cost needed to generate an approximate MRF 
solution, assuming a ``low-cost" method for fixing the cycle basis, 
it is linear in the number of candidate cycles i.e. in the number of potential links. Therefore 
the method can in principle cope with large scale problems when a sparse graph is to be expected.
\begin{figure}[ht]
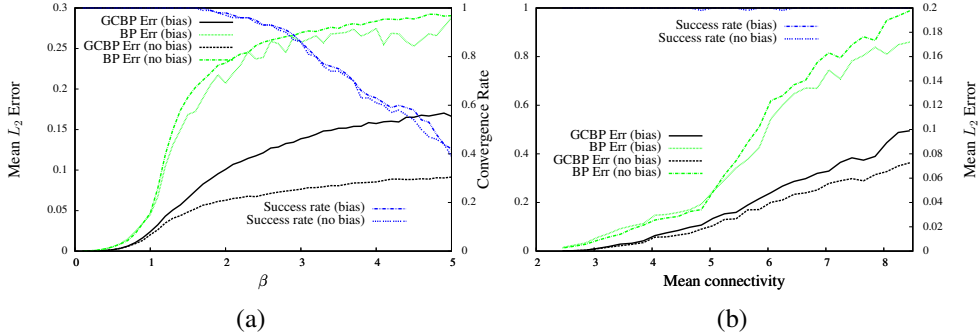

\centering
\resizebox{0.45\textwidth}{!}{\input{bip_dirK5.tex}}
\hspace{0.5cm}
\resizebox{0.45\textwidth}{!}{\input{bip_dirJ1.tex}}\\
(a)\hspace{6cm}(b)
\caption{Success rates and mean error for the direct  inference problem, comparing GCBP with BP 
on $20+20$ random bipartite graphs of mean connectivity 5 with varying $\beta$ (left) or with fixed $\beta=1$ and increasing 
the mean connectivity (right), in presence or not of random local fields of max amplitude $0.2\beta$, averaged over $100$ instances.}\label{fig:bip_dir}
\end{figure}

\section{Experiments}\label{sec:exp}
We have run various experiments to see how this approach to direct and inverse inference works 
in practice. 
\subsection{Direct inference}
Figure~\ref{fig:gridI} deals with direct inference, GCBP is run on  $5\times 5$ grids so that the RMSE  on the
beliefs (single and pairwise) can be computed by exact enumeration. 
Couplings $J_{ij}$ and local fields $h_i$ are  i.i.d sampled uniformly respectively in the range $[-\beta,\beta]$
and $[-0.2\beta,0.2\beta]$ when local fields are present. $\beta$ is varied on the range $[0,5]$, so that weak and strong coupling are tested. $100$
instances are generated for each point. With a damping factor up to $.5$ inserted in the $c$-node to $\ell$-node messages needed at low temperature, 
GCBP always converge on these small grids instances to a fixed point corresponding to a paramagnetic state. 
At larger scale Figure~\ref{fig:ctime_grid}, thanks again to a damping factor up to $.6$, 
the algorithm is also always converging on the considered range of temperature and sizes but 
two dynamical regimes are observed. At high temperature, for $\beta \le 1.5$ the computational time grows like 
$N^\alpha$ with a slight departure from linear complexity as $\beta$ increases, $\alpha=1.05$ for $\beta=0.5$  
and $\alpha=1.15$ at $\beta=1.5$. In that case all the fixed points correspond to 
paramagnetic  states. Instead at $\beta > 1.5$ and no external fields,  
the occurrence of non-paramagnetic  states is observed at sufficiently large scale, $N\ge 10^5$ for $\beta > 1.5$ and $N\ge 10^4$ for $\beta=2$, 
as observed also in the $\pm J$ $2$-D EA model\footnote{Thresholds are comparable after dividing our $\beta$ by $\sqrt{3}$
to have random models with identical variance of the couplings.} in~\cite{DoLaMuRiTo}.  
\begin{figure}[ht]
\centering
\resizebox{0.7\columnwidth}{!}{\input{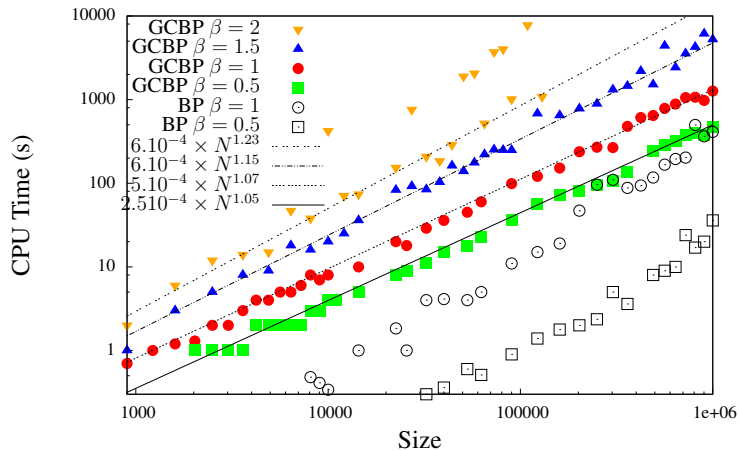}}
\caption{Convergence behaviour of GCBP and BP regarding computational time on $2$-D EA models of large sizes. 
Cases corresponding to $\beta=0.5,1$ have local random fields
in $[-0.1\beta,0.1\beta]$ while other cases are without external fields.}\label{fig:ctime_grid}
\end{figure}
This is an artifact  of the Kikuchi approximation since the $2$-D EA model is thought to be exempt from a
spin-glass phase~\cite{JoLuMaMa}. Convergence is still observed in this regime, but huge fluctuations in computational time occur, 
depending on whether GCBP converges towards a paramagnetic or to a spin-glass fixed point. On the example shown, outliers points 
w.r.t. the fitted scaling actually correspond to spin-glass fixed points, while all other points are paramagnetic.
This is clearly related to the fact that a long range order 
has to be found by a GCBP fixed point when converging to a spin-glass state which is not the case for a paramagnetic one. Indeed 
in the paramagnetic situation, fixed point messages depend from each others within distances on the grid of the order of the spatial
characteristic scale for the correlations which increases with $\beta$. 
When compared to BP, the computational time for GCBP is larger by a factor of $5$ to $25$, but in addition to be less precise, 
BP is by far less robust and actually stops converging around $\beta\gtrsim 1$.
\begin{figure}[ht]
\centering
\resizebox{0.7\columnwidth}{!}{\input{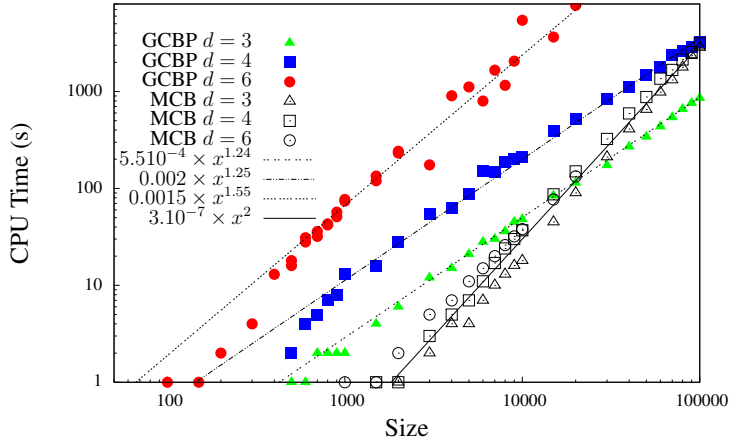}}
\caption{Computational times of GCBP and the MCB search algorithm on random bipartite graphs at $\beta=1$ for different mean connectivity $d$.}\label{fig:ctime_bip}
\end{figure}
The same experiments are performed first on small random sparse $20+20$ regular bipartite graphs, for which exact beliefs can as well be 
computed by complete enumeration. In these cases the cycle basis are not given in advance and have to be determined. 
On Figures~\ref{fig:gridI}.b and \ref{fig:bip_dir}.a   we again vary the temperature
for a fixed mean connectivity $d=4$ and $d=5$, while on Figure~\ref{fig:bip_dir}.b the inverse temperature is kept fixed at $\beta=1$ and the 
mean connectivity  is varied up to $d=9$. 
As seen on Figure~\ref{fig:bip_dir}.b. convergence problems are absent below 
$\beta\lesssim 2$ but occur at small temperatures with increasing frequency above this threshold signaling the presence of a 
spin glass phase.
In addition, up to $d=9$ we observe a significant gain factor in the error made by GCBP w.r.t to ordinary BP.
\begin{figure}[ht]
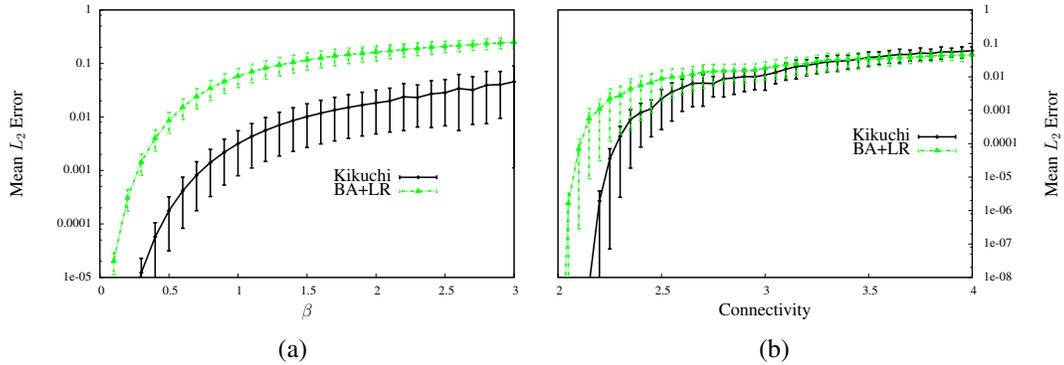

	\centering
        \resizebox{0.495\columnwidth}{!}{\input{grid_IIa.tex}}
        \resizebox{0.495\columnwidth}{!}{\input{bip_II.tex}}\\
        (a)\hspace{6cm}(b)
\caption{Comparison of KIC with BA+LR at infinite sampling 
on a $5\times 5$ square grid when $\beta$ is varied (left),
on random bipartite graphs 
at $\beta=1$ with biases of amplitude $0.2\beta$ varying the mean connectivity (right).}\label{fig:inf}
\end{figure}

On Figure~\ref{fig:ctime_bip} are shown results of tests that were performed on random sparse bipartite graphs
of size up to $N=10^5$ and mean connectivity up to $d=6$. We obtain as well good convergence properties, with no 
convergence failures, thanks again to a damping factor of $.7$ for $d=3$ to $.9$ for $d=6$. Concerning
computational time we observe a scaling in $N^\alpha$ which deviates from the linear one as expected as the graph becomes denser for a fixed temperature,
$\alpha$ ranging from $1.2$ at $d=3$ to $1.55$ at $d=6$.
Heterogeneous graphs with larger mean connectivity have a tendency to contains more highly connected nodes 
for which $\Cy_v^\star\gg 1$. We suspect these nodes to be mainly responsible for a slowing down of convergence. 
On the same figure we also show the computational time needed by our approximate pre-processing 
cycle basis stochastic optimization. The scaling is quadratic when the heuristic detailed in Section~\ref{sec:cbasis}
is used in its complete version, but the very small multiplicative constant allows us to go for relatively large size,
before becoming a limiting factor for GCBP around $N\simeq 10^4$ for $d=3$ and $N\simeq 10^5$ for $d=4$. Since collecting 
most important small loops has a linear complexity, the way to overcome this issue  at large scale is 
then to limit ourselves to an incomplete set of independent cycles.

\subsection{Inverse inference}
For the inverse Ising problem, we first test the single loop algorithm explained in Section.~\ref{sec:iter}
and Section.~\ref{sec:lso} and the results are shown on Figure~\ref{fig:1loop}. For this we generate loops of increasing 
sizes $S\in\{3,\ldots 8\}$. 
Couplings and biases are sampled as before, with an inverse temperature parameter $\beta$ varied again in 
the range $[0,5]$. The inference is considered successful for a precision threshold,
arbitrarily  chosen to $10^{-5}\beta$, on the max error
of the couplings and biases. A comparable computational budget of a maximum of $100$ iterations for FP
or estimations for LS is given to both methods. Note however that generally when it converges FP does it 
within $10$ or $20$ iterations.
\begin{figure}[ht]
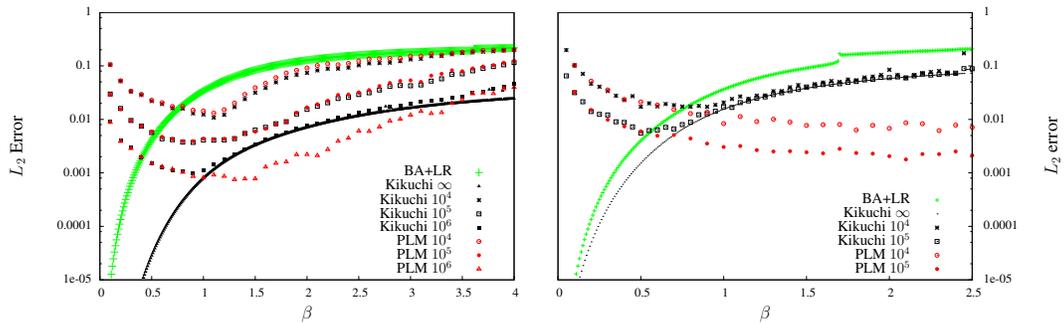

	\centering
        \resizebox{0.495\columnwidth}{!}{\input{grid_113.tex}}
        \resizebox{0.495\columnwidth}{!}{\input{bip_9.tex}}
	\caption{Comparison of KIC with  BA+LR at infinite and with PLM at finite sampling 
on a $5\times 5$ square grid (left) and on a bipartite model with connectivity $3$(right) when $\beta$ is varied.}
 	\label{fig:gridIIb}
\end{figure}
The Fixed point method is always successful for all sizes when $\beta \le 1.2$, but this rate degrades
when $\beta$ is increased albeit less severely with larger loops. In contrary the line search 
method is not sufficiently precise at small $\beta$ but sees its success rate increase with $\beta$ especially for small loops.
Therefore the two methods are very much complementary, and combining them leads to nearly maximal success rates,
at least for $\beta\le 3$.

Our KIC method is then tested and compared with the linear response of 
the Bethe-Peierls  approximation~\cite{NgBe} (BA+LR)
at infinite sampling and with the pseudo-likelihood method (PLM)~\cite{RaWaLa,DeRi} at finite sampling, again on 
small square grid and on small sparse random bipartite models. Couplings and biases are sampled 
as before. Comparison with BA+LR indicates a gain in precision between $1$ to $2$ orders of magnitude for $5\times 5$
grids as seen on Figure~\ref{fig:inf} (left). For bipartite models,  
Figure~\ref{fig:inf} (right) shows a decreasing gain  with increasing
mean connectivity, BA+LR and KIC returning the same error around  $d=3.4$.   
On Figure~\ref{fig:gridIIb} one representative grid and bipartite instances are shown. 
As expected the error increases with $\beta$ but stays reasonably close to the order of a few percents 
in the strong coupling region $\beta>1$, in contrary to BA+LR which is useless in this region.
At finite sampling, by comparing with PLM, we see that the precision is either limited by the 
sampling itself (small $\beta$ or small sampling $Ns \le 10^5$) either by the Kikuchi approximation itself
for $\beta>1$ and $Ns=10^6$ on the grid instance and at $Ns=10^4$ and $\beta>1$ on the bipartite instance.

\section{Conclusion}
Our investigations on GBP has led us to propose a systematic way of dealing with cycle regions and 
a new mean field approach to inverse problems. Our contribution is two-fold: for the direct problem, 
we propose (i) an original specification of the region graph (MFG)
ensuring simple and robust convergence properties 
(ii) the loop message computation using ordinary BP ensuring fast message exchange between regions. 
(i)+(ii) characterize GCBP  as a new region based algorithm generic to pairwise MRF, which we have made specific in the binary case. 
For the inverse Ising problem, we propose a new mean-field approach (KIC) general for pairwise MRF models,  
which is simple and efficient at least for binary models and sparse graphs
but non necessarily of finite tree-width like 2d grids.
In particular the  modular aspect of the method, which consists in a decomposition of the 
problem into small independent inverse problems corresponding to each independent cycle is valid in general, not only for binary MRF.
For incomplete data, since it takes as input single and pairwise marginals, 
it could be a good alternative to PLM which requires complete data.

Still, the scalability of GCBP and KIC relies on the scalability of the cycle basis search algorithm for irregular graphs.
In~\cite{GeWe} it is argued that a good choice of basis ensures  
the algorithm of being tree-robust (TR), namely that GBP converges to an exact 
fixed point when the underlying graph $\G$ is singly connected after eliminating fake links. 
In our experiments we did not follow this prescription, but instead proposed a simpler one, namely 
based on the search of a minimal cycle basis, for which a specific heuristic has been developed with reasonable scalability. 

Concerning possible applications of this work, it is planned to use 
both the direct and inverse approach in combination, in order to test some traffic prediction schema based on the 
Ising model that has been developed in some preceding related work~\cite{MaLaFu}. In addition, 
the systematic treatment of the loops that we propose could presumably be extended in a specific way to the Potts model 
which has been applied in many different contexts like image processing~\cite{Tanaka} for instance.
Yet another perspective of this framework is to be found in the combinatorial optimization context which could help improve
approximate heuristics.

\bibliographystyle{acm}
\bibliography{gcbp}

\newpage
\appendix

\section{Proof of Proposition \ref{prop:dtree}}\label{app:prop1}

If $\G^\star$ is acyclic, we can build a junction tree using each cycle as a clique, 
so the form 3.1 is correct except maybe 
for the specific form chosen for $p_c$. 
The leave nodes of $\G^\star$ correspond 
either to dandling trees either to cycle regions of the primal graph $\G$. From the hypothesis on $\G$ 
these components are connected to the rest of the primal graph $\G$  either via a single node 
either via a link. So summing over all variables contained in each of these region except the contact 
node or link results in a subgraph of $\G$ which dual is still acyclic,  with a modified factor 
corresponding to the contact link or vertex. By induction, 
$\G$ can be reduced until one single arbitrary loop region remains, which still 
corresponds to a sub-graph of $\G$. This results therefore in a marginal probability 
$p_c$ having pairwise form with factor graph corresponding to cycle $c$.

\section{Dual loop-based instabilities}\label{app:instabil}
Let us consider an Ising model on the single dual loop graph of Figure~\ref{fig:dloop}
with uniform external field $h$ and coupling $J$. We give the label $0$ to the 
central node with counting number $\kappa_0 = 1$ and labels $\{1,2,3\}$ to the peripheral ones, these having $\kappa_v=0$.
Links with non-vanishing counting 
numbers ($\kappa_\ell = -1$)  are for $\ell\in\{01,02,03\}$ and 
cycles are labelled $\{012,023,031\}$. 
Using the corresponding minimal factor graph, we attach arbitrarily the only $v$-node indexed by $0$ to $\ell=01$. 
The following  exponential parameterization of the messages is adopted:
\begin{align*}
m_{c\to\ell}(\s_\ell) &= e^{w_{c\to\ell}+h_{c\to\ell}^1s_{\ell_1}+h_{c\to\ell}^2s_{\ell_2}+J_{c\to\ell}s_{\ell_1}s_{\ell_2}}\\[0.2cm]
m_{\ell\to 0}(s_0) &= e^{w_{\ell\to 0}+h_{\ell\to 0}s_0}.
\end{align*}
From the update rules~(\ref{eq:pac1},\ref{eq:pac2})
we get in particular for $(i,j)\in\{(1,2),(2,3),(3,1)\}$
\[
m_{0ij\to 0i}(s_0) \longleftarrow \sum_{s_j}\exp\Bigl(h_{0kj\to 0j}^0s_0+\bigl(h_j+h_{0kj\to 0j}^j\bigr)s_j+\bigl(J_{0j}+J_{0kj\to 0j}\bigr)s_0s_j\Bigr),
\] 
and more specifically
\[
h_{0ij\to 0j}^0 \longleftarrow h_{0kj\to 0j}^0 + \frac{1}{4}\log\frac{A_{++}A_{-+}}{A_{+-}A_{--}}
\]
with 
\[
A_{\sigma_1\sigma_2} \egaldef 
h_{0kj\to 0j}^0+\sigma_1(h_j+h_{0kj\to 0j}^j)+\sigma_2(J_{0j}+J_{0kj\to 0j}).
\]
From this we see that these iterative equations are at least marginally unstable, by the presence  
of an eigenmode of the Jacobian of eigenvalue $1$ corresponding to $h_{0kj\to 0j}^0=cte,\ \forall kj$. One additional dual loop 
centered on $v$-node $0$ would actually render this mode unstable.

\section{Proof of Proposition \ref{prop:TR}}\label{app:prop2}
By definition of the Lagrange multipliers, when a fixed point is obtained, the corresponding set 
of beliefs $\{b_i,b_\ell,b_c\}$ allows one to factorize the joint measure as~(\ref{eq:refP}), 
where for all cycles of the basis, $b_c(\x_c)$ is itself in Bethe form
\[
b_c(\x_c) = \frac{1}{Z_c}\prod_{i=1}^b\frac{b_{ii+1}^c(x_i,x_{i+1})}{b_i^c(x_i)}
\]
where the  $b_i^c$ and $b_{ii+1}^c$ are obtained from $b_c$ by running BP on the cycle and 
are in general different from the $b_i$ and $b_\ell$ computed globally. The relation between the two 
corresponds to the loop correction.
Let us call trivial, an edge $(ij)$ which factor is trivial 
$\psi_{(ij)}(x_i,x_j) = f(x_i)f(x_j)$. Similarly we say that a cycle has a trivial belief if it is related 
to variable and pairwise belief as
\[
b_c(\x_c) = \prod_{i=1}^b\frac{b_{ii+1}(x_i,x_{i+1})}{b_i(x_i)},
\]
i.e. the $b_i$ and $b_i^c$ coincide.
First we remark that a cycle $c$ containing one such trivial edge, not contained in any other cycle, 
has necessarily a trivial belief, because from the factorization~(\ref{eq:refP}) for any edge $\ell$
we have in that case
\begin{align*}
\psi_\ell^{(0)}(\x_\ell) &= f(x_i)g(x_j)b_\ell(x_\ell)\prod_{c\ni\ell} \frac{b_\ell^c(x_\ell)}{b_\ell(x_\ell)},\\[0.2cm]
&= f(x_i)g(x_j) b_\ell^c(x_\ell),
\end{align*}
so the pairwise cycle belief has to be of the form $b_\ell^c(x_\ell) = b_i^c(x_i)b_j^c(x_j)$.
As a result the factorized joint measure actually coincides with the same  CVM approximation form~(\ref{eq:gbp}) on a reduced graph, 
where link $\ell$ has been removed and $c$ is now discarded.
From hypothesis (ii) the set of trivial links contained
in one single cycle is non empty. As a results all these link can be removed and all corresponding
cycles discarded. On the reduced graph, again since all cycles have a trivial belief, there is  a non-empty subset 
of trivial link, that can be removed and so on. The procedure 
stop after eliminating all trivial links until only the underlying dual tree remains. The definition of the counting 
numbers ensures that we then end up with the Bethe form of the joint measure associated to this dual tree.

\section{Proof of Proposition \ref{prop:gloop}}\label{app:gloop}
The proof is based on the following factorization of the joint measure on a cycle with help 
of a belief propagation fixed point:
\[
P(\x) = \frac{1}{Z_\text{BP}}\prod_{i=1}^n\frac{b_{i}(x_i,x_j)}{b_i(x_i)b_{i+1}(x_{i+1})}\prod_{i\in\V}b_i(x_i)
\]
with 
\begin{align*}
\frac{b_{i}(x_i,x_j)}{b_i(x_i)b_{i+1}(x_{i+1})} &= 1 +\frac{b_i(x_i,x_{i+1})-b_i(x_i)b_{i+1}(x_{i+1})}{b_i(x_i)b_{i+1}(x_{i+1})}\\[0.2cm]
&\egaldef 1+\frac{B_{x_ix_{i+1}}^{(i)}}{b_{i+1}(x_{i+1})},
\end{align*}
and then by expanding the factors when taking averages. Let us call bond $ii+1$ the contribution corresponding 
to the factor $\frac{B_{x_ix_{i+1}}^{(i)}}{b_{i+1}(x_{i+1})}$ instead of $1$. The point is that one extremity of a bond 
cannot be left alone in this expansion, if the corresponding variable is summed over, because of  
the following identities:
\[
\sum_{x_i} b_i(x_i)\frac{B_{x_ix_{i+1}}^{(i)}}{b_{i+1}(x_{i+1})} = 
\sum_{x_i} B_{x_{i-1}x_i}^{(i-1)} = 0. 
\]
For example, the partition either all or none of the bound have to be selected, yielding only the two contributions:
\begin{align*}
Z_\text{BP} &= \sum_\x \Bigl(\prod_{i=1}^n b_i(x_i)+\prod_{i=1}^n B_{x_ix_{i+1}}^{(i)}\Bigr),\\
&= 1 + \Tr(U).
\end{align*}
For the single variable marginal, say $p_i(x_i)$, again either none or either all of the bonds have
to be selected, giving 
\begin{align*}
p_i(x_i) &= \frac{1}{Z_\text{BP}} \sum_{\x\backslash x_i} \Bigl(\prod_{j=1}^n b_j(x_j)+\prod_{j=1}^n B_{x_jx_{j+1}}^{(i)}\Bigr)\\[0.2cm]
&= \frac{b_i(x_i)+U_{x_ix_i}^{(i)}}{Z_\text{BP}}.
\end{align*}
For the pairwise marginals $p_i(x_i,x_{i+1})$ two additional contributions emerge corresponding to
selecting only the bond $ii+1$ or to selecting  all the bonds  except this one, yielding the announced expression.

\section{Proof of Proposition \ref{prop:stabil}}\label{app:stabil}
The problem is to bound in absolute value the largest eigenvalue  of the Jacobian. 
Let $\lambda$ an eigenvalue and ${\bf v}$ an eigenvector of $J$
Let 
\[
v = \max_j v_j
\]
and $i$ the corresponding index, s.t. $v_i = v$.
We have 
\begin{align*}
\vert\lambda\vert &= \vert\sum_j J_{ij}\frac{v_j}{v}\vert \\[0.2cm]
&\le \sum_j \vert J_{ij}\vert\\[0.2cm]
&\le  \vert \frac{Q}{\Theta_j}A'_i(Q)\vert +\sum_{j\ne i} \vert\frac{Q}{\Theta_i\Theta_j}\vert\\[0.2cm]
&\le \frac{\vert Q\vert}{\Theta_{min}^{(1)}\Theta_{min}^{(2)}}\bigl(B(Q)+n-1\bigr),
\end{align*}
with the definition of $B(Q)$ and $\Theta_{min}^{(1,2)}$ given in the text.
Imposing $\vert\lambda\vert \le 1$ leads to the conditions given in the proposition.
In particular when magnetization are absent, i.e. when $h_i=0,\forall i$,
we have
\[
A'(Q) = \hat\Theta_i
\]
so 
\[
B(Q) = \max_i\vert\hat\Theta_i\Theta_i\vert \le 1.
\]

\end{document}